\providecommand{\tabularnewline}{\\}
\providecommand{\algorithmname}{Algorithm}
  \newtheorem{definitn}{Definition}
  \newtheorem{remrk}{Remark}
  \newtheorem{lemma}{Lemma}
  \newtheorem{thm}{Theorem}
  \newtheorem{example}{Example}
\author{Xiongbin~Rao,~\IEEEmembership{Student~Member,~IEEE} and~Vincent~K.~N.~Lau,~\IEEEmembership{Fellow,~IEEE}%\\
% \\ Dept. of Electronic and Computer Engineering \\The Hong Kong University of Science and Technology\\Clear Water Bay, Kowloon, Hong Kong\\ Email: \{xrao,eeknlau\}@ust.hk
% \thanks{Copyright (c) 2014 IEEE. Personal use of this material is permitted. However, permission to use this material for any other purposes must be obtained from the IEEE by sending a request to pubs-permissions@ieee.org.}% 
\thanks{The authors are with the Department of Electronic and Computer Engineering (ECE), the Hong Kong University of Science and Technology (HKUST), Hong Kong (e-mail: \{xrao,eeknlau\}@ust.hk).}%
 }
\begin{document}

\title{Compressive Sensing with Prior Support Quality Information and Application
to Massive MIMO Channel Estimation with Temporal Correlation}
\maketitle
\begin{abstract}
In this paper, we consider the problem of compressive sensing (CS)
recovery with a prior support and the prior support quality information
available. Different from classical works which exploit prior support
blindly, we shall propose novel CS recovery algorithms to exploit
the prior support adaptively based on the quality information. We
analyze the distortion bound of the recovered signal from the proposed
algorithm and we show that a better quality prior support can lead
to better CS recovery performance. We also show that the proposed
algorithm would converge in $\mathcal{O}\left(\log\mbox{SNR}\right)$
steps. To tolerate possible model mismatch, we further propose some
robustness designs to combat incorrect prior support quality information.
Finally, we apply the proposed framework to sparse channel estimation
in massive MIMO systems with temporal correlation to further reduce
the required pilot training overhead.
\end{abstract}

\section{Introduction}

The problem of recovering a sparse signal from a number of compressed
measurements has been drawing a lot of attention in the research community
\cite{foucart2013mathematical}. Specifically, consider the following
compressive sensing (CS) model:
\begin{equation}
\mathbf{y}=\mathbf{\Phi}\mathbf{x}\label{eq:signal_model}
\end{equation}
where $\mathbf{x}\in\mathbb{C}^{N\times1}$ is the unknown sparse
signal ($||\mathbf{x}||_{0}\ll N$), $\Phi\in\mathbb{C}^{M\times N}$
is the measurement matrix with $M\ll N$ , and $\mathbf{y}\in\mathbb{C}^{M\times1}$
are the measurements, where the goal is to recover $\mathbf{x}$ based
on $\mathbf{y}$ and $\mathbf{\Phi}$. Since $M\ll N$, (\ref{eq:signal_model})
is in fact an under-determined system and hence there are infinite
solutions of $\mathbf{x}$ to satisfy (\ref{eq:signal_model}) in
general. However, utilizing the fact that $\mathbf{x}$ is sparse
($||\mathbf{x}||_{0}\ll N$), it is possible recover $\mathbf{x}$
exactly via the following formulation \cite{foucart2013mathematical}:
\begin{equation}
\min_{\hat{\mathbf{x}}}||\hat{\mathbf{x}}||_{0}\quad\mbox{s.t.}\,\mathbf{y}=\mathbf{\Phi}\hat{\mathbf{x}}.\label{eq:l0_norm_mini}
\end{equation}

Unfortunately, problem (\ref{eq:l0_norm_mini}) is combinatorial and
has prohibitive complexity \cite{candes2005decoding}. To have feasible
solutions, researchers have designed many methods to approximately
solve (\ref{eq:l0_norm_mini}). For instance, the convex approximation
approach via $l_{1}$-norm minimization (basis pursuit) is proposed
in \cite{candes2005decoding}. Greedy-based algorithms which focus
on iteratively identifying the signal support (i.e.,$\mathcal{T}=\{i:\mathbf{x}(i)\neq0\}$)
or approximating the signal coefficients are proposed in \cite{tropp2007signal,blumensath2009iterative,needell2009cosamp,dai2009subspace}
(e.g., the orthogonal matching pursuit (OMP) in \cite{tropp2007signal},
iterative hard thresholding (IHT) in \cite{blumensath2009iterative},
compressive sampling matching pursuit (CoSaMP) in \cite{needell2009cosamp},
and subspace pursuit (SP) in \cite{dai2009subspace}). By using the
tools of the restricted isometry property (RIP) \cite{candes2005decoding},
these CS recovery algorithms \cite{tropp2007signal,blumensath2009iterative,needell2009cosamp,dai2009subspace,candes2005decoding}
are shown to achieve efficient recovery with substantially fewer measurements
compared with the signal dimension (i.e., $M\ll N$). Besides, there
are also works that deploy the approximate message passing technique
to achieve efficient CS recovery \cite{ziniel2013dynamic,vila2013expectation,ziniel2013efficient}.
However, they \cite{tropp2007signal,blumensath2009iterative,needell2009cosamp,dai2009subspace,candes2005decoding,vila2013expectation,ziniel2013dynamic,ziniel2013efficient}
consider one-time static CS recovery and do not exploit the prior
information of the signal support. 

In practice, we usually encounter the problem of recovering a sequence
of sparse signals and the sparse patterns of the signals are usually
correlated across time. For instance, consecutive real time video
signals \cite{wakin2006compressive,friedlander2012recovering} usually
have strong dependencies. In spectrum sensing, the index set of the
occupied frequency band usually varies slowly \cite{ren2012survey}.
In sparse channel estimation, consecutive frames tend to share some
multi-paths due to the slowly varying propagation environment between
base stations and users \cite{huang2009limited,bajwa2010compressed}.
As such, there is huge potential to exploit previously estimated signal
support to enhance the CS recovery performance at the present time.
In the literature, some works \cite{vaswani2010modified,jacques2010short,herzet2012exact,friedlander2012recovering,amel2014adaptive}
have already considered CS problems with a prior signal support $\mathcal{T}_{0}$
available and modified CS algorithms \cite{vaswani2010modified,jacques2010short,herzet2012exact,friedlander2012recovering,amel2014adaptive}
are proposed to exploit the prior $\mathcal{T}_{0}$ to enhance the
performance. For instance, in \cite{vaswani2010modified,jacques2010short,herzet2012exact},
modified basis pursuit designs are proposed to utilize $\mathcal{T}_{0}$
by %
\footnote{For instance, a typical modified $l_{1}$-norm minimization \cite{vaswani2010modified,jacques2010short,herzet2012exact,friedlander2012recovering,amel2014adaptive}
to exploit the prior support $\mathcal{T}_{0}$ is given by: $\min_{\hat{\mathbf{x}}}\left\Vert \hat{\mathbf{x}}_{\mathcal{T}_{0}^{c}}\right\Vert _{1}\quad\mbox{s.t.}\,\mathbf{y}=\mathbf{\Phi}\hat{\mathbf{x}}$. %
} minimizing the $l_{1}$-norm of the \emph{subvector} $\hat{\mathbf{x}}_{\mathcal{T}_{0}^{c}}$
formed by excluding the elements of $\mathbf{x}$ in $\mathcal{T}_{0}$,
$\mathcal{T}_{0}^{c}=\{1,...,N\}\backslash\mathcal{T}_{0}$. Based
on this, \cite{friedlander2012recovering,amel2014adaptive} have further
considered a \emph{weighted} $l_{1}$-norm minimization approach to
exploit $\mathcal{T}_{0}$. However, these designs \cite{vaswani2010modified,jacques2010short,herzet2012exact,friedlander2012recovering,amel2014adaptive}
do not take the \emph{quality} of the prior support information $\mathcal{T}_{0}$
into consideration in the problem formulation and fail to exploit
$\mathcal{T}_{0}$ \emph{adaptively} based on the quality%
\footnote{Here, the quality of prior support $\mathcal{T}_{0}$ refers to how
many indices in $\mathcal{T}_{0}$ are correct for the present. Please
refer to Section II for the details.%
} of $\mathcal{T}_{0}$. In practice, the prior signal support $\mathcal{T}_{0}$
may contain only \emph{part} of correct indices for the present time
(e.g., practical signal support is temporarily correlated but is also
\emph{dynamic} across time). In cases when only a small part of the
indices in $\mathcal{T}_{0}$ is correct, using the modified basis
pursuit design in \cite{vaswani2010modified,jacques2010short,herzet2012exact,friedlander2012recovering}
(which fully exploits $\mathcal{T}_{0}$), would lead to a even worse
performance \cite{friedlander2012recovering}. As such, it is desirable
to exploit $\mathcal{T}_{0}$ \emph{adaptively }based on how good
\emph{$\mathcal{T}_{0}$ }is for the present time.

In this paper, we propose a more complete model regarding the prior
signal support information. Aside from the prior support $\mathcal{T}_{0}$,
we assume that there is a metric to further indicate the \emph{quality}
of $\mathcal{T}_{0}$. Based on this, we design novel algorithms to
exploit $\mathcal{T}_{0}$ \emph{adaptively} based on the quality
indicator to achieve better signal recovery performance. Different
from previous works \cite{vaswani2010modified,jacques2010short,herzet2012exact,friedlander2012recovering,amel2014adaptive}
with convex relaxation approaches, we shall propose a greedy pursuit
approach%
\footnote{The focus of this work is on greedy pursuit based designs and the
detailed explanations for the selection the considered algorithm is
given at the beginning of Section III. Note that there may be other
approaches to exploit the prior support information, such as designing
from the approximate message passing \cite{ziniel2013dynamic,vila2013expectation,ziniel2013efficient}
which innately operates on the prior information of the signal. A
detailed investigation of other approaches is an interesting research
direction for future works. %
} to achieve our target. To cover more application scenarios, we shall
consider a framework with a general signal model which incorporates
conventional block sparsity \cite{eldar2010block,eldar2009robust}
and multiple measurement vector (MMV) joint sparsity models \cite{duarte2011structured,tropp2006algorithms,eldar2010average}.
There are several technical challenges to tackle in this work:
\begin{itemize}
\item \textbf{Algorithm Design to }\textbf{\emph{Adaptively}}\textbf{ Exploit
the Prior Support}: Note that classical CS works \cite{vaswani2010modified,jacques2010short,herzet2012exact,friedlander2012recovering,amel2014adaptive}
exploit prior support information $\mathcal{T}_{0}$ blindly. To further
enhance the recovery performance, we shall design a novel CS algorithm
to exploit the prior support $\mathcal{T}_{0}$ adaptively based on
the metric information indicating how good $\mathcal{T}_{0}$ is.
On the other hand, the proposed CS algorithm should also take the
general signal sparsity model into consideration. 
\item \textbf{Performance Analysis of the Proposed Algorithm}: Besides the
algorithm design, it is also important to quantify the performance
of the proposed novel CS recovery algorithms. For instance, it is
desirable to analyze the distortion bound of the recovered signal
and it is desirable to characterize the associated convergence speed
of the proposed algorithm.
\item \textbf{Robust Designs to Combat Model mismatch}: In practice, there
might be occasions with mismatch in the prior support information
model (e.g., \emph{incorrect} information of the prior support quality).
 For robustness, it it is also desirable to have some alternative
robust designs to make sure that the proposed scheme works efficiently
even with model mismatch.
\end{itemize}

In this paper, we shall address the above challenges. In Section II,
we introduce the CS problem setup with a general signal sparsity model.
We then present a prior support information model and introduce the
metric to quantify the \emph{quality} of the the prior support $\mathcal{T}_{0}$.
In Section III, we present the proposed CS algorithm to adaptively
exploit the prior support based on the quality indicator. After that,
in Section IV, we analyze the recovery performance of the proposed
algorithm, and in Section V, we further propose some robust designs
to tolerate model mismatch with incorrect prior support quality information.
Based on these results, in Section VI, we apply the proposed scheme
to sparse channel estimation in massive MIMO systems with temporal
correlation, to demonstrate the usefulness of the proposed framework.
Numerical results in Section VII demonstrate the performance advantages
of the the proposed scheme over the existing state-of-the-art algorithms.

\textit{Notation}s: Uppercase and lowercase boldface letters denote
matrices and vectors, respectively. The operators $(\cdot)^{T}$,
$(\cdot)^{*}$, $(\cdot)^{H}$, $(\cdot)^{\dagger}$, $|\cdot|$,
and $O(\cdot)$ are the transpose, conjugate, conjugate transpose,
Moore-Penrose pseudoinverse, cardinality, and big-O notation operator,
respectively; $\textrm{supp}(\mathbf{h})$ is the index set of the
non-zero entries of vector $\mathbf{h}$; $||\mathbf{A}||_{F}$, $||\mathbf{A}||$
and $||\mathbf{a}||$ denote the Frobenius norm, spectrum norm of
$\mathbf{A}$ and Euclidean norm of vector $\mathbf{a}$, respectively.

\section{System Model}

\subsection{Compressive Sensing Model}

Suppose we have compressed measurements $\mathbf{Y}\in\mathbb{C}^{M\times L}$
of an unknown sparse signal matrix $\mathbf{X}\in\mathbb{C}^{N\times L}$
given by 
\begin{equation}
\mathbf{Y}=\Phi\mathbf{X}+\mathbf{N}\label{eq:CS_signal_model}
\end{equation}
where $\Phi\in\mathbb{C}^{M\times N}$ ($M\ll N$) is the measurement
matrix and $\mathbf{N}\in\mathbb{C}^{N\times L}$ is the measurement
noise. Our target is to recover $\mathbf{X}$ based on $\mathbf{Y}$
and $\Phi$. Before we elaborate the recovery algorithm, we first
elaborate the considered signal sparsity model and the prior support
information for $\mathbf{X}$ in the following sections.

\subsection{Signal Sparsity Model}

Many works have considered CS problems with joint sparsity structures
in the literature. For instance, block sparsity is considered in \cite{eldar2010block,eldar2009robust}
in which the target sparse vector\emph{ }(i.e.,\emph{ $L=1$} in (\ref{eq:signal_model}))
has simultaneously zero or non-zero blocks with block size $d$. On
the other hand, the MMV problem is discussed in \cite{duarte2011structured,tropp2006algorithms,eldar2010average}
where the target sparse \emph{matrix} ($L>1$) has simultaneously
zero or non-zero rows. By exploiting the joint sparsity structures,
better recovery performance can be achieved compared with conventional
CS algorithms \cite{eldar2010block,eldar2009robust,duarte2011structured,tropp2006algorithms,eldar2010average}.
Motivated by these works, we shall consider a general sparsity model
for $\mathbf{X}$ in (\ref{eq:CS_signal_model}) so that conventional
block sparsity or MMV sparsity structure can be incorporated. Suppose
the sparse matrix $\mathbf{X}\in\mathbb{C}^{N\times L}$ ($N=Kd$
) is a concatenation of $K$ \emph{chunks }where each chunk is of
size $d\times L$ and has \emph{simultaneously} zero or non-zero entries.
Denote $\mathbf{X}[k]\in\mathbb{C}^{d\times L}$ as its $k$-th chunk
of $\mathbf{X}$ as in Figure \ref{fig:Illustration-of-CS_model},
i.e., 
\begin{equation}
\mathbf{X}\triangleq\left[\begin{array}{cc}
\mathbf{X}[1] & \in\mathbb{C}^{d\times L}\\
\mathbf{X}[2] & \in\mathbb{C}^{d\times L}\\
\vdots & \vdots\\
\mathbf{X}[K] & \in\mathbb{C}^{d\times L}
\end{array}\right]\in\mathbb{C}^{N\times L},\label{eq:signal_model1}
\end{equation}
Define the \emph{chunk support} $\mathcal{T}$ (with chunk size $d\times L$---assumed
throughout the paper) of $\mathbf{X}$ as
\begin{equation}
\mathcal{T}\triangleq\left\{ n:\;\left\Vert \mathbf{X}[n]\right\Vert _{F}>0,1\leq n\leq K\right\} .\label{eq:sparse_signal2}
\end{equation}
We formally have the following definition of chunk-sparse matrices.

\begin{figure}
\begin{centering}
\includegraphics[scale=0.36]{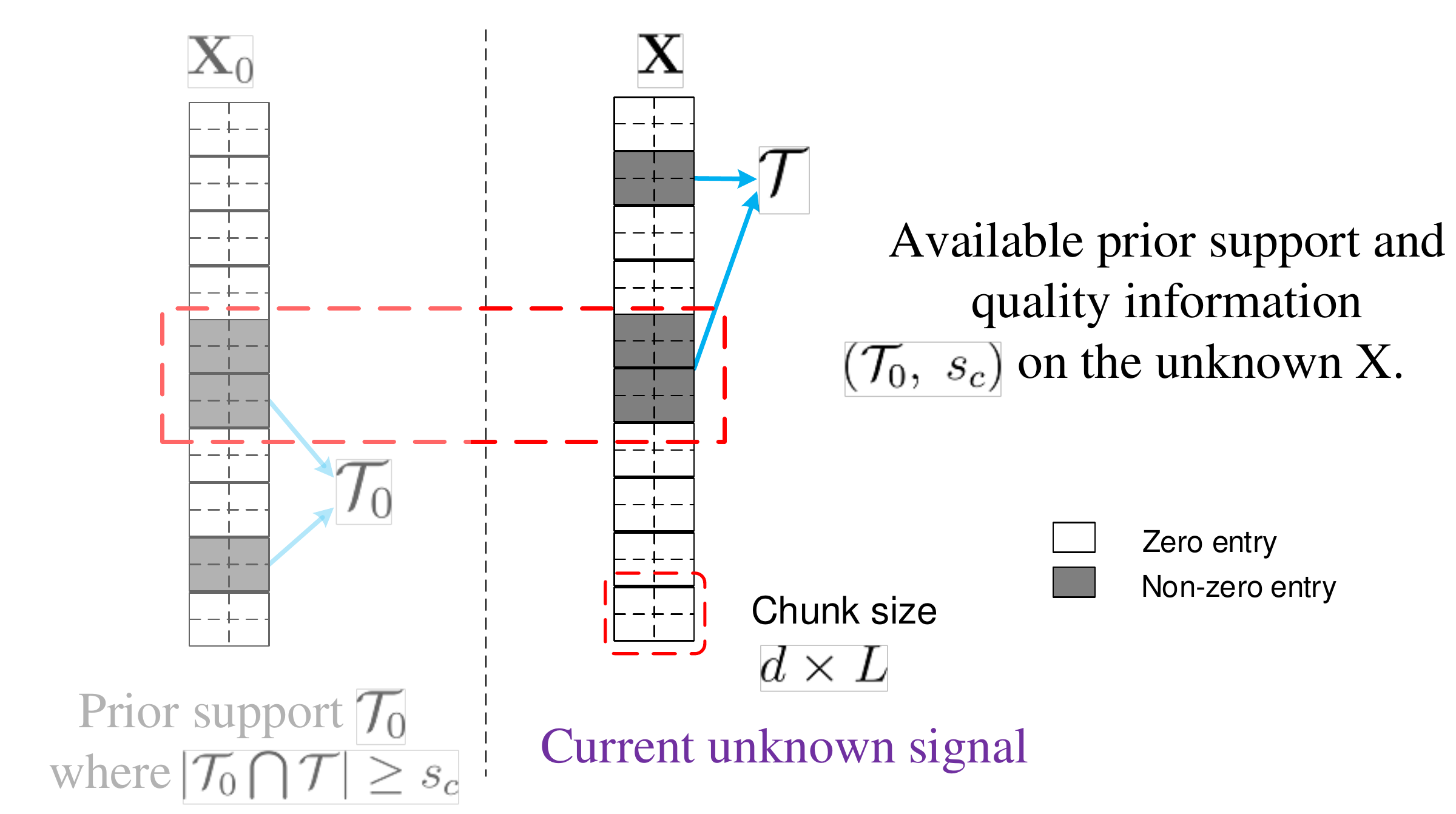}
\par\end{centering}

\protect\caption{\label{fig:Illustration-of-CS_model}Illustration of the available
prior support and quality information $(\mathcal{T}_{0},s_{c})$ for
$\mathbf{X}$. Note that $\mathcal{T}_{0}$ (available) and $\mathcal{T}$
(unknown) denote the prior support information and the current signal
support, respectively. Our target is to exploit the side information
of $(\mathcal{T}_{0},s_{c})$ to improve the CS recovery performance
of $\mathbf{X}$ from its compressed measurements $\mathbf{Y}$. }
\end{figure}

\begin{definitn}
[Chunk Sparsity Level]\label{Matrix-with-Chunk}Matrix $\mathbf{X}\in\mathbb{C}^{N\times L}$
is said to have $s$-th chunk sparsity level (CSL) if the chunk support
$\mathcal{T}$ of $\mathbf{X}$ as in (\ref{eq:sparse_signal2}) satisfies
$|\mathcal{T}|=s\ll K$. \hfill \QED
\end{definitn}

Note that when $d=1$ and $L>1$, the considered signal model is reduced
to the MMV joint sparsity models \cite{duarte2011structured,tropp2006algorithms,eldar2010average};
when $L=1$, the considered signal model is reduced to the block sparsity
scenarios \cite{eldar2010block,eldar2009robust}; and when both $d=1$
and $L=1$, the considered model degenerates to the classical signal
sparsity model without structures \cite{foucart2013mathematical}.
As such, the considered sparse signal model incorporates conventional
sparse signal models \cite{duarte2011structured,tropp2006algorithms,eldar2010average,eldar2010block,eldar2009robust}
and potentially covers more application scenarios. Note that practical
signals $\mathbf{X}$ may have some joint sparsity (e.g., due to physical
collocation \cite{barbotin2011estimation} or specific application
features such as Magnetic resonance imaging \cite{qiu2009real}) and
using a proper signal model enables us to exploit the joint sparsity
to enhance the signal recovery performance (as demonstrated in \cite{duarte2011structured,tropp2006algorithms,eldar2010average,eldar2010block,eldar2009robust}).
In this paper, we assume that the CSL of the target signal $\mathbf{X}\in\mathbb{C}^{N\times L}$
is upper bounded by $\bar{s}$, i.e., $|\mathcal{T}|\leq\bar{s}$
and $\bar{s}$ is available, as in classical CS works \cite{needell2009cosamp,dai2009subspace}.

\subsection{Prior Support Information}

We consider the following prior support information of $\mathbf{X}$
is available.
\begin{definitn}
[Prior Support Information]\label{Temporarily-Correlated-Sparse}The
\emph{prior support information} regarding the information $\mathbf{X}$
is characterized by a tuple $(\mathcal{T}_{0},s_{c})$, where $s_{c}\leq\left|\mathcal{T}_{0}\right|\leq\bar{s}$,
$\left|\mathcal{T}_{0}\bigcap\mathcal{T}\right|\geq s_{c}\geq0$.
\hfill \QED
\end{definitn}

\begin{remrk}
[Interpretation of Definition \ref{Temporarily-Correlated-Sparse}]Note
that $\mathcal{T}_{0}$ denotes the prior signal support and parameter
$s_{c}$ is a metric to indicate the \emph{quality} of the prior support
$\mathcal{T}_{0}$. Specifically, a larger $s_{c}$ means that a larger
number of indices in $\mathcal{T}_{0}$ is correct and hence means
a better quality of $\mathcal{T}_{0}$. Compared with conventional
works \cite{vaswani2010modified,jacques2010short,herzet2012exact,friedlander2012recovering,amel2014adaptive}
which exploit $\mathcal{T}_{0}$ blindly, we further consider some
uncertainty information about the prior support $\mathcal{T}_{0}$
(quantified by $s_{c}$) and such information allows us to exploit
$\mathcal{T}_{0}$ \emph{adaptively} based on $s_{c}$. Note that
$s_{c}$ refers to the number of correct indices but not the specific
indices in $\mathcal{T}_{0}\bigcap\mathcal{T}$. 
\end{remrk}

We then summarize the challenge we face in the following and we propose
a novel CS algorithm to handle the challenge in the next Section.\vspace{-0.4cm}

\begin{center}
\ovalbox{\begin{minipage}[t]{1\columnwidth}%
Challenge 1: Recover the chunk-sparse matrix $\mathbf{X}$ from $\mathbf{Y}$
in (\ref{eq:CS_signal_model}) exploiting the prior support information
$(\mathcal{T}_{0},\, s_{c})$.%
\end{minipage}}
\par\end{center}

\section{Algorithm Design to Exploit the Prior Support and Quality Information}

In this section, we shall propose a novel CS recovery algorithm to
solve Challenge 1 by \emph{extending} conventional greedy pursuit
algorithms with exploitation of $(\mathcal{T}_{0},\, s_{c})$ and
adaptation to the chunk sparsity structure of $\mathbf{X}$. Specifically,
we select to design from SP \cite{dai2009subspace} from the set of
greedy-based CS recovery algorithms, because SP \cite{dai2009subspace}
possesses many good properties such as uniform recovery guarantee
\cite{dai2009subspace}, relatively smaller required RIP constant
compared with other schemes of CoSaMP \cite{needell2009cosamp} or
IHT \cite{blumensath2009iterative} (based on the so far best known
RIP constants for these schemes \cite{giryes2012rip,chang2014improved}),
and closed-form characterizations on the number of iteration steps
\cite{dai2009subspace}. Hence, designing from SP \cite{dai2009subspace}
might enable us to obtain similar good properties. Moreover, the manipulations
of the support identification in SP \cite{dai2009subspace} provide
us an easy interface to incorporate the prior support information
$(\mathcal{T}_{0},\, s_{c})$. The detailed algorithm designs are
presented in the following.

\subsection{Algorithm Design}

\begin{table}
\begin{centering}
\begin{tabular}{|c|c|}
\hline 
\multirow{2}{*}{$\mathbf{x}^{\mathcal{T}}$ } & subvector formed by collecting the \tabularnewline
 & \emph{entries} of $\mathbf{x}$ indexed by $\mathcal{T}$.\tabularnewline
\hline 
\multirow{2}{*}{$\mathbf{X}^{[\mathcal{T}]}$} & submatrix formed by collecting the \tabularnewline
 & \emph{chunks }of\emph{ $\mathbf{X}$} indexed by $\mathcal{T}$.\tabularnewline
\hline 
\multirow{2}{*}{$\Phi_{\mathcal{T}}$} & submatrix formed by collecting the \tabularnewline
 & \emph{columns} of $\Phi$ indexed by $\mathcal{T}$.\tabularnewline
\hline 
\multirow{2}{*}{$\Phi_{[\mathcal{T}]}$} & submatrices formed by collecting \emph{columns}\tabularnewline
 &  of $\Phi$ indexed by $\{(k-1)d+1,..,kd:\forall k\in\mathcal{T}\}$. \tabularnewline
\hline 
\end{tabular}
\par\end{centering}

\protect\caption{\label{tab:Notations-rule}Notations.}
\end{table}

In \cite{dai2009subspace}, a subspace pursuit\emph{ }(SP)\emph{ }algorithm
is proposed to solve conventional CS problems. The basic idea of the
SP is to keep identifying the signal support based on the maximum
correlation criterion \cite{dai2009subspace} and by doing so, the
SP algorithm achieves efficient CS recovery with robustness to measurement
noises. In this section, we propose a \emph{modified subspace pursuit}
(M-SP) algorithm to solve Challenge 1 with exploitation of $(\mathcal{T}_{0},\, s_{c})$
and adaptation to the chunk sparsity model. To facilitate our presentations,
we first define a set of notation rules as in Table \ref{tab:Notations-rule}.
The details of the proposed M-SP algorithm are presented in Algorithm
\ref{alg:Modified-SP}.

\begin{algorithm}
\textbf{Input}: $\mathbf{Y}$, $\Phi$, $\bar{s}$, $(\mathcal{T}_{0},s_{c})$,
$\gamma$, $d$. 

\textbf{Output}: Estimated $\hat{\mathcal{T}}$ and $\hat{\mathbf{X}}$.

\textbf{Step 1} (\emph{Initialization}): Initialize the iteration
index $l=0$, chunk support$\hat{\mathcal{T}}_{l}=\emptyset$, and
the residue matrix $\mathbf{R}_{(l)}=\mathbf{Y}$. 

\textbf{Step 2} (\emph{Iteration}): Repeat the following steps until
stop.
\begin{itemize}
\item \textbf{A} (Support Merge): Set $\mathcal{T}_{a}=\hat{\mathcal{T}}_{l}\bigcup\left(\mathcal{T}_{b}\bigcup\mathcal{T}_{c}\right)$,
where{\small{}
\begin{equation}
\mathcal{T}_{b}=\arg\max_{|\mathcal{T}_{1}|=s_{c},\mathcal{T}_{1}\subseteq\mathcal{T}_{0}}\left\Vert \left(\Phi^{H}\mathbf{R}_{(l)}\right)^{[\mathcal{T}_{1}]}\right\Vert _{F}\label{eq:alg_equation1}
\end{equation}
\begin{equation}
\mathcal{T}_{c}=\arg\max_{|\mathcal{T}_{2}|=\bar{s}-s_{c},\mathcal{T}_{2}\subseteq\{1,..,K\}\backslash\mathcal{T}_{b}}\left\Vert \left(\Phi^{H}\mathbf{R}_{(l)}\right)^{[\mathcal{T}_{2}]}\right\Vert _{F}\label{eq:alg_equation2}
\end{equation}
}{\small \par}
\item \textbf{B} (\emph{LS Estimation})\emph{:} Set $\mathbf{Z}^{[\mathcal{T}_{a}]}=\Phi_{[\mathcal{T}_{a}]}^{\dagger}\mathbf{Y}$
and $\mathbf{Z}^{[\{1,...,K\}\backslash\mathcal{T}_{a}]}=\mathbf{0}$
.
\item \textbf{C} (\emph{Support Refinement}): Select $\hat{\mathcal{T}}_{l+1}$
as follows:{\small{}
\begin{align}
\hat{\mathcal{T}}_{l+1} & =\left\{ \arg\max_{|\mathcal{T}_{1}|=s_{c},\mathcal{T}_{1}\subseteq\mathcal{T}_{0}}\left\Vert \mathbf{Z}^{[\mathcal{T}_{1}]}\right\Vert _{F}\right\} \nonumber \\
 & \bigcup\left\{ \arg\max_{|\mathcal{T}_{2}|=\bar{s}-s_{c},\mathcal{T}_{2}\subseteq\{1,..,K\}\backslash\mathcal{T}_{1}}\left\Vert \mathbf{Z}^{[\mathcal{T}_{2}]}\right\Vert _{F}\right\} \label{eq:alg_equation3}
\end{align}
}{\small \par}
\item \textbf{D} (\emph{Signal Estimation}): Set $\hat{\mathbf{X}}_{(l+1)}^{\left[\hat{\mathcal{T}}_{l+1}\right]}=\Phi_{\left[\hat{\mathcal{T}}_{l+1}\right]}^{\dagger}\mathbf{Y}$
and $\hat{\mathbf{X}}_{(l+1)}^{[\{1,...,K\}\backslash\hat{\mathcal{T}}]}=\mathbf{0}$. 
\item \textbf{E} (\emph{Residue}): Compute $\mathbf{R}_{(l+1)}=\mathbf{Y}-\Phi_{\left[\hat{\mathcal{T}}_{l+1}\right]}\hat{\mathbf{X}}_{(l+1)}^{\left[\hat{\mathcal{T}}_{l+1}\right]}$.
\item \textbf{F} \emph{(Stopping Condition and Output)}: If $\left\Vert \mathbf{R}_{(l+1)}\right\Vert _{F}\leq\gamma$,
stop and output $\hat{\mathcal{T}}=\hat{\mathcal{T}}_{l+1}$ and $\hat{\mathbf{X}}=\hat{\mathbf{X}}_{(l+1)}$;
Else if $\left\Vert \mathbf{R}_{(l+1)}\right\Vert _{F}\geq\left\Vert \mathbf{R}_{(l)}\right\Vert _{F}$,
stop and output $\hat{\mathcal{T}}=\hat{\mathcal{T}}_{l}$ and $\hat{\mathbf{X}}=\hat{\mathbf{X}}_{(l)}$;
Else, set $l=l+1$ and go to \textbf{Step 2 A}. 
\end{itemize}

\protect\caption{\label{alg:Modified-SP}Modified-SP to Solve Challenge 1.}
\end{algorithm}

\begin{remrk}
[Interpretation of Algorithm \ref{alg:Modified-SP}]In the proposed
M-SP algorithm (Algorithm \ref{alg:Modified-SP}), $\gamma$ is a
threshold parameter, $\hat{\mathcal{T}}_{l}$ and $\hat{\mathbf{X}}_{(l)}$
denote the estimated chunk support and the estimated signal for $\mathbf{X}$
in the $l$-th iteration, respectively. Note that when $d=1$, $L=1$
and $s_{c}=0$, Algorithm \ref{alg:Modified-SP} will degenerate to
conventional SP \cite{dai2009subspace} (except that the M-SP has
a different stopping criterion%
\footnote{\label{fn:stopping}Note that the more sophisticated stopping conditions
in the M-SP algorithm (compared with that in conventional SP \cite{dai2009subspace})
enable us to obtain more complete convergence results. For instance,
as illustrated in Table \ref{tab:Comparison-between-SP-MSP}, conventional
SP \cite{dai2009subspace} only characterizes the number of convergence
steps in noise free cases while our results cover both noise-free
and noisy scenarios. %
}. Table \ref{tab:Comparison-between-SP-MSP} illustrates the comparison
between conventional SP and the proposed M-SP. The following explains
how the proposed M-SP exploits the prior support information $(\mathcal{T}_{0},s_{c})$
and adapts to the chunk sparsity model:\end{remrk}
\begin{itemize}
\item \textbf{Exploitation of Prior Support Information} $(\mathcal{T}_{0},s_{c})$
: Note that the information $(\mathcal{T}_{0},s_{c})$ is utilized
in \textbf{Step 2 A} and \textbf{C} of Algorithm \ref{alg:Modified-SP}.
As can be seen in \textbf{Step 2 A}, the newly added support (i.e.,
$\mathcal{T}_{b}\bigcup\mathcal{T}_{c}$) contains two parts, namely
$\mathcal{T}_{b}$ and $\mathcal{T}_{c}$, where $\mathcal{T}_{b}$
with size $s_{c}$ is selected from prior support $\mathcal{T}_{0}$,
$\mathcal{T}_{c}$ with size $\bar{s}-s_{c}$ is selected from $\{1,..,K\}\backslash\mathcal{T}_{b}$.
This design utilizes the fact that prior support $\mathcal{T}_{0}$
contains at least $s_{c}$ correct indices. Similarly, in \textbf{Step
2 C}, the refined signal support $\hat{\mathcal{T}}_{l+1}$ contains
two parts, i.e., $s_{c}$ indices from $\mathcal{T}_{0}$ and another
$\bar{s}-s_{c}$ from the others. This is in contrast to conventional
SP \cite{dai2009subspace} in which the newly added/updated signal
support are \emph{blindly} selected\emph{ over the entire} signal
index set $\{1,..,K\}$. Using the proposed support identification
criterion, the prior support information $\mathcal{T}_{0}$ is utilized
adaptively based on the quality information $s_{c}$, and hence better
recovery performance may be achieved. 
\item \textbf{Adaptation to the General Sparsity Model}: Note that we have
considered a general sparsity model in which the signal matrix $\mathbf{X}$
has simultaneous zero or non-zero entries within each chunk (with
size $d\times L$). Therefore, instead of identifying each single
element in $\mathbf{X}$ separately (as in the conventional SP \cite{dai2009subspace}),
we identify each non-zero chunk as a atomic unit based on the aggregate
correlation effects between the measurement matrix $\Phi$ and the
residue matrix $\mathbf{R}_{(l)}$. For instance, in (\ref{eq:alg_equation1})-(\ref{eq:alg_equation2}),
we identify a new chunk based on the Frobenius norm of $\left(\Phi^{H}\mathbf{R}_{(l)}\right)^{[\{k\}]}$
which corresponds to an aggregate correlation effect due to the $k$-th
chunk. This design adapts to the joint sparsity structure in $\mathbf{X}$
and may achieve better recovery performance \cite{duarte2011structured,tropp2006algorithms,eldar2010average,eldar2010block,eldar2009robust}.
\end{itemize}

After giving the details of the proposed M-SP algorithm above, it
is also important for us to characterize the associated recovery performance.
Specifically, we are interested in characterizing the distortion bound
of the estimated signal as well as the convergence speed of the proposed
M-SP algorithm. We shall discuss these issues in the next Section.

\begin{table}
\begin{centering}
\includegraphics[scale=0.55]{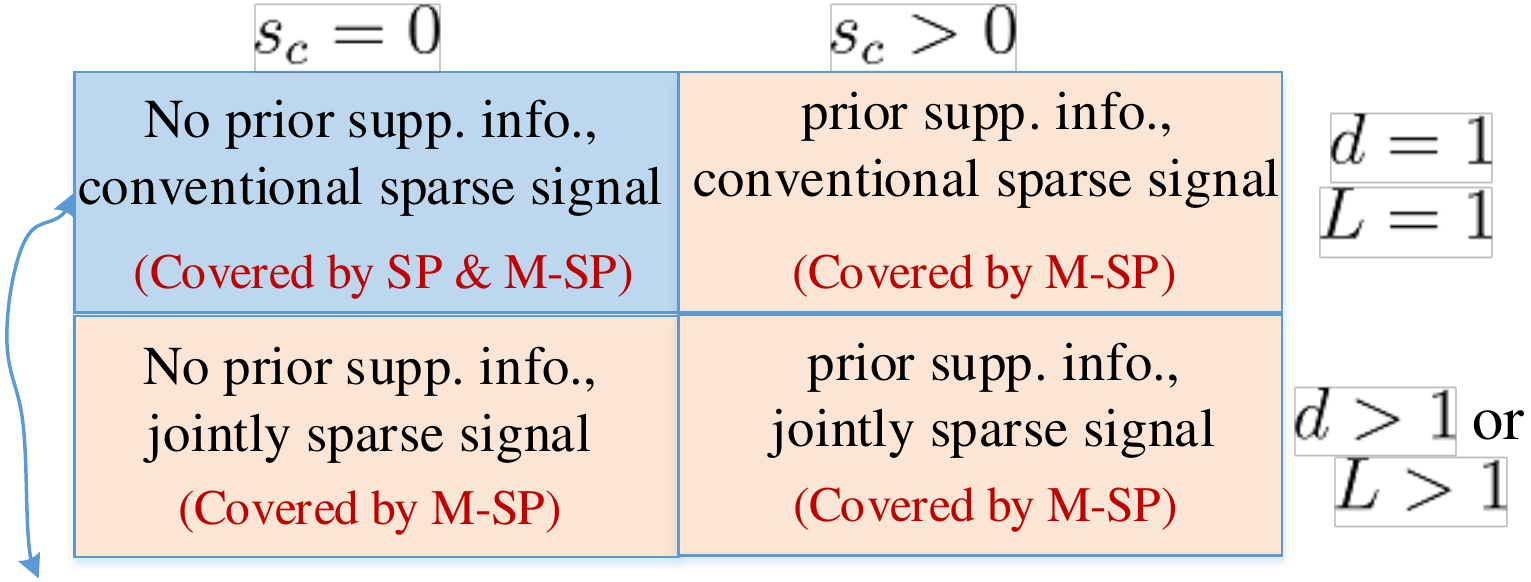}
\par\end{centering}

\begin{centering}
{\small{}}%
\begin{tabular}{|c|c|c|c|}
\hline 
\multicolumn{2}{|c|}{{\small{}Comparison ($s_{c}=0$, $d=1$, $L=1$)}} & {\small{}SP \cite{dai2009subspace,chang2014improved}} & {\small{}M-SP}\tabularnewline
\hline 
\hline 
 & {\small{}Performance} & \multirow{4}{*}{{\small{}covered}} & \multirow{8}{*}{{\small{}covered}}\tabularnewline
{\small{}no noise} & {\small{}i.e., $\mathbf{x}=\hat{\mathbf{x}}$} &  & \tabularnewline
\cline{2-2} 
{\small{}$\mathbf{y}=\Phi\mathbf{x}$} & {\small{}Convergence} &  & \tabularnewline
 & {\small{}\# iterations $n_{co}$} &  & \tabularnewline
\cline{1-3} 
 & {\small{}Performance, i.e.,} & \multirow{2}{*}{{\small{}covered}} & \tabularnewline
{\small{}noisy } & {\footnotesize{}$||\mathbf{x}-\hat{\mathbf{x}}||\leq\mathcal{O}\left(||\mathbf{n}||\right)$} &  & \tabularnewline
\cline{2-3} 
{\small{}$\mathbf{y}=\Phi\mathbf{x}+\mathbf{n}$} & {\small{}Convergence} & \multirow{2}{*}{{\small{}not covered}} & \tabularnewline
 & {\small{}\# iterations $n_{co}$} &  & \tabularnewline
\hline 
\end{tabular}
\par\end{centering}{\small \par}

\protect\caption{\label{tab:Comparison-between-SP-MSP}Comparison of the proposed M-SP
and SP {\small{}\cite{dai2009subspace,chang2014improved,giryes2012rip}}.}
\end{table}

\section{Performance Analysis of the Proposed M-SP}

In this Section, we analyze the performance of the proposed M-SP algorithm
by deploying the tools of restricted isometry property (RIP) \cite{candes2005decoding,eldar2009robust}.
Specifically, we are interested in both the estimation distortion
(e.g., $\left\Vert \mathbf{X}-\hat{\mathbf{X}}\right\Vert _{F}$)
and the convergence speed of Algorithm \ref{alg:Modified-SP}. Based
on the results, we further derive some simple insights regarding how
the prior support quality $s_{c}$ affects the recovery performance. 

\begin{center}
\vspace{-0.4cm}
\ovalbox{\begin{minipage}[t]{1\columnwidth}%
Challenge 2: Analyze the distortion of the estimated signal $\hat{\mathbf{X}}$
and the associated convergence speed for Algorithm \ref{alg:Modified-SP}.%
\end{minipage}}
\par\end{center}

\subsection{Preliminaries}

In the literature, the RIP \cite{candes2005decoding} is commonly
adopted to facilitate the performance of CS recovery algorithms. However,
the conventional RIP \cite{candes2005decoding} only serves to handle
general sparse signal vectors without sparsity structures. To deal
with the CS problems with block sparsity structures, the authors in
\cite{eldar2009robust} further propose the notion of \emph{block-RIP}
by extending the conventional RIP \cite{candes2005decoding}. This
block RIP \cite{eldar2009robust} can also be deployed to facilitate
the performance analysis in our scenario. We first review the notion
of the block-RIP \cite{eldar2009robust} as follows:
\begin{definitn}
[Block Restricted Isometry Property \cite{eldar2009robust}]\label{Block-Restricted-Isometry}Matrix
$\Phi\in\mathbb{C}^{M\times N}$ satisfies the $k$-th order block-RIP
with block size $d$ ($d\mid N$, $K\triangleq\frac{N}{d}$) and block-RIP
constant $\delta_{k|d}$, if $0\leq\delta_{k|d}<1$ and
\begin{align*}
\delta_{k|d}\triangleq & \min\left\{ \delta:\;(1-\delta)\left\Vert \mathbf{x}\right\Vert _{2}^{2}\leq\left\Vert \Phi\mathbf{x}\right\Vert _{2}^{2}\right.\\
 & \left.\leq(1+\delta)\left\Vert \mathbf{x}\right\Vert _{2}^{2},\;|\textrm{supp}_{d}(\mathbf{x})|\leq k\right\} 
\end{align*}
where $\textrm{supp}_{d}(\mathbf{x})=\left\{ n:||\mathbf{x}[n]||>0,1\leq n\leq K\right\} $
with $\mathbf{x}[n]$ denoting the $n$-th block of $\mathbf{x}$
(with block size $d\times1$) \cite{eldar2009robust}.\hfill \QED
\end{definitn}

Note that when $d=1$, the block-RIP will be reduced to the conventional
RIP \cite{candes2005decoding}. In the following analysis, we assume
that the measurement matrix $\Phi$ has block-RIP properties with
$\delta_{k|d}$ denoting the $k$-th order block-RIP constant of $\Phi$.
We first introduce the following inequalities over the block-RIP by
extending conventional results \cite{needell2009cosamp,dai2009subspace,chang2014improved}.
\begin{lemma}
[Inequalities over the block-RIP]\label{Inequalities-Over-Block-RIP}The
following inequalities are satisfied:

1) If $k_{1}\leq k_{2}$, then $\delta_{k_{1}|d}\leq\delta_{k_{2}|d}$. 

2) For support $\mathcal{T}$ with $\left|\mathcal{T}\right|\leq k$,
we have 
\begin{eqnarray}
1-\delta_{k|d} & \leq & \sigma_{\min}\left(\Phi_{[\mathcal{T}]}^{H}\Phi_{[\mathcal{T}]}\right)\label{eq:inequality1}\\
 &  & \leq\sigma_{\max}\left(\Phi_{[\mathcal{T}]}^{H}\Phi_{[\mathcal{T}]}\right)\leq1+\delta_{k|d},\nonumber 
\end{eqnarray}
\begin{equation}
\sigma_{\max}\left(\Phi_{[\mathcal{T}]}^{\dagger}\right)\leq\frac{1}{\sqrt{1-\delta_{k|d}}}.\label{eq:inequality4}
\end{equation}

3) For two disjoint supports $\mathcal{T}_{1}$, $\mathcal{T}_{2}$,
where $\left|\mathcal{T}_{1}\right|\leq k_{1}$, $\left|\mathcal{T}_{2}\right|\leq k_{2}$,
$\mathcal{T}_{1}\bigcap\mathcal{T}_{2}=\emptyset$, we have

\begin{equation}
\sigma_{\max}\left(\Phi_{[\mathcal{T}_{1}]}^{H}\Phi_{[\mathcal{T}_{2}]}\right)\leq\delta_{k_{1}+k_{2}|d}.\label{eq:inequality5}
\end{equation}

4) Suppose the chunk support of $\mathbf{X}$ is $\mathcal{T}_{1}$.
Suppose$\mathcal{T}_{1}$, $\mathcal{T}_{2}$ are two disjoint supports
where $\left|\mathcal{T}_{1}\right|\leq k_{1}$, $\left|\mathcal{T}_{2}\right|\leq k_{2}$,
$\mathcal{T}_{1}\bigcap\mathcal{T}_{2}=\emptyset$. Denote the projection
matrix $\mathbf{P}_{(\mathcal{T}_{2})}$ as $\mathbf{P}_{(\mathcal{T}_{2})}\triangleq\Phi_{[\mathcal{T}_{2}]}\left(\Phi_{[\mathcal{T}_{2}]}^{H}\Phi_{[\mathcal{T}_{2}]}\right)^{-1}\Phi_{[\mathcal{T}_{2}]}^{H}$.
Then, 
\[
\left\Vert \mathbf{P}_{(\mathcal{T}_{2})}\Phi\mathbf{X}\right\Vert _{F}\leq\delta_{k_{1}+k_{2}|d}\sqrt{1+\delta_{k_{1}+k_{2}|d}}\left\Vert \mathbf{X}\right\Vert _{F}.
\]
\end{lemma}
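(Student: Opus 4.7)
The plan is to derive all four inequalities by adapting the standard RIP toolkit of \cite{needell2009cosamp,dai2009subspace,chang2014improved} from the scalar to the chunk setting. The core trick is that every step which would normally use a $k$-sparse vector is replaced by an $N\times L$ matrix whose chunk support has size at most $k$, after which block-RIP is applied column-wise through the Frobenius norm. Part 1 is essentially definitional: any matrix with chunk sparsity $\leq k_1$ also has chunk sparsity $\leq k_2$ whenever $k_1\leq k_2$, so the class of matrices over which the block-RIP inequality must hold for $\delta_{k_2|d}$ contains the class for $\delta_{k_1|d}$, forcing $\delta_{k_2|d}\geq\delta_{k_1|d}$.

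For part 2 I would take an arbitrary $\mathbf{U}\in\mathbb{C}^{|\mathcal{T}|d\times L}$, zero-pad it to $\tilde{\mathbf{X}}$ by placing $\mathbf{U}$ in the chunks indexed by $\mathcal{T}$ and zeros elsewhere; then $\Phi\tilde{\mathbf{X}}=\Phi_{[\mathcal{T}]}\mathbf{U}$ and $\|\tilde{\mathbf{X}}\|_F=\|\mathbf{U}\|_F$, so applying block-RIP column-wise yields the eigenvalue sandwich (\ref{eq:inequality1}), and the pseudoinverse bound (\ref{eq:inequality4}) then follows from $\sigma_{\max}(\Phi_{[\mathcal{T}]}^{\dagger})=1/\sigma_{\min}(\Phi_{[\mathcal{T}]})$. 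Part 3 I would handle by polarization: take unit-Frobenius-norm matrices $\mathbf{U}_1,\mathbf{U}_2$ with chunk supports inside the disjoint $\mathcal{T}_1,\mathcal{T}_2$, form zero-padded sums $\tilde{\mathbf{X}}_{\pm}$ whose squared Frobenius norm is $2$ and whose chunk support has size at most $k_1+k_2$, apply block-RIP to both and subtract to obtain $\bigl|\|\Phi\tilde{\mathbf{X}}_+\|_F^2-\|\Phi\tilde{\mathbf{X}}_-\|_F^2\bigr|\leq 4\delta_{k_1+k_2|d}$, and expand the squares to get $|\mathrm{Re}\langle\Phi_{[\mathcal{T}_1]}\mathbf{U}_1,\Phi_{[\mathcal{T}_2]}\mathbf{U}_2\rangle_F|\leq\delta_{k_1+k_2|d}$; a unitary phase rotation on $\mathbf{U}_2$ promotes this to the full complex inner product, which, after taking the supremum over unit $\mathbf{U}_1,\mathbf{U}_2$, is exactly (\ref{eq:inequality5}).

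For part 4 I would introduce $\mathbf{Z}$ such that $\mathbf{P}_{(\mathcal{T}_2)}\Phi\mathbf{X}=\Phi_{[\mathcal{T}_2]}\mathbf{Z}$, use idempotence $\mathbf{P}_{(\mathcal{T}_2)}^2=\mathbf{P}_{(\mathcal{T}_2)}$ together with disjointness of $\mathcal{T}_1,\mathcal{T}_2$ to rewrite $\|\mathbf{P}_{(\mathcal{T}_2)}\Phi\mathbf{X}\|_F^2=\langle\mathbf{Z},\Phi_{[\mathcal{T}_2]}^H\Phi_{[\mathcal{T}_1]}\mathbf{X}^{[\mathcal{T}_1]}\rangle_F$, and then close the loop: Cauchy--Schwarz combined with (\ref{eq:inequality5}) bounds this inner product by $\delta_{k_1+k_2|d}\|\mathbf{Z}\|_F\|\mathbf{X}\|_F$, while the lower bound from (\ref{eq:inequality1}) controls $\|\mathbf{Z}\|_F$ in terms of $\|\Phi_{[\mathcal{T}_2]}\mathbf{Z}\|_F=\|\mathbf{P}_{(\mathcal{T}_2)}\Phi\mathbf{X}\|_F$; dividing by $\|\mathbf{P}_{(\mathcal{T}_2)}\Phi\mathbf{X}\|_F$ yields a bound of the advertised form.

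The main obstacle will be the tight bookkeeping in part 4 required to recover the precise factor $\sqrt{1+\delta_{k_1+k_2|d}}$ rather than the cruder $1/\sqrt{1-\delta_{k_2|d}}$ that falls out of the direct Cauchy--Schwarz and block-RIP estimate; this calls for invoking monotonicity from part 1 and carefully balancing the upper and lower block-RIP constants, and possibly for routing the projection through $\|\Phi_{[\mathcal{T}_2]}\mathbf{Z}\|_F\leq\sqrt{1+\delta_{k_1+k_2|d}}\|\mathbf{Z}\|_F$ so that the $\sqrt{1+\delta}$ factor appears cleanly. The other three parts are routine chunk-sparse extensions of the scalar RIP arguments.
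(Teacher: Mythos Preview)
Your handling of parts 1 and 2 agrees with the paper, which simply says they follow directly from Definition~\ref{Block-Restricted-Isometry}. For part 3 you take a different but equally standard route: the paper observes that $\Phi_{[\mathcal{T}_1]}^H\Phi_{[\mathcal{T}_2]}$ is an off-diagonal submatrix of $\Phi_{[\mathcal{T}_1\cup\mathcal{T}_2]}^H\Phi_{[\mathcal{T}_1\cup\mathcal{T}_2]}-\mathbf{I}$, whose spectral norm is at most $\delta_{k_1+k_2|d}$ by block-RIP, and then invokes the fact that the spectral norm of a submatrix never exceeds that of the full matrix. Your polarization argument is correct and reaches the same conclusion; the submatrix route is a line shorter, but neither buys anything the other does not.

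For part 4 the paper offers no self-contained argument at all---it simply cites Lemma~A.3 of \cite{chang2014improved} and says the block version ``directly extends'' it---so your write-up is already more detailed than the paper's. However, your Cauchy--Schwarz plus lower-RIP loop yields $\delta_{k_1+k_2|d}/\sqrt{1-\delta_{k_2|d}}$, which is strictly larger than the stated $\delta_{k_1+k_2|d}\sqrt{1+\delta_{k_1+k_2|d}}$, and neither of your proposed fixes closes the gap. Invoking monotonicity from part 1 only replaces $\delta_{k_2|d}$ by $\delta_{k_1+k_2|d}$ in the denominator, which goes the wrong way. Routing through the upper-RIP estimate $\|\Phi_{[\mathcal{T}_2]}\mathbf{Z}\|_F\leq\sqrt{1+\delta}\,\|\mathbf{Z}\|_F$ would give the desired factor only if you had an independent bound $\|\mathbf{Z}\|_F\leq\delta\|\mathbf{X}\|_F$, i.e.\ $\|\Phi_{[\mathcal{T}_2]}^{\dagger}\Phi_{[\mathcal{T}_1]}\|\leq\delta$, and that does not drop out of parts 2--3 alone. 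To recover the exact constant you will need the specific argument in \cite{chang2014improved}; otherwise your route gives a valid but slightly looser inequality.
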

\begin{proof}
See Appendix \ref{sub:Proof-of-Lemma-property}.
\end{proof}

\subsection{Performance Analysis of the Proposed M-SP}

Using the properties in Lemma \ref{Inequalities-Over-Block-RIP},
we obtain the following property regarding the residue matrix $\mathbf{R}_{(l+1)}$
and estimated signal $\hat{\mathbf{X}}_{(l+1)}$ in the $l$-th iteration
of Algorithm \ref{alg:Modified-SP}. 
\begin{lemma}
[Iteration Property in Algorithm \ref{alg:Modified-SP}]\label{Iteration-Property-II}In
the $l$-th iteration $(l\geq1)$ in Step 2 of Algorithm \ref{alg:Modified-SP},
the following inequalities are satisfied:{\small{}
\begin{equation}
\left\Vert \mathbf{R}_{(l+1)}\right\Vert _{F}\leq C_{1}\left\Vert \mathbf{R}_{(l)}\right\Vert _{F}+C_{2}\eta\label{eq:Iteration_Lemma}
\end{equation}
\begin{equation}
\left\Vert \mathbf{X}-\hat{\mathbf{X}}_{(l+1)}\right\Vert _{F}\leq(C_{1})^{l+1}\sqrt{\frac{1+\delta_{\bar{s}|d}}{1-\delta_{s_{1}|d}}}\left\Vert \mathbf{X}\right\Vert _{F}+C_{3}(l)\eta\label{eq:first_distortion}
\end{equation}
}where $\eta=\left\Vert \mathbf{N}\right\Vert _{F}$ is the noise
magnitude, $C_{1}$, $C_{2}$ and $C_{3}(l)$ are expressed in Table
\ref{tab:The-detailed-expressions-constants}.\end{lemma}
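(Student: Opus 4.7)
The plan is to adapt the classical contraction analysis of SP (Dai--Milenkovic) to the chunk-sparse setting with adaptive exploitation of the prior support $\mathcal{T}_0$. The top-level strategy is: at each iteration, bound the ``missed part'' of the true chunk-support, $\mathcal{T}\setminus\hat{\mathcal{T}}_{l+1}$, and then translate that into a residue contraction via the block-RIP inequalities of Lemma \ref{Inequalities-Over-Block-RIP}. To begin, I would rewrite the residue from Step 2\,E as
\begin{equation*}
\mathbf{R}_{(l+1)} = (\mathbf{I}-\mathbf{P}_{(\hat{\mathcal{T}}_{l+1})})\bigl(\Phi_{[\mathcal{T}\setminus\hat{\mathcal{T}}_{l+1}]}\mathbf{X}^{[\mathcal{T}\setminus\hat{\mathcal{T}}_{l+1}]} + \mathbf{N}\bigr),
\end{equation*}
so the problem reduces to controlling $\bigl\|\mathbf{X}^{[\mathcal{T}\setminus\hat{\mathcal{T}}_{l+1}]}\bigr\|_F$ in terms of $\|\mathbf{R}_{(l)}\|_F$ and $\eta$.

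The first and most delicate step is the analysis of Step 2\,A. Because $\mathcal{T}_b$ maximizes $\bigl\|(\Phi^H \mathbf{R}_{(l)})^{[\mathcal{T}_1]}\bigr\|_F$ over size-$s_c$ subsets of $\mathcal{T}_0$ and $|\mathcal{T}_0\cap\mathcal{T}|\geq s_c$, the correlation captured by $\mathcal{T}_b$ dominates that captured by any size-$s_c$ subset of $\mathcal{T}_0\cap\mathcal{T}$; an analogous optimality holds for $\mathcal{T}_c$ over $\{1,\dots,K\}\setminus\mathcal{T}_b$. I would split $\mathcal{T}\setminus\mathcal{T}_a$ into its parts inside and outside $\mathcal{T}_0$ (noting that $|\mathcal{T}\setminus\mathcal{T}_0|\leq \bar{s}-s_c$, so both pieces fit within the size budgets of $\mathcal{T}_b$ and $\mathcal{T}_c$) and combine the two optimality conditions to conclude an inequality of the form
\begin{equation*}
\bigl\|(\Phi^H \mathbf{R}_{(l)})^{[\mathcal{T}\setminus\mathcal{T}_a]}\bigr\|_F \leq \bigl\|(\Phi^H \mathbf{R}_{(l)})^{[(\mathcal{T}_b\cup\mathcal{T}_c)\setminus\mathcal{T}]}\bigr\|_F.
\end{equation*}
Substituting $\mathbf{R}_{(l)}=\Phi_{[\mathcal{T}\setminus\hat{\mathcal{T}}_l]}\mathbf{X}^{[\mathcal{T}\setminus\hat{\mathcal{T}}_l]}-\Phi_{[\hat{\mathcal{T}}_l]}\bigl(\hat{\mathbf{X}}_{(l)}^{[\hat{\mathcal{T}}_l]}-\mathbf{X}^{[\hat{\mathcal{T}}_l]}\bigr)+\mathbf{N}$ and applying (\ref{eq:inequality1}) and (\ref{eq:inequality5}) then converts this into a bound on $\bigl\|\mathbf{X}^{[\mathcal{T}\setminus\mathcal{T}_a]}\bigr\|_F$ in terms of $\|\mathbf{X}-\hat{\mathbf{X}}_{(l)}\|_F$ and $\eta$.

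The remaining steps are more mechanical. For Step 2\,B I would write $\mathbf{Z}^{[\mathcal{T}_a]}-\mathbf{X}^{[\mathcal{T}_a]}=\Phi_{[\mathcal{T}_a]}^{\dagger}(\Phi_{[\mathcal{T}\setminus\mathcal{T}_a]}\mathbf{X}^{[\mathcal{T}\setminus\mathcal{T}_a]}+\mathbf{N})$ and use (\ref{eq:inequality4}) together with (\ref{eq:inequality5}) and the bound from Step 2\,A to control $\|\mathbf{Z}-\mathbf{X}\|_F$. For Step 2\,C, the same constrained-maximization argument as in Step 2\,A---now comparing $\hat{\mathcal{T}}_{l+1}$ against $\mathcal{T}$ with the same split into a $\mathcal{T}_0$-part and a free part---yields $\bigl\|\mathbf{X}^{[\mathcal{T}\setminus\hat{\mathcal{T}}_{l+1}]}\bigr\|_F \leq \alpha\,\|\mathbf{Z}-\mathbf{X}\|_F$ with an explicit block-RIP-dependent constant $\alpha$. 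Plugging this back into the opening display and using part~4 of Lemma \ref{Inequalities-Over-Block-RIP} together with $\|(\mathbf{I}-\mathbf{P})\mathbf{N}\|_F\leq\eta$ composes everything into (\ref{eq:Iteration_Lemma}), with $C_1,C_2$ read off as the tabulated block-RIP expressions in Table \ref{tab:The-detailed-expressions-constants}. The distortion bound (\ref{eq:first_distortion}) then follows by induction on $l$: unrolling (\ref{eq:Iteration_Lemma}) gives a geometric bound on $\|\mathbf{R}_{(l+1)}\|_F$, and converting the residue back to a signal error via (\ref{eq:inequality1}), combined with the initial estimate $\|\mathbf{R}_{(0)}\|_F = \|\mathbf{Y}\|_F \leq \sqrt{1+\delta_{\bar s|d}}\,\|\mathbf{X}\|_F + \eta$, delivers the displayed form.

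The main obstacle will be the combinatorial bookkeeping in Step 2\,A. Unlike vanilla SP, where the identification is an unconstrained top-$\bar{s}$ selection, here one piece ($\mathcal{T}_b$) is restricted to $\mathcal{T}_0$ while the other ($\mathcal{T}_c$) is restricted to avoid $\mathcal{T}_b$, so one cannot simply compare $\mathcal{T}_b\cup\mathcal{T}_c$ against $\mathcal{T}$ as a single block. One must split $\mathcal{T}$ correspondingly into $\mathcal{T}\cap\mathcal{T}_0$ and $\mathcal{T}\setminus\mathcal{T}_0$, compare the former against $\mathcal{T}_b$ and the latter against $\mathcal{T}_c$, and invoke the prior-quality inequality $|\mathcal{T}\cap\mathcal{T}_0|\geq s_c$ to guarantee feasibility of both comparisons. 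Tracking the resulting block-RIP orders through this split---supports of size up to $2\bar{s}$ appearing in the projection step and up to $3\bar{s}$ in the off-diagonal cross-correlation step, which is exactly what determines the parameter $s_1$ in the bound---is the technically delicate piece that must be carried out carefully to match the constants in Table \ref{tab:The-detailed-expressions-constants}.
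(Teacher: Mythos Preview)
Your proposal is correct and follows essentially the same three-stage decomposition as the paper (bound $\|\mathbf{R}_{(l+1)}\|_F$ via $\|\mathbf{X}^{[\mathcal{T}\setminus\hat{\mathcal{T}}_{l+1}]}\|_F$, then via $\|\mathbf{X}^{[\mathcal{T}\setminus\mathcal{T}_a]}\|_F$, then close the loop back to $\|\mathbf{R}_{(l)}\|_F$), using the constrained optimality of Steps~2A and~2C exactly as you outline. One small correction to your last paragraph: the $2\bar{s}$-type unions ($|\mathcal{T}_a|$, $|\mathcal{T}\cup\hat{\mathcal{T}}_l|$) govern $s_1$ while the $3\bar{s}$-type union $|\mathcal{T}\cup\mathcal{T}_a|$ governs the separate parameter $s_2$, and the reductions $s_1\leq 2\bar{s}$, $s_2\leq 3\bar{s}$ come from the fact that each of $\mathcal{T}$, $\hat{\mathcal{T}}_l$, $\mathcal{T}_b\cup\mathcal{T}_c$ already contains at least $s_c$ indices of $\mathcal{T}_0$, forcing overlaps inside $\mathcal{T}_0$ that shrink the unions.
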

\begin{proof}
See Appendix \ref{sub:Proof-of-key-lemma}.
\end{proof}

\begin{table}
\begin{centering}
{\small{}}%
\begin{tabular}{|c|c|}
\hline 
{\small{}$C_{1}$} & {\small{}$\frac{2\delta_{s_{2}|d}\sqrt{1+\delta_{s_{2}|d}}\sqrt{1-\delta_{s_{2}|d}+4\delta_{s_{2}|d}^{2}+4\delta_{s_{2}|d}^{3}}}{\left(1-\delta_{s_{2}|d}\right)^{2}}$}\tabularnewline
\hline 
\multirow{2}{*}{{\small{}$C_{2}$}} & {\small{}$2\sqrt{\frac{1+\delta_{\bar{s}|d}}{1-\delta_{s_{1}|d}}}+\sqrt{1+\delta_{\bar{s}|d}}\sqrt{1+\frac{4\delta_{s_{2}|d}^{2}(1+\delta_{s_{2}|d})}{1-\delta_{s_{1}|d}}}$}\tabularnewline
 & {\small{}$\times\left(\frac{2\delta_{s_{2}|d}}{\left(1-\delta_{\bar{s}|d}\right)\sqrt{1-\delta_{s_{1}|d}}}+\frac{2\sqrt{1+\delta_{\bar{s}|d}}}{1-\delta_{\bar{s}|d}}\right)+1$}\tabularnewline
\hline 
$C_{3}(l)$ & $\frac{(C_{1})^{l}\left(1-\frac{C_{2}}{1-C_{1}}\right)+\frac{C_{2}}{1-C_{1}}+1}{\sqrt{1-\delta_{s_{1}|d}}}$\tabularnewline
\hline 
{\small{}$C_{4}$} & {\small{}$\frac{(1-C_{1}+C_{2})}{(1-C_{1})\sqrt{1-\delta_{s_{1}|d}}}$}\tabularnewline
\hline 
{\small{}where} & {\small{}$s_{1}\triangleq2\bar{s}+\min\left(0,\,\left|\mathcal{T}_{0}\right|-2s_{c}\right)$}\tabularnewline
 & {\small{}$s_{2}\triangleq3\bar{s}+\min\left(0,\,\left|\mathcal{T}_{0}\right|-3s_{c}\right)$}\tabularnewline
\hline 
\end{tabular}
\par\end{centering}{\small \par}

\protect\caption{\label{tab:The-detailed-expressions-constants}The detailed expressions
for constants.}

\end{table}

\begin{comment}
\begin{align*}
C_{1} & =\frac{4\delta_{s_{2}|d}\sqrt{1+\delta_{\bar{s}|d}}}{\left(1-\delta_{\bar{s}|d}\right)\sqrt{1-\delta_{s_{1}|d}}}\left(1+\frac{\delta_{s_{2}|d}}{1-\delta_{s_{1}|d}}\right)
\end{align*}
\begin{align*}
C_{2} & =1+2\sqrt{\frac{1+\delta_{\bar{s}|d}}{1-\delta_{s_{1}|d}}}+4\sqrt{1+\delta_{\bar{s}|d}}\left(1+\frac{\delta_{s_{2}|d}}{1-\delta_{s_{1}|d}}\right)\times\\
 & \left(\frac{2\delta_{s_{2}|d}}{\left(1-\delta_{\bar{s}|d}\right)\sqrt{1-\delta_{s_{1}|d}}}+\frac{2\sqrt{1+\delta_{\bar{s}|d}}}{1-\delta_{\bar{s}|d}}\right)
\end{align*}
\[
C_{2}=1+2\sqrt{\frac{1+\delta_{\bar{s}|d}}{1-\delta_{s_{1}|d}}}+4\sqrt{1+\delta_{\bar{s}|d}}\left(1+\frac{\delta_{s_{2}|d}}{1-\delta_{s_{1}|d}}\right)\left(\frac{2\delta_{s_{2}|d}}{\left(1-\delta_{\bar{s}|d}\right)\sqrt{1-\delta_{s_{1}|d}}}+\frac{2\sqrt{1+\delta_{\bar{s}|d}}}{1-\delta_{\bar{s}|d}}\right)
\]
\end{comment}

Note that equations (\ref{eq:Iteration_Lemma})-(\ref{eq:first_distortion})
in Lemma \ref{Iteration-Property-II} are very important to derive
the distortion bound and the convergence speed in Theorem \ref{Distortion-BoundDenote-MSP}
and \ref{Convergence-SpeedSuppose-II} respectively. For instance,
if $C_{1}<1$, then the distortion (i.e., {\small{}$\left\Vert \mathbf{X}-\hat{\mathbf{X}}_{(l)}\right\Vert _{F}$}
in (\ref{eq:first_distortion})) turns to decrease exponentially with
ratio $C_{1}$ in the iterations of the proposed M-SP. Based on Lemma
\ref{Iteration-Property-II}, we obtain the following recovery distortion
bound for the proposed M-SP algorithm.

\begin{thm}
[Recovery Distortion Bound]\label{Distortion-BoundDenote-MSP}Suppose
the $s_{2}$-th block RIP constant $\delta_{s_{2}|d}$ satisfies $\delta_{s_{2}|d}<0.246$.
The following properties are true regarding Algorithm \ref{alg:Modified-SP}:

(i) The final obtained solution $\mathbf{\hat{X}}$ satisfies

\begin{equation}
\left\Vert \mathbf{X}-\hat{\mathbf{X}}\right\Vert _{F}\leq\max\left(C_{4}\eta,\quad\frac{\gamma+\eta}{\sqrt{1-\delta_{s_{1}|d}}}\right).\label{eq:Distortion_Bound}
\end{equation}

(ii) If the signal $\mathbf{X}$ satisfies{\small{} $\min_{k\in\mathcal{T}}\left\Vert \mathbf{X}[k]\right\Vert _{F}>\max\left(C_{4}\eta,\quad\frac{\gamma+\eta}{\sqrt{1-\delta_{s_{1}|d}}}\right)$},
then final obtained solution $\mathbf{\hat{X}}$ further satisfies
\begin{equation}
\left\Vert \mathbf{X}-\hat{\mathbf{X}}\right\Vert _{F}\leq\frac{1}{\sqrt{1-\delta_{\bar{s}|d}}}\eta\label{eq:refined_bound}
\end{equation}
where $\gamma$ is the threshold parameter in Algorithm \ref{alg:Modified-SP},
$s_{1}$, $s_{2}$ and $C_{4}$ are given in Table \ref{sub:Proof-of-key-lemma}. \end{thm}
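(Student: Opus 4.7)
The plan is to split Part (i) into two cases according to which branch of the stopping rule in Step 2 F is triggered, and then to derive Part (ii) from Part (i) together with a support-recovery argument plus a least-squares residual analysis.

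In both cases the starting point is the identity $\Phi(\mathbf{X}-\hat{\mathbf{X}}_{(l)}) = \mathbf{R}_{(l)} - \mathbf{N}$, which follows from $\mathbf{Y}=\Phi\mathbf{X}+\mathbf{N}$ and from the fact that Step 2 E writes $\mathbf{R}_{(l)} = \mathbf{Y} - \Phi_{[\hat{\mathcal{T}}_{l}]}\hat{\mathbf{X}}_{(l)}^{[\hat{\mathcal{T}}_{l}]} = \mathbf{Y} - \Phi\hat{\mathbf{X}}_{(l)}$, since the estimate is zero off $\hat{\mathcal{T}}_l$. The crucial structural observation is that the chunk support of $\mathbf{X}-\hat{\mathbf{X}}_{(l)}$ lies in $\mathcal{T}\cup\hat{\mathcal{T}}_l$ and its cardinality is at most $s_1$: by Definition \ref{Temporarily-Correlated-Sparse} we have $|\mathcal{T}\cap\mathcal{T}_0|\geq s_c$, and by construction (Step 2 C) $\hat{\mathcal{T}}_l$ also contains exactly $s_c$ indices from $\mathcal{T}_0$, so the two $s_c$-subsets of $\mathcal{T}_0$ must overlap in at least $\max(0,\,2s_c-|\mathcal{T}_0|)$ indices, giving $|\mathcal{T}\cup\hat{\mathcal{T}}_l|\leq 2\bar{s}+\min(0,\,|\mathcal{T}_0|-2s_c)=s_1$. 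Applying the lower bound in (\ref{eq:inequality1}) of Lemma \ref{Inequalities-Over-Block-RIP} to $\mathbf{X}-\hat{\mathbf{X}}_{(l)}$ then yields
\[
\|\mathbf{X}-\hat{\mathbf{X}}_{(l)}\|_F\leq\frac{\|\mathbf{R}_{(l)}-\mathbf{N}\|_F}{\sqrt{1-\delta_{s_1|d}}}\leq\frac{\|\mathbf{R}_{(l)}\|_F+\eta}{\sqrt{1-\delta_{s_1|d}}}.
\]

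For the first stopping branch, $\|\mathbf{R}_{(l+1)}\|_F\leq\gamma$ plugs directly into the displayed inequality to give $\|\mathbf{X}-\hat{\mathbf{X}}\|_F\leq(\gamma+\eta)/\sqrt{1-\delta_{s_1|d}}$. For the second branch, $\|\mathbf{R}_{(l+1)}\|_F\geq\|\mathbf{R}_{(l)}\|_F$ combined with Lemma \ref{Iteration-Property-II} gives $\|\mathbf{R}_{(l)}\|_F\leq C_1\|\mathbf{R}_{(l)}\|_F+C_2\eta$; the assumption $\delta_{s_2|d}<0.246$ is precisely what is needed to force $C_1<1$, so we can solve to obtain $\|\mathbf{R}_{(l)}\|_F\leq \tfrac{C_2}{1-C_1}\eta$, which substituted above collapses, after using the definition of $C_4$ in Table \ref{tab:The-detailed-expressions-constants}, to $\|\mathbf{X}-\hat{\mathbf{X}}\|_F\leq C_4\eta$. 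Taking the maximum of the two case bounds yields Part (i).

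For Part (ii), assume the minimum chunk magnitude strictly exceeds the bound from Part (i). If some $k\in\mathcal{T}$ were missing from $\hat{\mathcal{T}}$, then $\hat{\mathbf{X}}[k]=\mathbf{0}$ and so $\|\mathbf{X}-\hat{\mathbf{X}}\|_F\geq\|\mathbf{X}[k]\|_F$, contradicting Part (i); hence $\mathcal{T}\subseteq\hat{\mathcal{T}}$. Because $|\hat{\mathcal{T}}|\leq\bar{s}$, we can write $\mathbf{Y}=\Phi_{[\hat{\mathcal{T}}]}\mathbf{X}^{[\hat{\mathcal{T}}]}+\mathbf{N}$ (treating the chunks of $\mathbf{X}$ indexed by $\hat{\mathcal{T}}\setminus\mathcal{T}$ as zero), and the least-squares update in Step 2 D gives $\hat{\mathbf{X}}^{[\hat{\mathcal{T}}]}-\mathbf{X}^{[\hat{\mathcal{T}}]}=\Phi_{[\hat{\mathcal{T}}]}^\dagger\mathbf{N}$. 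Invoking (\ref{eq:inequality4}) of Lemma \ref{Inequalities-Over-Block-RIP} with set size $\bar{s}$ delivers the refined bound $\|\mathbf{X}-\hat{\mathbf{X}}\|_F\leq\eta/\sqrt{1-\delta_{\bar{s}|d}}$.

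The main obstacle is the cardinality argument $|\mathcal{T}\cup\hat{\mathcal{T}}_l|\leq s_1$, since this is the step that tightly couples the algorithm's adaptive support-construction rule to the quality parameter $s_c$ in Definition \ref{Temporarily-Correlated-Sparse} and is what makes the $\delta_{s_1|d}$ block-RIP constants (instead of the naive $\delta_{2\bar{s}|d}$) the right quantity to govern the distortion; once this is in hand, the remaining manipulations are a routine combination of the two stopping cases with Lemma \ref{Iteration-Property-II} and a standard least-squares perturbation estimate.
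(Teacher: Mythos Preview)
Your proposal is correct and follows essentially the same route as the paper: a two-case split on the stopping condition, the block-RIP lower bound on $\|\Phi(\mathbf{X}-\hat{\mathbf{X}}_{(l)})\|_F$ via the cardinality estimate $|\mathcal{T}\cup\hat{\mathcal{T}}_l|\leq s_1$ (which is exactly equation (\ref{eq:third_dim}) in the paper's proof of Lemma \ref{Iteration-Property-II}), the fixed-point bound $\|\mathbf{R}_{(l)}\|_F\leq \tfrac{C_2}{1-C_1}\eta$ from (\ref{eq:Iteration_Lemma}), and the contradiction/least-squares argument for Part (ii). One harmless slip: Step 2 C guarantees that $\hat{\mathcal{T}}_l$ contains \emph{at least} (not exactly) $s_c$ indices from $\mathcal{T}_0$, since $\mathcal{T}_2$ is only required to avoid $\mathcal{T}_1$, not all of $\mathcal{T}_0$; this only strengthens your overlap bound.
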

\begin{proof}
See Appendix \ref{sub:Proof-of-Theorem-Distortion_Bound}.
\end{proof}

\begin{remrk}
[Interpretation of Theorem \ref{Distortion-BoundDenote-MSP}]Note
that $\delta_{s_{2}|d}<0.246$ is to ensure $C_{1}<1$ in (\ref{eq:Iteration_Lemma})-(\ref{eq:first_distortion}).
When there is no noise in the system, i.e., $\eta\triangleq\left\Vert \mathbf{N}\right\Vert _{F}=0$
and $\gamma$ is set to be $\gamma=0$ , then perfect signal recovery,
i.e., $\mathbf{X}=\hat{\mathbf{X}}$, will be achieved from (\ref{eq:Distortion_Bound}).
Based on Theorem \ref{Distortion-BoundDenote-MSP}, we have the following
discussion regarding the proposed M-SP algorithm:\end{remrk}
\begin{itemize}
\item \textbf{Backward Compatibility with Conventional SP} \cite{dai2009subspace,chang2014improved}:
Note that when $d=1$, $L=1$ and $s_{c}=0$, the proposed M-SP will
be reduced to the conventional SP \cite{dai2009subspace} (except
that we have more sophisticated stopping conditions as explained in
footnote \ref{fn:stopping}). In such a scenario, the requirement
on the RIP constant in Theorem \ref{Distortion-BoundDenote-MSP} becomes
$\delta_{3\bar{s}}<0.246$, which is slightly better (a slightly weaker
requirement) than the so-far best known bound ($\delta_{3\bar{s}}<0.2412$)
derived for SP in Thm. 3.8 of \cite{chang2014improved} . This is
because we have combined the techniques in these pioneering works
\cite{needell2009cosamp,chang2014improved,dai2009subspace}, to derive
Lemma \ref{Iteration-Property-II} and Theorem \ref{Distortion-BoundDenote-MSP}
(please refer to Appendix \ref{sub:Proof-of-key-lemma} for the details).
A detailed comparison between the proposed M-SP and conventional SP
\cite{dai2009subspace} is given in Table \ref{tab:Comparison-between-SP-MSP}. 
\item \textbf{How Prior Support Quality $s_{c}$ Affects Performance}: From
Theorem \ref{Distortion-BoundDenote-MSP}, a larger $s_{c}$ (a higher
quality of the prior support $\mathcal{T}_{0}$) would achieve a better
CS recovery performance. For instance, suppose $\Phi$ is a i.i.d.
sub-Gaussian random%
\footnote{Note that the randomized approach is a commonly adopted method to
generate the CS measurement matrix for a good RIP property \cite{eldar2009robust}. %
} matrix \cite{eldar2009robust}, from \cite{eldar2009robust}, the
number of measurements $M$ to achieve the $s_{2}$-th order block-RIP
with $\delta_{s_{2}}=\delta$, is given by $M=\mathcal{O}(s_{2}d\ln\delta^{-1}+\delta^{-1}s_{2}\log K)$.
On the other hand, $s_{2}\triangleq3\bar{s}+\min\left(0,\,\left|\mathcal{T}_{0}\right|-3s_{c}\right)$
is monotonically decreasing as $s_{c}$ increases when $\frac{\left|\mathcal{T}_{0}\right|}{3}\leq s_{c}\leq\left|\mathcal{T}_{0}\right|$.
Therefore, a larger $s_{c}$ would lead to a weaker requirement on
the number of measurements $M$ to achieve the desired performance
in Theorem \ref{Distortion-BoundDenote-MSP}. 
\end{itemize}

We further have the following result regarding the convergence speed
of the proposed M-SP algorithm (Algorithm \ref{alg:Modified-SP}). 
\begin{thm}
[Convergence Speed]\label{Convergence-SpeedSuppose-II}Denote $\rho$
as the total signal energy, i.e., $\rho=\left\Vert \mathbf{X}\right\Vert _{F}^{2}$.
Suppose $\rho>\left(\frac{C_{2}+C_{1}-1}{1-C_{1}}\eta\right)^{2}$
and $\gamma>\frac{C_{2}\eta}{1-C_{1}}$. If $\delta_{s_{2}|d}<0.246$,
then Step 2 of Algorithm \ref{alg:Modified-SP} will stop with no
more than $n_{co}$ iterations where $n_{co}$ is given by

\begin{equation}
n_{co}=\log_{C_{1}}\left[\frac{\gamma-\frac{C_{2}\eta}{1-C_{1}}}{\sqrt{1+\delta_{\bar{s}|d}}\rho^{\frac{1}{2}}+\eta-\frac{C_{2}\eta}{1-C_{1}}}\right].\label{eq:convergence_steps}
\end{equation}
\end{thm}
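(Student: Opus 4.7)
The plan is to show that the residue norm $\|\mathbf{R}_{(l)}\|_F$ decays geometrically at rate $C_1$ and then to solve for the smallest iteration index at which it falls below the threshold $\gamma$. The key ingredient is the one-step recursion (\ref{eq:Iteration_Lemma}) from Lemma \ref{Iteration-Property-II}, namely $\|\mathbf{R}_{(l+1)}\|_F \leq C_1 \|\mathbf{R}_{(l)}\|_F + C_2 \eta$, together with the fact (already invoked in the proof of Theorem \ref{Distortion-BoundDenote-MSP}) that $\delta_{s_2|d} < 0.246$ forces $C_1 < 1$.

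First, I would bound the initial residue. Since $\mathbf{R}_{(0)} = \mathbf{Y} = \Phi\mathbf{X} + \mathbf{N}$, the triangle inequality combined with the upper block-RIP inequality in Lemma \ref{Inequalities-Over-Block-RIP} gives $\|\mathbf{R}_{(0)}\|_F \leq \sqrt{1+\delta_{\bar{s}|d}}\sqrt{\rho} + \eta$. Next, I would unroll (\ref{eq:Iteration_Lemma}) for $l$ steps and sum the resulting geometric series (convergent because $C_1 < 1$), obtaining
\[
\|\mathbf{R}_{(l)}\|_F \;\leq\; C_1^{l}\Bigl(\|\mathbf{R}_{(0)}\|_F - \tfrac{C_2 \eta}{1-C_1}\Bigr) + \tfrac{C_2 \eta}{1-C_1}.
\]
Substituting the bound on $\|\mathbf{R}_{(0)}\|_F$ and requiring the right-hand side to be at most $\gamma$ yields a condition of the form $C_1^{l} \leq (\text{num})/(\text{den})$, whose logarithm base $C_1$ (a decreasing function, since $C_1<1$) produces exactly the expression for $n_{co}$ in (\ref{eq:convergence_steps}). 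Once $l$ reaches this value, the stopping criterion $\|\mathbf{R}_{(l+1)}\|_F \leq \gamma$ in Step 2 F of Algorithm \ref{alg:Modified-SP} is necessarily triggered.

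The main subtlety — and it is purely bookkeeping rather than a hard obstacle — is verifying that the two hypothesized inequalities are precisely what is needed to make $n_{co}$ well defined and positive. The assumption $\gamma > \frac{C_2 \eta}{1-C_1}$ immediately makes the numerator $\gamma - \frac{C_2\eta}{1-C_1}$ strictly positive. For the denominator $\sqrt{1+\delta_{\bar{s}|d}}\sqrt{\rho} + \eta - \frac{C_2 \eta}{1-C_1}$, I would use $\sqrt{1+\delta_{\bar{s}|d}} \geq 1$ to reduce positivity to $\sqrt{\rho} + \eta > \frac{C_2 \eta}{1-C_1}$, which after rearrangement is exactly $\sqrt{\rho} > \frac{C_2 + C_1 - 1}{1-C_1}\eta$, i.e., the stated hypothesis on $\rho$. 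The same hypothesis guarantees that the bracketed term $\|\mathbf{R}_{(0)}\|_F - \frac{C_2\eta}{1-C_1}$ in the geometric bound is positive, so that dividing through and applying $\log_{C_1}$ preserves the inequality direction. Since Lemma \ref{Iteration-Property-II} has already absorbed all of the nontrivial RIP estimates, no further RIP manipulations are required at this stage.
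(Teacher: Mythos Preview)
Your proposal is correct and follows essentially the same approach as the paper's proof: bound $\|\mathbf{R}_{(0)}\|_F$ via the block-RIP, unroll the recursion from Lemma \ref{Iteration-Property-II} into a geometric series, and solve for the iteration count at which the residue drops below $\gamma$. Your explicit verification that the hypotheses on $\rho$ and $\gamma$ make both the numerator and denominator in $n_{co}$ positive is more detailed than the paper's own argument, which simply asserts the conclusion from the unrolled bound.
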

\begin{proof}
See Appendix \ref{sub:Proof-of-Theorem-con}.
\end{proof}

\begin{remrk}
[Interpretation of Theorem \ref{Convergence-SpeedSuppose-II}]Theorem
\ref{Convergence-SpeedSuppose-II} gives an upper bound on the number
of iterations in the proposed M-SP. Compared with the conventional
SP \cite{dai2009subspace}, our derived convergence result further
cover the cases with measurement noise, which is not discussed by
conventional SP \cite{dai2009subspace} (see Table \ref{tab:Comparison-between-SP-MSP}
for the detailed comparison). On the other hand, from Theorem \ref{Convergence-SpeedSuppose-II},
we obtain that Algorithm \ref{alg:Modified-SP} will converge in $\mathcal{O}\left(\log\mbox{SNR}\right)$
steps in the high SNR (i.e., $\mbox{SNR}\triangleq\frac{\rho}{\eta^{2}}\rightarrow\infty$)
regimes.
\end{remrk}

\section{Robustness to Model Mismatch}

In Section III, we have proposed an M-SP algorithm to exploit the
prior support $\mathcal{T}_{0}$ adaptively based on the support quality
information $s_{c}$. However, in practice, there may be cases with
\emph{incorrect} statistical information $s_{c}$, i.e., $|\mathcal{T}_{0}\bigcap\mathcal{T}|<s_{c}$.
In such scenarios, the proposed M-SP may perform badly. We use the
example below to illustrate this fact. 
\begin{example}
[Algorithm \ref{alg:Modified-SP} with Model Mismatch]\label{Model-MismatchesSuppose}Consider
quality information $s_{c}$ wrongly indicates the quality of the
prior support $\mathcal{T}_{0}$, i.e., $\left|\mathcal{T}_{0}\bigcap\mathcal{T}\right|<s_{c}$,
and $\left|\mathcal{T}\right|=\bar{s}$ (i.e., in $\mathcal{T}$,
only less than $s_{c}$ indices are from $\mathcal{T}_{0}$). With
the proposed M-SP algorithm, from Step 2C, there will always be $s_{c}$
indices selected from $\mathcal{T}_{0}$ while Algorithm \ref{alg:Modified-SP}
will select no more than $\bar{s}-s_{c}$ indices from $\{1,..,K\}\backslash\mathcal{T}_{0}$.
Consequently, the final identified signal support $\hat{\mathcal{T}}$
will always be incorrect. 
\end{example}

From the above example, the performance of the proposed M-SP is sensitive
to model mismatch with incorrect $s_{c}$. In this section, we shall
further propose a \emph{conservative} M-SP approach which will be
robust to scenarios with possible model mismatch. \vspace{-0.5cm}

\begin{center}
\ovalbox{\begin{minipage}[t]{1\columnwidth}%
Challenge 3: Robust algorithm design to combat model mismatch with
incorrect prior support information $s_{c}$.%
\end{minipage}}
\par\end{center}

\subsection{Proposed Conservative M-SP Algorithm}

\begin{table}
\begin{centering}
{\small{}}%
\begin{tabular}{|c|c|}
\hline 
{\small{}$C_{5}$} & {\small{}$\frac{2\delta_{s_{3}|d}\sqrt{1+\delta_{s_{3}|d}}\sqrt{1-\delta_{s_{3}|d}+4\delta_{s_{3}|d}^{2}+4\delta_{s_{3}|d}^{3}}}{\left(1-\delta_{s_{3}|d}\right)^{2}}$}\tabularnewline
\hline 
\multirow{2}{*}{{\small{}$C_{6}$}} & {\small{}$2\sqrt{\frac{1+\delta_{\bar{s}|d}}{1-\delta_{2\bar{s}+s_{c}|d}}}+\sqrt{1+\delta_{\bar{s}|d}}\sqrt{1+\frac{4\delta_{3\bar{s}+s_{c}|d}^{2}(1+\delta_{3\bar{s}+s_{c}|d})}{1-\delta_{2\bar{s}+s_{c}|d}}}$}\tabularnewline
 & {\small{}$\times\left(\frac{2\delta_{3\bar{s}+s_{c}|d}}{\left(1-\delta_{\bar{s}|d}\right)\sqrt{1-\delta_{2\bar{s}|d}}}+\frac{2\sqrt{1+\delta_{\bar{s}|d}}}{1-\delta_{\bar{s}|d}}\right)+1$}\tabularnewline
\hline 
{\small{}$C_{7}$} & {\small{}$\frac{(1-C_{5}+C_{6})}{(1-C_{5})\sqrt{1-\delta_{2\bar{s}|d}}}$}\tabularnewline
\hline 
where & {\small{}$s_{3}\triangleq3\bar{s}+s_{c}+\min\left(0,\left|\mathcal{T}_{0}\right|-\left|\mathcal{T}_{0}\bigcap\mathcal{T}\right|-s_{c}\right)$}\tabularnewline
\hline 
\end{tabular}
\par\end{centering}{\small \par}

\protect\caption{\label{tab:The-detailed-expressions-constants-conservative}The detailed
expressions for constants.}
\end{table}

The conservative M-SP algorithm is obtained by redesigning\emph{ }two
substeps in Step 2 of Algorithm \ref{alg:Modified-SP}. The details
are given in Algorithm \ref{alg:Conservative-Modified-SP}.

\begin{algorithm}
Obtained from Algorithm \ref{alg:Modified-SP} with Step 2A, and Step
2C replaced by the following substeps, respectively:
\begin{itemize}
\item \textbf{Step 2A} (\emph{Support Merge}): Set $\tilde{s}=s_{c}-\left|\hat{\mathcal{T}}_{l}\bigcap\mathcal{T}_{0}\right|$
and merge $\mathcal{T}_{a}=\hat{\mathcal{T}}_{l}\bigcup\mathcal{T}_{b}\bigcup\mathcal{T}_{c}$,
where
\begin{equation}
\mathcal{T}_{b}=\begin{cases}
\arg\max_{|\mathcal{T}_{1}|=\tilde{s},\mathcal{T}_{1}\subseteq\mathcal{T}_{0}}\left\Vert \left(\Phi^{H}\mathbf{R}_{(l)}\right)^{[\mathcal{T}_{1}]}\right\Vert _{F} & \tilde{s}>0\\
\emptyset & \tilde{s}\leq0
\end{cases}.\label{eq:first_region}
\end{equation}
\begin{equation}
\mathcal{T}_{c}=\arg\max_{|\mathcal{T}_{2}|=\bar{s}}\left\Vert \left(\Phi^{H}\mathbf{R}_{(l)}\right)^{[\mathcal{T}_{2}]}\right\Vert _{F}.\label{eq:second_region}
\end{equation}

\item \textbf{Step 2C} (\emph{Support Refinement}): Select $\hat{\mathcal{T}}_{l+1}=\arg\max_{|\mathcal{T}|=\bar{s}}\left\Vert \mathbf{Z}^{[\mathcal{T}]}\right\Vert $
.
\end{itemize}
\protect\caption{\label{alg:Conservative-Modified-SP}Conservative M-SP to Solve Challenge
3.}
\end{algorithm}

\begin{remrk}
[Interpretation of Algorithm \ref{alg:Conservative-Modified-SP}]Note
that in Step 2A of the conservative M-SP, the newly added support
contains two parts, $\mathcal{T}_{b}$ and $\mathcal{T}_{c}$, where
$\mathcal{T}_{b}$ is selected from $\mathcal{T}_{0}$ with size $s_{c}-\left|\hat{\mathcal{T}}_{l}\bigcap\mathcal{T}_{0}\right|$
(compared with $s_{c}$ in M-SP), and $\mathcal{T}_{c}$ is selected
from the entire index space $\{1,..,K\}$ with size $\bar{s}$ (compared
with size $\bar{s}-s_{c}$ selected from $\{1,..,K\}\backslash\mathcal{T}_{0}$
in M-SP). These designs give us opportunities to further search for
support outside $\mathcal{T}_{0}$ when the information of $s_{c}$
is incorrect (i.e., $|\mathcal{T}_{0}\bigcap\mathcal{T}|<s_{c}$).
On the other hand, in Step 2C of the conservative M-SP, the updated
support $\hat{\mathcal{T}}_{l+1}$ with size $\bar{s}$ is selected
from the entire index set $\{1,..,K\}$ (compared with the two part
structure in M-SP). Using this design, even if $s_{c}$ wrongly indicates
the quality of $\mathcal{T}_{0}$, we still have chances to correctly
identify the signal support. Note that the proposed conservative M-SP
still exploits the prior support information but in a conservative
way:\end{remrk}
\begin{itemize}
\item \textbf{Exploitation of Prior Support} $(\mathcal{T}_{0},\, s_{c})$:
For instance, in step 2A, equation (\ref{eq:first_region}) ensures
the selected support candidate $\mathcal{T}_{a}$ contains at least
$s_{c}$ indices from $\mathcal{T}_{0}$. 
\item \textbf{Conservativeness in Exploiting} $(\mathcal{T}_{0},\, s_{c})$:
Compared with the original M-SP, the proposed Algorithm \ref{alg:Conservative-Modified-SP}
exploits $(\mathcal{T}_{0},s_{c})$ in a much more conservative way.
First, in $\mathcal{T}_{a}$ obtained in Step 2, although $\mathcal{T}_{0}$
has already contributed $s_{c}$ indices, another $\bar{s}$ indices
are further selected from the \emph{entire} index space $\{1,..,K\}$
in (\ref{eq:second_region}). Second, the refined support $\hat{\mathcal{T}}_{l+1}$
is obtained from the \emph{entire} index space $\{1,..,K\}$ based
the maximum correlation criterion as in Step 2C (instead of always
selecting $s_{c}$ indices from $\mathcal{T}_{0}$ as in the original
M-SP). These designs allow opportunities to search for the signal
support outside $\mathcal{T}_{0}$. As a result, the proposed conservative
M-SP does not utilize $(\mathcal{T}_{0},\, s_{c})$ wholeheartedly
and hence, is exploiting $(\mathcal{T}_{0},\, s_{c})$ in a more conservatively
way (compared with the M-SP).
\end{itemize}

Recall Example \ref{Model-MismatchesSuppose} with model mismatch
(i.e., $|\mathcal{T}_{0}\bigcap\mathcal{T}|<s_{c}$). Using the conservative
M-SP, both Step 2A and Step 2C would select $\bar{s}$ indices from
the entire index space $\{1,..,K\}$ based on the maximum correlation
criterion \cite{dai2009subspace}. Therefore, the conservative M-SP
has a chance to identify more than $\bar{s}-s_{c}$ indices from $\{1,..,K\}\backslash\mathcal{T}_{0}$
and it is still likely that the correct support $\mathcal{T}$ can
be identified. Hence, the conservative M-SP is robust to model mismatch
with incorrect $s_{c}$. We formally discuss this fact in the next
Section.

\subsection{Performance Analysis of Conservative M-SP}

In this Section, we shall analyze the recovery performance of the
proposed conservative M-SP. Specifically, we give similar results
as in Section IV except that the the derived results in this section
\emph{do not require} the assumption that the information $s_{c}$
is correct. 
\begin{thm}
[Distortion Bound of Conservative M-SP]\label{Distortion-Bound-of-conservative_MSP}Suppose
the $s_{3}$-th order block-RIP constant $\delta_{s_{3}|d}$ satisfies
$\delta_{s_{3}|d}<0.246$. We obtain the following results regarding
Algorithm \ref{alg:Conservative-Modified-SP}:

(i) The obtained solution $\mathbf{\hat{X}}$ satisfies
\begin{equation}
\left\Vert \mathbf{X}-\hat{\mathbf{X}}\right\Vert _{F}\leq\max\left(C_{7}\eta,\quad\frac{\gamma+\eta}{\sqrt{1-\delta_{2\bar{s}|d}}}\right)\label{eq:dist_bound-conservative}
\end{equation}

(ii) If $\mathbf{X}$ satisfies {\small{}$\min_{k\in\mathcal{T}}\left\Vert \mathbf{X}[k]\right\Vert _{F}>\max\left(C_{7}\eta,\quad\frac{\gamma+\eta}{\sqrt{1-\delta_{s_{1}|d}}}\right)$,
}then $\mathbf{\hat{X}}$ further satisfies
\begin{equation}
\left\Vert \mathbf{X}-\hat{\mathbf{X}}\right\Vert _{F}\leq\frac{1}{\sqrt{1-\delta_{\bar{s}|d}}}\eta\label{eq:refined_bound-conservative}
\end{equation}
where $s_{3}$, $C_{5}$, $C_{6}$, $C_{7}$ depends on the block-RIP
constants and are given in Table \ref{tab:The-detailed-expressions-constants-conservative}. \end{thm}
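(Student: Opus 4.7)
The plan is to parallel the proof of Theorem~\ref{Distortion-BoundDenote-MSP} step by step: first establish an iteration lemma for Algorithm~\ref{alg:Conservative-Modified-SP} analogous to Lemma~\ref{Iteration-Property-II}, and then derive both parts of the theorem from the iteration lemma in essentially the same way as before. Specifically, I would first prove that, in the $l$-th iteration of the conservative M-SP,
\begin{align*}
\left\Vert \mathbf{R}_{(l+1)}\right\Vert _{F} &\leq C_{5}\left\Vert \mathbf{R}_{(l)}\right\Vert _{F}+C_{6}\eta, \\
\left\Vert \mathbf{X}-\hat{\mathbf{X}}_{(l+1)}\right\Vert _{F} &\leq (C_{5})^{l+1}\sqrt{\tfrac{1+\delta_{\bar{s}|d}}{1-\delta_{2\bar{s}|d}}}\left\Vert \mathbf{X}\right\Vert _{F}+\tilde{C}_{3}(l)\,\eta,
\end{align*}
where $\tilde{C}_{3}(l)$ plays the same role as $C_{3}(l)$ in Table~\ref{tab:The-detailed-expressions-constants}.

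The key combinatorial change is in Step~2A: the merged support $\mathcal{T}_{a}=\hat{\mathcal{T}}_{l}\cup\mathcal{T}_{b}\cup\mathcal{T}_{c}$ now has $|\mathcal{T}_{a}|\leq 2\bar{s}+s_{c}$, and $|\mathcal{T}\cup\mathcal{T}_{a}|\leq s_{3}$, since in the worst case all $\tilde{s}\leq s_{c}$ indices chosen from $\mathcal{T}_{0}$ lie in $\mathcal{T}_{0}\setminus\mathcal{T}$, giving $\min(s_{c},|\mathcal{T}_{0}\setminus\mathcal{T}|)$ additional indices beyond the $3\bar{s}$ from $\hat{\mathcal{T}}_{l}\cup\mathcal{T}_{c}\cup\mathcal{T}$. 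This is exactly the quantity encoded in the definition of $s_{3}$. In Step~2C the refined support $\hat{\mathcal{T}}_{l+1}$ is chosen from the entire index set with $|\hat{\mathcal{T}}_{l+1}|=\bar{s}$, so $|\mathcal{T}\cup\hat{\mathcal{T}}_{l+1}|\leq 2\bar{s}$. With these orders, the same chain of inequalities used in Appendix~\ref{sub:Proof-of-key-lemma}, instantiated with block-RIP constants $\delta_{\bar{s}|d}$, $\delta_{2\bar{s}|d}$, $\delta_{2\bar{s}+s_{c}|d}$ and $\delta_{s_{3}|d}$, produces the constants $C_{5}$ and $C_{6}$ of Table~\ref{tab:The-detailed-expressions-constants-conservative}. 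A small but essential observation is that enlarging the search space for $\mathcal{T}_{c}$ in Step~2A only increases the captured correlation energy, so the capture-type inequality used to bound $\|\mathbf{R}_{(l+1)}\|_F$ continues to hold without needing the correctness of $s_{c}$.

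Given the iteration lemma, part~(i) follows from exactly the argument of Theorem~\ref{Distortion-BoundDenote-MSP}. The hypothesis $\delta_{s_{3}|d}<0.246$ forces $C_{5}<1$, so the recursion contracts with fixed point $C_{6}\eta/(1-C_{5})$. The two cases in the max then correspond to the two stopping rules of Step~2F: if termination occurs via $\|\mathbf{R}_{(l+1)}\|_{F}\leq\gamma$, then $\|\mathbf{Y}-\Phi\hat{\mathbf{X}}\|_{F}\leq\gamma$ and applying (\ref{eq:inequality1}) with $|\hat{\mathcal{T}}\cup\mathcal{T}|\leq 2\bar{s}$ yields the bound $(\gamma+\eta)/\sqrt{1-\delta_{2\bar{s}|d}}$; if termination occurs via non-decreasing residuals, the fixed-point bound $\|\mathbf{R}\|_{F}\leq C_{6}\eta/(1-C_{5})$ together with the same RIP inverse inequality gives $C_{7}\eta$. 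For part~(ii), once the right-hand side of (\ref{eq:dist_bound-conservative}) is smaller than $\min_{k\in\mathcal{T}}\|\mathbf{X}[k]\|_{F}$, no true chunk can lie outside $\hat{\mathcal{T}}$, so $\mathcal{T}\subseteq\hat{\mathcal{T}}$; combined with $|\hat{\mathcal{T}}|=\bar{s}=|\mathcal{T}|$ this forces $\hat{\mathcal{T}}=\mathcal{T}$, and the LS step gives $\mathbf{X}-\hat{\mathbf{X}}=-\Phi_{[\mathcal{T}]}^{\dagger}\mathbf{N}$, yielding the refined bound via (\ref{eq:inequality4}).

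The main obstacle is the iteration lemma itself, since Steps~2A and 2C differ structurally from those in Algorithm~\ref{alg:Modified-SP}: in particular, $\mathcal{T}_{b}$ has variable size $\tilde{s}=s_{c}-|\hat{\mathcal{T}}_{l}\cap\mathcal{T}_{0}|$ and may overlap with $\hat{\mathcal{T}}_{l}$ or with the new $\mathcal{T}_{c}$, while $\mathcal{T}_{c}$ is no longer restricted to $\{1,\ldots,K\}\setminus\mathcal{T}_{b}$. Once the support sizes are bounded uniformly by $s_{3}$ (independent of whether $s_{c}$ is correct) and the standard capture/LS-refinement chain is replayed using Lemma~\ref{Inequalities-Over-Block-RIP}, the remainder of the proof is routine and mirrors Appendices~\ref{sub:Proof-of-key-lemma}--\ref{sub:Proof-of-Theorem-Distortion_Bound} verbatim.
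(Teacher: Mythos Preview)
Your proposal is correct and follows essentially the same approach as the paper's own proof: establish the iteration inequality $\|\mathbf{R}_{(l+1)}\|_F\leq C_5\|\mathbf{R}_{(l)}\|_F+C_6\eta$ by rerunning the argument of Lemma~\ref{Iteration-Property-II} with the new support-size bounds $|\mathcal{T}_a|\leq 2\bar{s}+s_c$, $|\mathcal{T}\cup\mathcal{T}_a|\leq s_3$, $|\mathcal{T}\cup\hat{\mathcal{T}}_l|\leq 2\bar{s}$, noting that the selection inequalities for Steps~2A and~2C hold regardless of whether $s_c$ is correct, and then repeat the two-stopping-rule case split exactly as in Appendix~\ref{sub:Proof-of-Theorem-Distortion_Bound}. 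One small correction for part~(ii): you only get $\mathcal{T}\subseteq\hat{\mathcal{T}}$ (since $|\mathcal{T}|\leq\bar{s}$, not necessarily equality), so the error is $-\Phi_{[\hat{\mathcal{T}}]}^{\dagger}\mathbf{N}$ rather than $-\Phi_{[\mathcal{T}]}^{\dagger}\mathbf{N}$; the bound via~(\ref{eq:inequality4}) with $|\hat{\mathcal{T}}|=\bar{s}$ is unaffected.
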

\begin{proof}
See Appendix \ref{sub:Proof-of-Theorem-conservative_MSP}.
\end{proof}

\begin{thm}
[Convergence Speed of Conservative M-SP]\label{Convergence-Speed-MSP}Denote
$\rho$ as the signal energy, i.e, $\rho=\left\Vert \mathbf{X}\right\Vert _{F}^{2}$.
Suppose $\rho>\left(\frac{C_{5}+C_{6}-1}{1-C_{5}}\eta\right)^{2}$
and $\gamma>\frac{C_{6}\eta}{1-C_{5}}$. If $\delta_{s_{3}|d}<0.246$,
then in Algorithm \ref{alg:Conservative-Modified-SP}, Step 2 will
stop with no more than $n_{co}$ iterations where $n_{co}$ is given
by

\begin{equation}
n_{co}=\log_{C_{5}}\left[\frac{\gamma-\frac{C_{6}}{1-C_{5}}\eta}{\sqrt{1+\delta_{\bar{s}|d}}\rho^{\frac{1}{2}}+\eta-\frac{C_{6}}{1-C_{5}}\eta}\right].\label{eq:convergence_steps-1}
\end{equation}
\end{thm}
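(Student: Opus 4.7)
The plan is to mirror the argument of Theorem \ref{Convergence-SpeedSuppose-II} for the M-SP, using the contraction constants appropriate to the conservative variant. Two stages are needed: first, derive the analog of Lemma \ref{Iteration-Property-II} for Algorithm \ref{alg:Conservative-Modified-SP}, that is, a per-iteration residue inequality of the form
\[
\|\mathbf{R}_{(l+1)}\|_F \leq C_5 \|\mathbf{R}_{(l)}\|_F + C_6 \eta;
\]
second, unroll this geometric recursion and solve for the smallest $l$ at which the stopping criterion $\|\mathbf{R}_{(l)}\|_F \leq \gamma$ in Step 2F must fire.

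For the first stage, I would revisit Appendix \ref{sub:Proof-of-key-lemma} and reproduce that chain of inequalities, but with the support bookkeeping replaced by that of Algorithm \ref{alg:Conservative-Modified-SP}. In the conservative version, the candidate set $\mathcal{T}_a = \hat{\mathcal{T}}_l \cup \mathcal{T}_b \cup \mathcal{T}_c$ may contain up to $\bar{s}+s_c$ newly added indices (with $\mathcal{T}_c$ of size $\bar{s}$ drawn over the whole $\{1,\ldots,K\}$), and the refined set $\hat{\mathcal{T}}_{l+1}$ of size $\bar{s}$ is likewise drawn from the whole index space, without the two-part split used by the M-SP. Carefully bounding the worst-case cardinality of $\mathcal{T}_a \cup \mathcal{T}$ while using the identity $|\hat{\mathcal{T}}_l \cap \mathcal{T}_0| + \tilde{s} = s_c$ built into Step 2A yields the quantity $s_3 \triangleq 3\bar{s} + s_c + \min(0,\,|\mathcal{T}_0| - |\mathcal{T}_0 \cap \mathcal{T}| - s_c)$ from Table \ref{tab:The-detailed-expressions-constants-conservative} as the relevant block-RIP order. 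Feeding this into the block-RIP inequalities from Lemma \ref{Inequalities-Over-Block-RIP} produces exactly the constants $C_5, C_6$ tabulated there, and the hypothesis $\delta_{s_3|d} < 0.246$ forces $C_5 < 1$, paralleling the role of $\delta_{s_2|d} < 0.246$ in Theorem \ref{Distortion-BoundDenote-MSP}.

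For the second stage, I would apply the standard fixed-point shift $A \triangleq \frac{C_6 \eta}{1-C_5}$ to rewrite the recursion as $\|\mathbf{R}_{(l)}\|_F - A \leq C_5^l (\|\mathbf{R}_{(0)}\|_F - A)$. Since $\mathbf{R}_{(0)} = \mathbf{Y} = \Phi\mathbf{X} + \mathbf{N}$, Lemma \ref{Inequalities-Over-Block-RIP} part 2 combined with the triangle inequality gives $\|\mathbf{R}_{(0)}\|_F \leq \sqrt{1+\delta_{\bar{s}|d}}\,\rho^{1/2} + \eta$. The hypothesis $\gamma > \frac{C_6 \eta}{1-C_5}$ ensures $\gamma - A > 0$, while the hypothesis $\rho > \big(\frac{C_5+C_6-1}{1-C_5}\eta\big)^2$, together with $\sqrt{1+\delta_{\bar{s}|d}} \geq 1$, forces $\sqrt{1+\delta_{\bar{s}|d}}\,\rho^{1/2} + \eta - A > 0$. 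Requiring $\|\mathbf{R}_{(l)}\|_F \leq \gamma$ then reduces to $C_5^l \leq (\gamma - A)/(\sqrt{1+\delta_{\bar{s}|d}}\,\rho^{1/2} + \eta - A)$; taking $\log_{C_5}$ reverses the inequality because $0 < C_5 < 1$ and delivers exactly the claimed $n_{co}$. The moment $\|\mathbf{R}_{(l)}\|_F \leq \gamma$, Step 2F halts the algorithm, so the total iteration count is bounded above by $n_{co}$.

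The main obstacle is Stage 1, not Stage 2: the adaptation of Appendix \ref{sub:Proof-of-key-lemma} to Algorithm \ref{alg:Conservative-Modified-SP} is not purely cosmetic, because permitting $\mathcal{T}_c$ (Step 2A) and $\hat{\mathcal{T}}_{l+1}$ (Step 2C) to range over the entire index space enlarges the supports that appear in the block-RIP estimates and destroys the tidy two-part decomposition used in the M-SP analysis. Producing the tight order $s_3$, rather than a loose $4\bar{s}$-type bound, requires carefully crediting the mandatory $s_c$ indices drawn from $\mathcal{T}_0$ against the genuine overlap $|\mathcal{T}_0 \cap \mathcal{T}|$; this is the new piece of accounting. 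Once it is settled, the remaining chain of block-RIP manipulations and the final logarithmic inversion mirror the proof of Theorem \ref{Convergence-SpeedSuppose-II} line by line.
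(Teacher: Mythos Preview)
Your proposal is correct and follows essentially the same route as the paper: the paper's proof of Theorem \ref{Convergence-Speed-MSP} simply invokes the argument of Appendix \ref{sub:Proof-of-Theorem-con} (the proof of Theorem \ref{Convergence-SpeedSuppose-II}) with $C_{1},C_{2}$ replaced by $C_{5},C_{6}$, while the iteration inequality $\|\mathbf{R}_{(l+1)}\|_{F}\leq C_{5}\|\mathbf{R}_{(l)}\|_{F}+C_{6}\eta$ you identify as Stage~1 is already established in Appendix \ref{sub:Proof-of-Theorem-conservative_MSP} (the proof of Theorem \ref{Distortion-Bound-of-conservative_MSP}), so you may cite it rather than re-derive it. Your Stage~2 unrolling, initial-residue bound, and logarithmic inversion match the paper's Appendix \ref{sub:Proof-of-Theorem-con} line for line, and your explicit check that the two hypotheses on $\rho$ and $\gamma$ make the logarithm well-defined is a welcome clarification the paper leaves implicit.
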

\begin{proof}
(Sketch) The proof is similar to Appendix \ref{Convergence-SpeedSuppose-II}
and is therefore omitted to avoid duplication.
\end{proof}

\begin{remrk}
[Interpretation of Theorem \ref{Distortion-Bound-of-conservative_MSP}-\ref{Convergence-Speed-MSP}]Different
from the theoretical results derived for M-SP in Section IV, Theorem
\ref{Distortion-Bound-of-conservative_MSP}-\ref{Convergence-Speed-MSP}
for the proposed conservative M-SP (Algorithm \ref{alg:Conservative-Modified-SP})\emph{
do not depend on the assumption} of correct quality information $s_{c}$
(i.e., no matter whether $\left|\mathcal{T}_{0}\bigcap\mathcal{T}\right|\geq s_{c}$
is true or not). These results demonstrate the \emph{robustness} of
the proposed conservative M-SP towards model mismatch with incorrect
$s_{c}$. Note that compared with the M-SP, there is an increase on
the requirement of the block-RIP conditions as can be seen from the
expression of $s_{3}$ in $\delta_{s_{3}|d}$ in Table \ref{tab:The-detailed-expressions-constants-conservative}
(i.e., $s_{3}\geq s_{2}$). This is due to the \emph{conservative}
exploitation of $(\mathcal{T},\, s_{c})$ in Algorithm \ref{alg:Conservative-Modified-SP}
such that in Step A, a larger support candidate is involved in the
signal support identification. 
\end{remrk}

\section{Application to Sparse Channel Estimation in Massive MIMO}

In this section, we shall apply the proposed framework of CS to the
channel estimation problem in massive MIMO \cite{larsson2013massive}
with temporal correlation. One key challenge to implement massive
MIMO is to efficiently obtain the channel state information at the
transmitter (CSIT). Recently, it has been shown that the massive MIMO
channel is sparse due to the limited local scatterers effect \cite{zhou2006experimental,kyritsi2003correlation}
and hence CS techniques are deployed to reduce the CSI acquisition
overhead by exploiting the channel sparsity. For instance, in \cite{kuo2012compressive},
CS techniques are deployed to improve the channel feedback efficiency
and in \cite{rao2014distributed}, a distributed CS framework is proposed
to enhance both the channel estimation and feedback performance in
downlink massive MIMO systems. Besides, works \cite{nguyen2013compressive}
and \cite{wen2014channel} further consider uplink massive MIMO systems,
and a CS-based low-rank approximation scheme and a sparse Bayesian-learning
algorithm respectively, are proposed to improve the channel recovery
performance. However, these existing approaches \cite{zhou2006experimental,kyritsi2003correlation}
only consider a one-time slot static scenario. In massive MIMO systems
with temporarily correlated multipaths (as illustrated in Figure \ref{fig:Illustration-of-point-to-point}),
it is desirable to exploit the channel temporal correlation to further
reduce the required pilot overhead. In this section, we share achieve
this goal by applying the proposed framework of CS recovery with prior
support information.

\subsection{System Model}

Consider a flat block-fading FDD massive MIMO system with one BS and
one UE, where the BS and UE have $M$ ($M$ is large) and $N$ antennas
respectively. To estimate the downlink channel from the BS to the
UE, the BS sends a sequence of $T$ training pilot symbols on its
$M$ antennas. Denote the transmitted pilot training matrix as $\Theta\in\mathbb{C}^{M\times T}$
where $\textrm{tr}(\Theta\Theta^{H})=T$. The corresponding received
signal at the UE $\mathbf{Z}\in\mathbb{C}^{N\times T}$ is
\begin{equation}
\mathbf{Z}=\sqrt{P}\mathbf{H}\Theta+\mathbf{W}\label{eq:CSIT_model}
\end{equation}
where $P$ denotes the transmitted SNR from the BS, $\mathbf{H}\in\mathbb{C}^{N\times M}$
is the quasi-static channel from the BS to the UE, $\mathbf{W}\in\mathbb{C}^{N\times T}$
is the channel noise whose elements are i.i.d. complex Gaussian variables
with zero mean and unit variance. Our \emph{target} is to estimate
the channel matrix $\mathbf{H}$ based on the obtained channel observations
$\mathbf{Z}$ at the UE. We first elaborate the considered channel
model in the next subsection.

\subsection{Channel Model}

Consider a uniform linear array (ULA) model for the antennas installed
at the BS and UE. The channel matrix $\mathbf{H}$ can be represented
\cite{tse2005fundamentals} as
\[
\mathbf{H}=\mathbf{U}\mathbf{H}_{a}\mathbf{V}^{H}
\]
where $\mathbf{U}\in\mathbb{C}^{N\times N}$ and $\mathbf{V}\in\mathbb{C}^{M\times M}$
denote the unitary matrices for the angular domain transformation
at the UE and BS side respectively, $\mathbf{H}_{a}\in\mathbb{C}^{N\times M}$
is the angular domain channel matrix. In massive MIMO systems, due
to the limited local scattering at the BS side, the angular domain
channel $\mathbf{H}_{a}$ turns out to be sparse. Furthermore, as
the UE has a relatively rich number of local scatterers compared with
its number of antennas, the angular domain $\mathbf{H}_{a}$ has simultaneous
zero or non-zero columns, as indicated in \cite{kyritsi2003correlation,rao2014distributed}
(illustrated in Figure \ref{fig:Illustration-of-point-to-point}).
Figure \ref{fig:Illustration-of-angular} illustrates the simulated
results of the angular domain channel using the ITU-R IMT-Advanced
channel model \cite{3GPPchannel}. Based on these features and similar
to \cite{kyritsi2003correlation,rao2014distributed}, we consider
the following channel model for our point-to-point massive MIMO system.
Denote $\textrm{supp}(\mathbf{h})=\{i:\mathbf{h}(i)\neq0\}$. 

\begin{figure}
\begin{centering}
\includegraphics[width=3.5in]{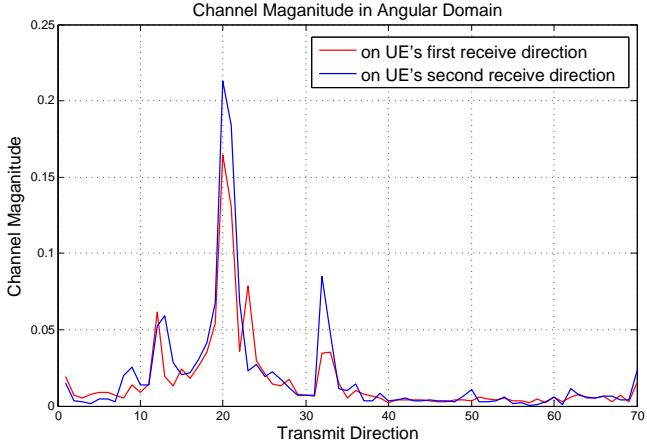}
\par\end{centering}

\protect\caption{\label{fig:Illustration-of-angular}Illustration of angular domain
channel with the ITU-R IMT-Advanced model in Urban Micro scenario
\cite{3GPPchannel}. The number of antennas at the BS and UE are 70
and 2, respectively. As can been seen, (i) the angular domain channel
are sparse; (ii) the angular domain channel on different receive directions
has simultaneous channel support. }

\end{figure}

\begin{definitn}
[Massive MIMO Channel Model]Let $\mathbf{h}_{j}\in\mathbb{C}^{N\times1}$
be the $j$-th row vector of $\mathbf{H}_{a}\in\mathbb{C}^{N\times M}$.
The channel matrix $\mathbf{H}_{a}$ satisfies: $\textrm{supp}(\mathbf{h}_{1})=\cdots\textrm{supp}(\mathbf{h}_{N})\triangleq\mathcal{T}$,
where $\mathcal{T}$ is the channel support and $\left|\mathcal{T}\right|\leq\bar{s}$.
Furthermore, the elements in $(\mathbf{H}_{a})_{\mathcal{T}}$ are
i.i.d. complex Gaussian variables with zero mean and unit variance. 
\end{definitn}

Note that $\bar{s}$ is a statistical upper bound on the number of
spatial paths from the BS to the UE. In practice, the channel sparsity
levels depend on the large scale properties of the scattering environment
and changes slowly and hence information like $\bar{s}$ can be obtained
at the UE from prior offline measurements. On the other hand, the
channel paths are \emph{temporarily correlated} so that consecutive
frames would share some common channel paths. As a result, we can
utilize the \emph{prior support informatio}n (Definition \ref{Temporarily-Correlated-Sparse})
in massive MIMO scenarios. Specifically, in the prior support information
$(\mathcal{T}_{0},s_{c})$, $\mathcal{T}_{0}$ is the estimated channel
support in the \emph{previous} frame and $s_{c}$ characterizes the
size of common channel paths between $\mathcal{T}_{0}$ and $\mathcal{T}$,
i.e., $\left|\mathcal{T}_{0}\bigcap\mathcal{T}\right|\geq s_{c}$. 
\begin{remrk}
[Practical Considerations]\label{Practical-Considerations}In practice,
we usually need to estimate a sequence of channels $\mathbf{H}_{[1]}$,
$\mathbf{H}_{[2]}$ ... where $\mathbf{H}_{[i]}$ is the channel from
the BS to the UE in the $i$-th frame \cite{3GPPchannel}. At the
very beginning, we don't have prior channel estimations and hence
we can set the prior channel support information $(\mathcal{T}_{0},s_{c})$
to be $\mathcal{T}_{0}=\emptyset$, $s_{c}=0$. At later stages when
we have already obtained some prior channel estimations, the estimated
channel support in the previous frame (e.g., $\mathbf{H}_{[i-1]}$)
can act as the prior support $\mathcal{T}_{0}$ for the present time
(e.g., $\mathbf{H}_{[i]}$). On the other hand, due to the slowly
varying propagation environment between the BS and UE \cite{huang2009limited,bajwa2010compressed}
(as illustrated in Figure \ref{fig:Illustration-of-point-to-point}),
it is likely that the size of the common support between consecutive
channels, i.e., $\mathbf{H}_{[i-1]}$, $\mathbf{H}_{[i]}$, changes
slowly so that we can gradually obtain a \emph{reliable} statistical
information as $s_{c}$. For instance, we can select $s_{c}$ to satisfy
$\textrm{Pr}(\left|\mathcal{T}_{i-1}\bigcap\mathcal{T}_{i}\right|\geq s_{c})\geq1-\epsilon$
for some small $\epsilon$, $0<\epsilon<1$ from prior channel measurements
based on long term stochastic learning and estimation \cite{bottou2002stochastic}.
Note that a larger $s_{c}$ indicates a stronger temporal correlation
between channels of consecutive frames. \hfill \QED
\end{remrk}

\begin{figure}
\begin{centering}
\includegraphics[scale=0.6]{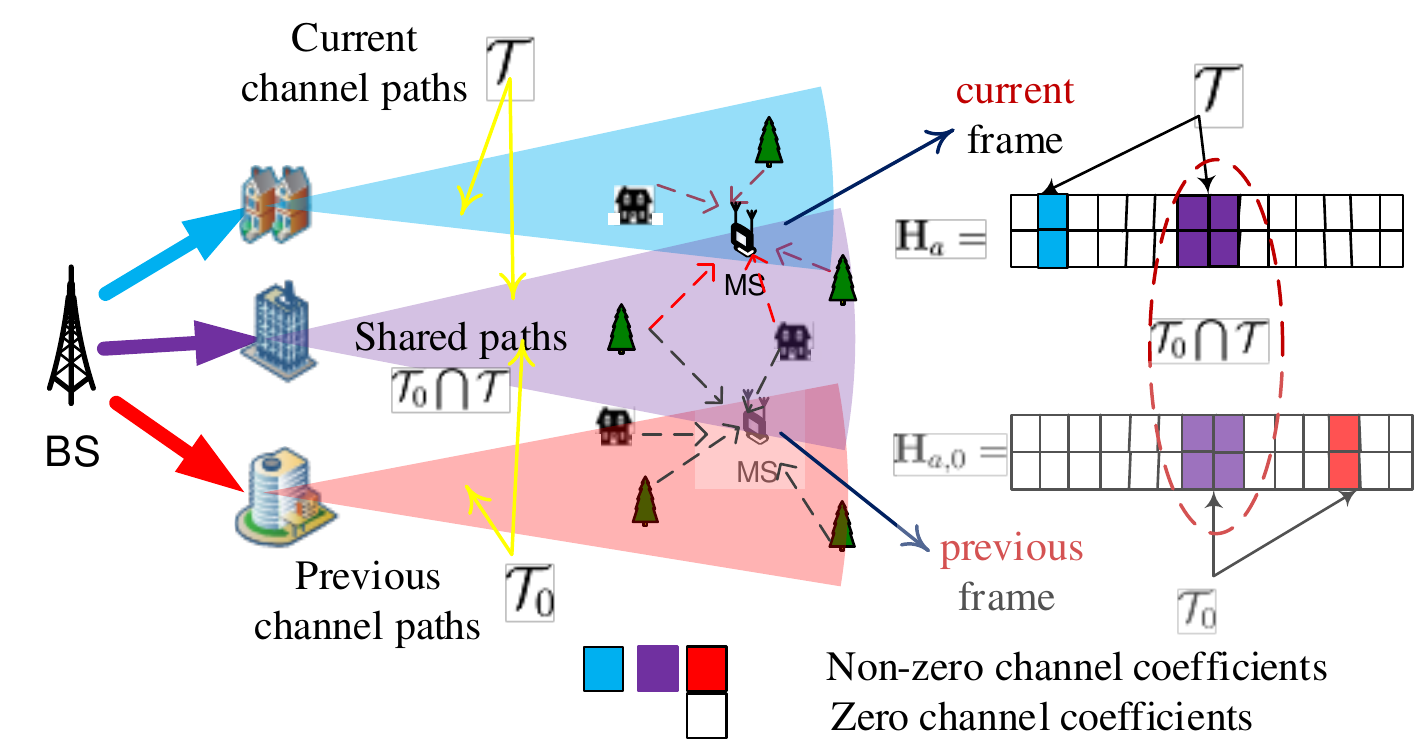}
\par\end{centering}

\protect\caption{\label{fig:Illustration-of-point-to-point}Illustration of point-to-point
massive MIMO system in which the previous and current frames share
some common spatial channel paths $\mathcal{T}_{0}\bigcap\mathcal{T}$
due to the slowly varying scattering environment. As such, the estimated
channel support $\mathcal{T}_{0}$ in the previous frame can be utilized
to enhance the CSIT estimation performance in the current frame. }
\end{figure}

\subsection{Channel Recovery with the Proposed CS Framework}

In this subsection, we talk about how to apply the proposed CS framework
to conduct the recovery of $\mathbf{H}$ based on $\mathbf{Y}$. 

\vspace{-0.5cm}

\begin{center}
\ovalbox{\begin{minipage}[t]{1\columnwidth}%
Challenge 4: Apply the proposed framework of CS with prior support
information in Section II, to conduct the recovery of $\mathbf{H}$
from (\ref{eq:CSIT_model}).%
\end{minipage}}
\par\end{center}

First, equation (\ref{eq:CSIT_model}) can be re-written as 
\begin{equation}
\underset{\mathbf{Y}}{\underbrace{(\mathbf{\mathbf{Z}}^{H}\mathbf{U})}}=\underset{\Phi}{\underbrace{\sqrt{\frac{M}{T}}(\mathbf{V}\Theta)^{H}}}\underset{\mathbf{X}}{\underbrace{\sqrt{\frac{PT}{M}}(\mathbf{H}_{a})^{H}}}+\underset{\mathbf{N}}{\underbrace{\mathbf{W}^{H}\mathbf{U}}}.\label{eq:revised_CSIT_model}
\end{equation}
Then (\ref{eq:revised_CSIT_model}) matches the CS measurement model
in (\ref{eq:CS_signal_model}), where $(\mathbf{\mathbf{Z}}^{H}\mathbf{U})$
are measurements (role of $\mathbf{Y}$ in (\ref{eq:CS_signal_model})),
$\sqrt{\frac{M}{T}}(\mathbf{V}\Theta)^{H}$ is the measurement%
\footnote{Note that the term $\sqrt{\frac{M}{T}}$is to normalize the measurement
matrix $\Phi=\sqrt{\frac{M}{T}}(\mathbf{V}\Theta)^{H}$ to satisfy
$\textrm{tr}\left(\Phi\Phi^{H}\right)=M$ so as to fit into the analytical
framework of block-RIP property in Definition \ref{Block-Restricted-Isometry}. %
} matrix ($\Phi$ in (\ref{eq:CS_signal_model})), $\mathbf{W}^{H}\mathbf{U}$
is the noise ($\mathbf{N}$ in (\ref{eq:CS_signal_model})) and $\sqrt{\frac{PT}{M}}(\mathbf{H}_{a})^{H}$
is the unknown signal source ($\mathbf{X}$ in (\ref{eq:CS_signal_model})).
Furthermore, $\sqrt{\frac{PT}{M}}(\mathbf{H}_{a})^{H}$ satisfies
the general sparsity model in Section II-B with chunk size $1\times N$
($d=1$, $L=N$ as in Definition \ref{Matrix-with-Chunk}). As such,
the channel recovery problem is transformed the CS problem we consider
in Section II. 

Second, based on the transformed CS equation (\ref{eq:revised_CSIT_model}),
we apply the proposed M-SP (Algorithm \ref{alg:Modified-SP}) to conduct
the channel recovery by replacing the \emph{input }parameter $\mathbf{Y}$
with $(\mathbf{\mathbf{Z}}^{H}\mathbf{U})$, $\Phi$ with $(\mathbf{V}\Theta)^{H}$,
with $d$ set to be $d=1$. Denote the obtained algorithm \emph{output
}as $\hat{\mathbf{X}}$. Then the recovered channel $\hat{\mathbf{H}}$
for $\mathbf{H}$ is given by
\begin{equation}
\hat{\mathbf{H}}=\sqrt{\frac{M}{PT}}\mathbf{U}(\hat{\mathbf{X}})^{H}\mathbf{V}^{H}.\label{eq:recovered_channel}
\end{equation}

Third, we deploy the analytical results in Section IV to derive some
performance results for $\hat{\mathbf{H}}$. Note that when $d=1$,
the block-RIP is reduced to the conventional RIP \cite{candes2005decoding}.
Suppose that the pilot matrix $\sqrt{\frac{M}{T}}\Theta^{H}$ satisfies
the RIP property and denote the corresponding $k$-th order RIP constants
as $\delta_{k}$ (note that $\delta_{k}=\delta_{k|1}$ as $d=1$).
Based on Theorem \ref{Distortion-BoundDenote-MSP} and from the unitary
invariance property of Frobenius norm, we obtain the following distortion
bound.
\begin{thm}
[Channel Recovery Performance]\label{Channel-Recovery-Performance}If
the $s_{2}$-th order RIP constant of $\Phi=\sqrt{\frac{M}{T}}\Theta^{H}$
satisfies $\delta_{s_{2}}\leq0.246$, where $s_{2}=3\bar{s}+\min\left(0,\,\left|\mathcal{T}_{0}\right|-3s_{c}\right)$,
then the average channel recovery distortion, i.e., $\mathbb{E}\left(\left\Vert \hat{\mathbf{H}}-\mathbf{H}\right\Vert _{F}\right)$
satisfies{\small{}
\begin{align}
 & \mathbb{E}\left(\left\Vert \hat{\mathbf{H}}-\mathbf{H}\right\Vert _{F}\right)\nonumber \\
\leq & \sqrt{\frac{M}{PT}}\left(\left(C_{4}+\frac{1}{\sqrt{1-\delta_{s_{2}}}}\right)\frac{\Gamma\left(NT+\frac{1}{2}\right)}{\Gamma(NT)}+\frac{\gamma}{\sqrt{1-\delta_{s_{2}}}}\right)\label{eq:equation_result}
\end{align}
}where $\gamma$ is threshold parameter in Algorithm \ref{alg:Modified-SP},
$\Gamma(\cdot)$ is the gamma function, $C_{4}$ is a constant given
in Table \ref{tab:The-detailed-expressions-constants}.\end{thm}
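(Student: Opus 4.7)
The plan is to reduce this statement to a direct application of Theorem \ref{Distortion-BoundDenote-MSP} (with $d=1$) combined with a standard calculation of the expected Frobenius norm of a complex Gaussian matrix. The two main ingredients are (i) transferring the distortion bound on $\hat{\mathbf{X}}$ into a distortion bound on $\hat{\mathbf{H}}$, and (ii) evaluating $\mathbb{E}\|\mathbf{N}\|_F$ in closed form.

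First I would exploit unitary invariance. From $\mathbf{H}=\mathbf{U}\mathbf{H}_a\mathbf{V}^H$, the definition $\mathbf{X}=\sqrt{PT/M}(\mathbf{H}_a)^H$, and $\hat{\mathbf{H}}=\sqrt{M/(PT)}\mathbf{U}\hat{\mathbf{X}}^H\mathbf{V}^H$, a direct substitution gives $\hat{\mathbf{H}}-\mathbf{H}=\sqrt{M/(PT)}\,\mathbf{U}(\hat{\mathbf{X}}-\mathbf{X})^H\mathbf{V}^H$. Since $\mathbf{U},\mathbf{V}$ are unitary and the Frobenius norm is invariant under unitary multiplications and Hermitian transposition,
\begin{equation}
\left\Vert \hat{\mathbf{H}}-\mathbf{H}\right\Vert _F=\sqrt{\frac{M}{PT}}\left\Vert \hat{\mathbf{X}}-\mathbf{X}\right\Vert _F.
\end{equation}

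Next I would invoke Theorem \ref{Distortion-BoundDenote-MSP} on the transformed CS model (\ref{eq:revised_CSIT_model}) with $d=1$, noting that the block-RIP reduces to the ordinary RIP and $\delta_{k|1}=\delta_k$. With $\eta=\|\mathbf{N}\|_F$, the theorem yields
\begin{equation}
\left\Vert \hat{\mathbf{X}}-\mathbf{X}\right\Vert _F\leq\max\!\left(C_{4}\eta,\;\tfrac{\gamma+\eta}{\sqrt{1-\delta_{s_{1}}}}\right)\leq C_{4}\eta+\tfrac{\gamma+\eta}{\sqrt{1-\delta_{s_{1}}}},
\end{equation}
using $\max(a,b)\leq a+b$ for non-negative $a,b$. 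Since $s_{1}\leq s_{2}$ by construction, monotonicity of the RIP constants (Lemma \ref{Inequalities-Over-Block-RIP}, part 1) gives $\delta_{s_{1}}\leq\delta_{s_{2}}$, so I can replace $\delta_{s_{1}}$ by $\delta_{s_{2}}$ to obtain the looser but cleaner bound $\|\hat{\mathbf{X}}-\mathbf{X}\|_F\leq\big(C_{4}+1/\sqrt{1-\delta_{s_{2}}}\big)\eta+\gamma/\sqrt{1-\delta_{s_{2}}}$.

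Finally I would compute $\mathbb{E}\eta$. Because $\mathbf{W}$ has i.i.d.\ unit-variance zero-mean complex Gaussian entries and $\mathbf{U}$ is unitary, $\mathbf{N}=\mathbf{W}^H\mathbf{U}$ inherits the same i.i.d.\ complex Gaussian distribution. Therefore $\|\mathbf{N}\|_F^2=\sum_{i,j}|N_{ij}|^2$ is the sum of $NT$ i.i.d.\ unit-mean exponentials, i.e.\ a $\text{Gamma}(NT,1)$ random variable. A standard integration yields
\begin{equation}
\mathbb{E}\|\mathbf{N}\|_F=\mathbb{E}\sqrt{\text{Gamma}(NT,1)}=\frac{\Gamma(NT+1/2)}{\Gamma(NT)}.
\end{equation}
Taking expectations in the bound on $\|\hat{\mathbf{X}}-\mathbf{X}\|_F$, multiplying by the scaling factor $\sqrt{M/(PT)}$ from the first step, and substituting the closed-form expression for $\mathbb{E}\eta$ produces exactly (\ref{eq:equation_result}). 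The only mildly delicate step is the Gamma-function identity for $\mathbb{E}\sqrt{X}$ with $X\sim\text{Gamma}(NT,1)$, but this is a routine computation and does not pose a real obstacle; the rest is bookkeeping of the unitary invariance and of the monotonicity of the RIP constants.
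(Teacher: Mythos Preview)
Your proposal is correct and follows essentially the same route as the paper's own proof: reduce $\|\hat{\mathbf{H}}-\mathbf{H}\|_F$ to $\sqrt{M/(PT)}\,\|\hat{\mathbf{X}}-\mathbf{X}\|_F$ via unitary invariance, apply Theorem~\ref{Distortion-BoundDenote-MSP} with $d=1$, loosen $\delta_{s_1}$ to $\delta_{s_2}$ using $s_1\leq s_2$, and finish with $\mathbb{E}\|\mathbf{N}\|_F=\Gamma(NT+\tfrac12)/\Gamma(NT)$. You have simply spelled out in more detail the $\max(a,b)\leq a+b$ step and the Gamma-distribution computation that the paper leaves implicit.
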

\begin{proof}
From Theorem \ref{Distortion-BoundDenote-MSP}, equation (\ref{eq:recovered_channel}),
$\delta_{s_{2}}\leq0.246$ and $s_{1}\leq s_{2}$, we derive $\left\Vert \hat{\mathbf{H}}-\mathbf{H}\right\Vert _{F}\leq\sqrt{\frac{M}{PT}}\left(C_{4}\eta+\frac{\gamma+\eta}{\sqrt{1-\delta_{s_{2}}}}\right)$.
From this and $\mathbb{E}\left(\eta\right)=\mathbb{E}\left(\left\Vert \mathbf{W}\right\Vert _{F}\right)=\frac{\Gamma\left(NT+\frac{1}{2}\right)}{\Gamma(NT)}$,
equation (\ref{eq:equation_result}) is derived.
\end{proof}

From Theorem \ref{Channel-Recovery-Performance}, as the transmit
SNR $P\rightarrow\infty$, the average recovery distortion $\mathbb{E}\left(\left\Vert \hat{\mathbf{H}}-\mathbf{H}\right\Vert _{F}\right)\rightarrow0$
and perfect channel recovery will be achieved. On the other hand,
from the expression of $s_{2}\triangleq3\bar{s}+\min\left(0,\,\left|\mathcal{T}_{0}\right|-3s_{c}\right)$,
$s_{2}$ decreases as $s_{c}$ increases when $s_{c}\geq\frac{1}{3}\left|\mathcal{T}_{0}\right|$.
In other words, a weaker RIP condition on the measurement matrix $\sqrt{\frac{M}{T}}\Theta^{H}$
is required as $s_{c}$ increases (e.g., we need $\delta_{3\bar{s}}\leq0.246$
for $s_{c}=0$ and $\delta_{\bar{s}}\leq0.246$ for $s_{c}=\left|\mathcal{T}_{0}\right|=\bar{s}$).
This leads to a smaller requirement on the number of training pilot
$T$ \cite{candes2005decoding}. From this, we conclude that a larger
strength of temporal correlation on the channel support (i.e., larger
$s_{c}$) can enjoy a better reduction on the number of training pilots
in massive MIMO systems. On the other hand, if we apply the conservative
M-SP (Algorithm \ref{alg:Conservative-Modified-SP}) instead of M-SP
(Algorithm \ref{alg:Modified-SP}) to conduct the channel recovery,
we can obtained a similar recovery performance result as in Theorem
\ref{Channel-Recovery-Performance} by deploying Theorem \ref{Distortion-Bound-of-conservative_MSP}
(details are omitted to avoid duplication).

\subsection{Discussion on the Pilot Matrix $\Theta$}

Note that we have not discussed the design of the pilot matrix $\Theta$
so that the aggregate measurement matrix $\Phi=\sqrt{\frac{M}{T}}\Theta^{H}$
in (\ref{eq:revised_CSIT_model}) can satisfy the RIP condition in
Theorem \ref{Channel-Recovery-Performance}. In the CS literature,
matrices randomly generated from sub-Gaussian distribution \cite{foucart2013mathematical}
can satisfy the RIP with overwhelming probability and this randomized
generation method has also been widely used. Following this convention,
the elements of the pilot matrix $\Theta\in\mathbb{C}^{M\times T}$
can be generated from i.i.d. sub-Gaussian distribution (e.g., $\{\sqrt{\frac{1}{M}},-\sqrt{\frac{1}{M}}\}$
with equal probability). Using this method, from \cite{candes2005decoding},
when the length $T$ of the training pilot satisfies $T\geq c_{1}k\log M$,
the probability that the CS measurement matrix $\Phi=\sqrt{\frac{M}{T}}\Theta^{H}$
in (\ref{eq:revised_CSIT_model}) satisfies a prescribed $k$-th order
RIP condition $\delta_{k}\triangleq\delta$ will be no less than $1-\mathcal{O}\left(\textrm{exp}(-c_{2}T)\right)$,
where $c_{1}$ and $c_{2}$ are some positive constants depending
on $\delta$ \cite{candes2005decoding}.

\subsection{Discussion of Other Possible Applications}

In fact, the proposed framework can potentially be applied to many
other areas, including wireless sensor networks (WSN) \cite{xie2014transmission}
and magnetic resonance imaging (MRI) \cite{vasanawala2009faster}
 in which the target sparse signals usually demonstrate strong temporal
correlations. To apply the proposed scheme, one can learn the statistical
information $s_{c}$ (which characterizes the size of the shared common
support between two consecutive signals) using the tools of stochastic
learning and estimation \cite{bottou2002stochastic}. In this work,
we have proposed two algorithms, namely the M-SP and the conservative
M-SP to conduct the signal recovery. For a specific application scenario,
if the uncertainty on $s_{c}$ is small%
\footnote{e.g., when the size of the common support between consecutive signals
changes slowly, one can learn \cite{bottou2002stochastic} a reliable
statistic information $s_{c}$ such that $\textrm{Pr}(\left|\mathcal{T}_{0}\bigcap\mathcal{T}\right|\geq s_{c})\geq1-\epsilon$
for some small $\epsilon$, $0<\epsilon<1$.%
}, then one should use the M-SP algorithm for better performance. On
the other hand, if the underlying uncertainty on model parameter of
$s_{c}$ is large, then one would prefer conservative M-SP for robustness.
The robustness of conservative M-SP with respect to model mismatch
on $s_{c}$ is illustrated in Figure 7 (will be elaborated in Section
VII.D).

\section{Numerical Results}

In this Section, we consider the scenario of sparse channel estimation
in massive MIMO systems as in Section VI to verify the effectiveness
of the proposed framework. Specifically, we compare the performance
of the proposed M-SP and conservative M-SP with the following baselines:
\begin{itemize}
\item \textbf{Baseline 1} (\emph{SP}): Deploy conventional SP \cite{dai2009subspace}
to recover the massive MIMO channel.
\item \textbf{Baseline 2} (\emph{Basis Pursuit}): Deploy conventional basis
pursuit \cite{dai2009subspace} to recover the channel.
\item \textbf{Baseline 3} (\emph{modified Basis Pursuit}): Deploy the modified
basis pursuit proposed in \cite{vaswani2010modified} to recover the
channel with blind exploitation of the prior support information.
\item \textbf{Baseline 4} (\emph{MMV-SP}): Deploy an improved version of
the SP \cite{dai2009subspace} (corresponds the proposed M-SP with
$s_{c}=0$) to adapt to the general sparsity model but without exploitation
of the prior support information.
\item \textbf{Baseline 5} \emph{(AMP-MMV)}: Deploy the approximate message
passing for multiple measurement vector problems (AMP-MMV) to conduct
the channel recovery \cite{ziniel2013efficient}.
\item \textbf{Baseline 6} (Genie-aided LS): This serves as a performance
upper bound scenario, in which the channel support $\mathcal{T}$
is assumed to be known and we directly use least square to recover
the channel coefficients on $\mathcal{T}$.
\end{itemize}

We consider a narrow band (flat fading) point-to-point massive MIMO
system with one BS and one UE, where the BS and UE have $M=200$ and
$N=2$ antennas, respectively. Denote the average transmit SNR at
the BS as $P$. We use the 3GPP spatial channel model (SCM) \cite{3GPPchannel}
to generate the channel coefficients and we consider that the UE has
a rich local scattering environment as in \cite{klessling2003mimo}.
Denote the channel to be estimated in the $i$-th frame as $\mathbf{H}_{i}$
and denote its corresponding channel support as $\mathcal{T}_{i}$.
Suppose that the number of spatial paths from the BS broadside (corresponding
to $|\mathcal{T}_{i}|$) are randomly generated as $|\mathcal{T}_{i}|\sim\mathcal{U}\left(\bar{s}-2,\bar{s}\right)$,
$\forall i$, where $\mathcal{U}(a,b)$ denotes discrete uniform distribution
over the set of integers $\{a,a+1,...,b\}$. Consider a slowly varying
scattering scenario so that consecutive frames (i.e., $\mathcal{T}_{i}$,
$\mathcal{T}_{i+1}$) share some spatial channel paths with size $|\mathcal{T}_{i}\bigcap\mathcal{T}_{i+1}|\sim\mathcal{U}\left(s_{c},s_{c}+2\right)$.
The threshold parameter $\gamma$ in the proposed M-SP and conservative
M-SP are given by $\gamma=\sqrt{2NT}$, where $T$ is the length of
the training pilots. In baseline 2 \cite{candes2005decoding} and
baseline 3 \cite{vaswani2010modified}, the threshold parameters in
the constraint of the $l_{1}$-norm minimization are also set to be
$\sqrt{2NT}$. In the following, we compare the normalized mean squared
error (NMSE)%
\footnote{The NMSE of the estimated channel is computed as $\frac{1}{G}\sum_{i=1}^{G}\frac{\left\Vert \mathbf{H}_{i}-\mathbf{\hat{H}}_{i}\right\Vert _{F}^{2}}{\left\Vert \mathbf{H}_{i}\right\Vert _{F}^{2}}$,
where $\mathbf{H}_{i}$ and $\mathbf{\hat{H}}_{i}$ are the actual
channel and the estimated channel, in the $i$-th realization respectively,
$G$ is the number of simulation realizations.%
} \cite{sideratos2007advanced} of the estimated channel with $G=1000$
channel realizations.

\subsection{Channel Estimation Performance Versus Overhead $T$}

In Figure \ref{fig:NMSE-versus_T}, we compare the normalized mean
squared error (NMSE) \cite{sideratos2007advanced} of the estimated
channel versus the length of the training pilot $T$, under transmit
SNR $P=25$ dB, channel sparsity parameter $\bar{s}=18$, and prior
channel quality parameter $s_{c}=10$. From this figure, we observe
that the channel estimation performance increases as $T$ increases,
and the proposed M-SP algorithm achieves a substantial performance
gain over the Baseline 1-4. This is because the proposed M-SP adaptively
exploits the prior channel support based on its quality parameter
$s_{c}$ and it also adapts to the joint channel sparsity structure
as illustrated in Section VI. Specifically, the performance gain of
the M-SP over MMV-SP demonstrates the advantage of adaptively exploiting
the prior channel support, and the performance gain of MMV-SP over
SP indicates the benefits of adapting to the joint sparsity structures.
On the other hand, note that the proposed conservative MSP has a \emph{smaller}
performance gain compared with the proposed M-SP. This is because
the conservative M-SP utilizes the prior channel support in a more
conservative manner and hence achieves less exploitation gain.

\begin{figure}
\centering{}\includegraphics[width=3.5in]{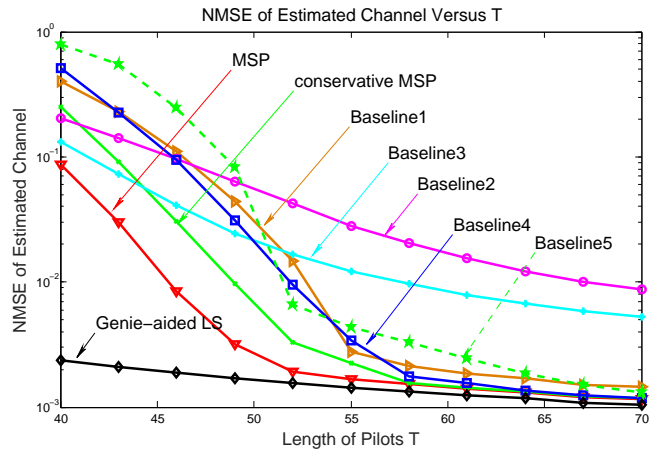}\protect\caption{\label{fig:NMSE-versus_T}NMSE of estimated channel versus the pilot
training length $T$ under $\bar{s}=18$, $s_{c}=10$ and transmit
SNR $P=25$ dB.}
\end{figure}

\begin{figure}
\begin{centering}
\includegraphics[width=3.5in]{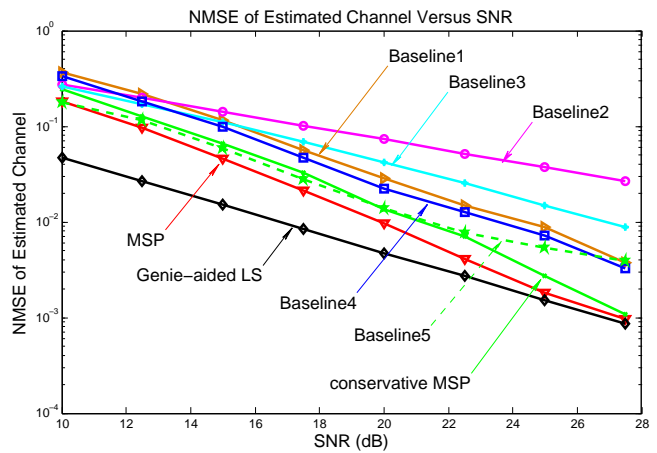}
\par\end{centering}

\protect\caption{\label{fig:NMSE-versus_SNR}NMSE of estimated channel versus transmit
SNR under $T=52$ and $\bar{s}=18$, $s_{c}=10$.}
\end{figure}

\begin{figure}
\begin{centering}
\includegraphics[width=3.5in]{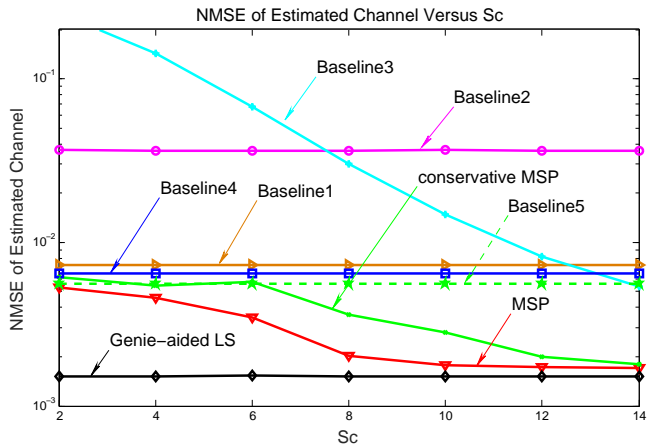}
\par\end{centering}

\protect\caption{\label{fig:NMSE-versus_Sc}NMSE of estimated channel versus prior
channel support quality parameter $s_{c}$ under transmit SNR $P=$25
dB, $T=52$ and $\bar{s}=18$.}
\end{figure}

\subsection{Channel Estimation Performance Versus Transmit SNR $P$}

In Figure \ref{fig:NMSE-versus_SNR}, we compare the NMSE of the estimated
channel versus the transmit SNR $P$ under $T=52$, $\bar{s}=18$
and $s_{c}=10$. From this figure, we observe that the proposed M-SP
algorithm has substantial performance gain over the baselines and
relatively a larger performance gain is achieved in higher SNR regions.

\subsection{Channel Estimation Performance Versus Temporal Correlation Strength
$s_{c}$}

In Figure \ref{fig:NMSE-versus_Sc}, we compare the NMSE of the estimated
channel versus the prior support quality parameter $s_{c}$ (which
indicates the strength of temporal correlation between channels of
consecutive frames) under $T=52$, $\bar{s}=18$ and $P=25$ dB. From
this figure, we observe that the channel estimation performance of
the proposed M-SP and conservative M-SP gets better $s_{c}$ increases.
This is because a larger $s_{c}$ means that a larger part of the
prior channel support can be exploited. This simulation result also
verifies the analysis in Section IV.

\subsection{Channel Estimation Performance under Model mismatch}

In this Section, we simulate the cases of \emph{model mismatch} with
incorrect information of $s_{c}$, i.e., $\left|\mathcal{T}_{0}\bigcap\mathcal{T}\right|<s_{c}$.
Suppose that the size of shared channel support between consecutive
frames is fixed to be $|\mathcal{T}_{i}\bigcap\mathcal{T}_{i+1}|=9$,
$\forall i$, while the believed quality parameter $s_{c}$ varies
from 8 to 13 (so that the believed prior support quality is incorrect
when $s_{c}\in\{10,..,13\}$). Figure \ref{fig:NMSE-verss_sc_mismat}
illustrates the NMSE of the estimated channel versus believed quality
parameter $s_{c}$ under transmit SNR $P=25$ dB and $\bar{s}=18$.
From these figures, we observe that the performance of the M-SP degrades
severely and a larger performance degradation is observed with a larger
$s_{c}$ when $s_{c}\geq10$ (i.e., a larger model mismatch). However,
the conservative M-SP is stable and still enjoys performance gains
over the baselines. These results demonstrate the robustness of the
proposed conservative M-SP algorithm with model mismatches. 

\begin{figure}
\begin{centering}
\includegraphics[width=3.5in]{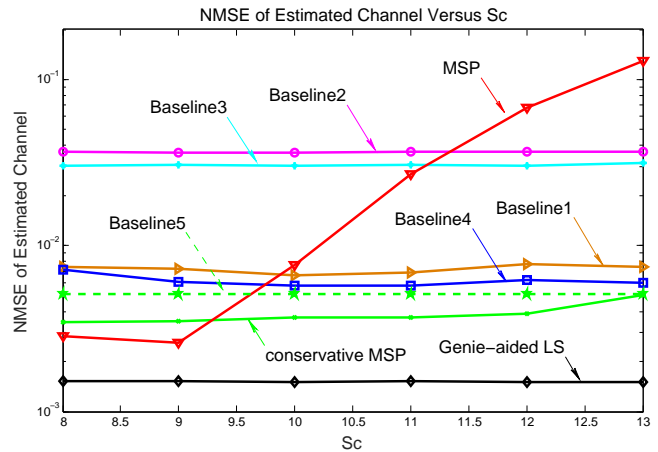}
\par\end{centering}

\protect\caption{\label{fig:NMSE-verss_sc_mismat}NMSE of estimated channel versus
the believed $s_{c}$ under model mismatch with fixed $|\mathcal{T}_{i}\bigcap\mathcal{T}_{i+1}|=9$,
$\forall i$. The other parameters are given by: transmit SNR $P=$25
dB, $T=52$ and $\bar{s}=18$.}
\end{figure}

\section{Conclusions and Future Works}

In this paper, we consider CS problems with a prior support and the
associated quality information available. Modified subspace pursuit
recovery algorithms are designed to \emph{adaptively} exploit the
prior support information to enhance the signal recovery performance.
By deploying the tools of block-RIP, we bound the recovery distortion
and we show that the proposed algorithm converges with $\mathcal{O}\left(\log\mbox{SNR}\right)$
iterations. To tolerate possible model mismatch, we have further proposed
a conservative design to have more robustness in cases of incorrect
prior support information. Finally, we apply the proposed framework
to channel estimation in massive MIMO systems with temporal correlation,
to further reduce the length of the channel training pilots. 

\appendix

\subsection{\label{sub:Proof-of-Lemma-property}Proof of Lemma \ref{Inequalities-Over-Block-RIP}}

The first two items directly follow from Definition \ref{Block-Restricted-Isometry}.
The following proves the third statement. First, we obtain $\sigma_{\max}\left(\Phi_{[\mathcal{T}_{1}\bigcup\mathcal{T}_{2}]}^{H}\Phi_{[\mathcal{T}_{1}\bigcup\mathcal{T}_{2}]}-\mathbf{I}\right)\leq\delta_{k_{1}+k_{2}|d}$
from Definition \ref{Block-Restricted-Isometry}. Second, $\Phi_{[\mathcal{T}_{1}]}^{H}\Phi_{[\mathcal{T}_{2}]}$
is a submatrix of matrix $\Phi_{[\mathcal{T}_{1}\bigcup\mathcal{T}_{2}]]}^{H}\Phi_{[\mathcal{T}_{1}\bigcup\mathcal{T}_{2}]}-\mathbf{I}$.
From the property that the spectral norm of a submatrix is always
upper bounded by the spectral norm of the entire matrix, the third
item is proved. The fourth inequality in Lemma \ref{Inequalities-Over-Block-RIP}
directly extends Lemma A.3 of \cite{chang2014improved}.

\subsection{\label{sub:Proof-of-key-lemma}Proof of Lemma \ref{Iteration-Property-II}}

We first introduce the following equalities property:

{\small{}
\begin{equation}
(\mathbf{\Phi}^{H}\mathbf{R})^{[\mathcal{T}_{1}]}=\mathbf{\Phi}_{[\mathcal{T}_{1}]}^{H}\mathbf{R}\label{eq:property1}
\end{equation}
\begin{equation}
\mathbf{\Phi}_{[\mathcal{T}_{1}]}\mathbf{X}^{[\mathcal{T}_{1}]}=\mathbf{\Phi}_{[\mathcal{T}_{1}\backslash\mathcal{T}_{2}]}\mathbf{X}^{[\mathcal{T}_{1}\backslash\mathcal{T}_{2}]}+\mathbf{\Phi}_{[\mathcal{T}_{1}\bigcap\mathcal{T}_{2}]}\mathbf{X}^{[\mathcal{T}_{1}\bigcap\mathcal{T}_{2}]}\label{eq:property2}
\end{equation}
\begin{eqnarray}
\mathbf{\Phi}_{[\mathcal{T}_{1}]}^{H}\left(\mathbf{I}-\mathbf{\Phi}_{[\mathcal{T}_{1}]}\mathbf{\Phi}_{[\mathcal{T}_{1}]}^{\dagger}\right) & = & \mathbf{0}\label{eq:property3-1}\\
\left(\mathbf{I}-\mathbf{\Phi}_{[\mathcal{T}_{1}]}\mathbf{\Phi}_{[\mathcal{T}_{1}]}^{\dagger}\right)\mathbf{\Phi}_{[\mathcal{T}_{1}]} & = & \mathbf{0}\nonumber 
\end{eqnarray}
}We first introduce the following inequalities property. Suppose $\sigma_{\min}\left(\mathbf{A}\right)$
and $\sigma_{\max}\left(\mathbf{A}\right)$ as the minimum and maximum
singular values of $\mathbf{A}\in\mathbb{C}^{M\times S}$ ($M\geq S$),
respectively, i.e., let $\mathbf{A}=U_{r}\varSigma U_{l}^{H}$, $\varSigma\in\mathbb{C}^{M\times S}$,
be the singular decomposition of $\mathbf{A}$ , $\sigma_{\min}\left(\mathbf{A}\right)=\textrm{min}\left(\mbox{diag}\left(\varSigma\right)\right)$,
$\sigma_{\max}\left(\mathbf{A}\right)=\max\left(\mbox{diag}\left(\varSigma\right)\right)$,
we have
\begin{equation}
\sigma_{\max}\left(\mathbf{A}\mathbf{B}\right)\leq\sigma_{\max}\left(\mathbf{A}\right)\sigma_{\max}\left(\mathbf{B}\right).\label{eq:Frobenius_inequality}
\end{equation}
\begin{equation}
\sigma_{\min}\left(\mathbf{A}\right)\left\Vert \mathbf{B}\right\Vert _{F}\leq\left\Vert \mathbf{A}\mathbf{B}\right\Vert _{F}\leq\sigma_{\max}\left(\mathbf{A}\right)\left\Vert \mathbf{B}\right\Vert _{F}.\label{eq:inequality2}
\end{equation}
Note that the above property (\ref{eq:property1})-(\ref{eq:inequality2})
will be frequently used in the coming proof. Based on the selection
criterion of $\mathcal{T}_{a}$, $\hat{\mathcal{T}}_{l}$, we obtain
that (i) $\mathcal{T}_{a}\triangleq\left(\mathcal{T}_{b}\bigcup\mathcal{T}_{c}\right)\bigcup\hat{\mathcal{T}}_{l}$;
(ii) $\left|\left(\mathcal{T}_{b}\bigcup\mathcal{T}_{c}\right)\right|\leq\bar{s}$,
$\left|\hat{\mathcal{T}}_{l}\right|\leq\bar{s}$, $\left|\mathcal{T}\right|\leq\bar{s}$;
(iii) Each of the three index set, i.e., $\left(\mathcal{T}_{b}\bigcup\mathcal{T}_{c}\right)$,
$\hat{\mathcal{T}}_{l}$ and $\mathcal{T}$ contain at least $s_{c}$
elements from $\mathcal{T}_{0}$; Therefore, 

{\footnotesize{}
\begin{equation}
\left|\mathcal{T}_{a}\right|\leq s_{1}\triangleq2\bar{s}+\min\left(0,\,\left|\mathcal{T}_{0}\right|-2s_{c}\right).\label{eq:first_dimen}
\end{equation}
\begin{equation}
|\mathcal{T}\bigcup\hat{\mathcal{T}}_{l}|\leq s_{1}\triangleq2\bar{s}+\min\left(0,\,\left|\mathcal{T}_{0}\right|-2s_{c}\right).\label{eq:third_dim}
\end{equation}
}{\footnotesize \par}

{\footnotesize{}
\begin{equation}
|\mathcal{T}\bigcup\mathcal{T}_{a}|\leq s_{2}\triangleq3\bar{s}+\min\left(0,\,\left|\mathcal{T}_{0}\right|-3s_{c}\right).\label{eq:second_dim}
\end{equation}
}Based on (\ref{eq:first_dimen})-(\ref{eq:second_dim}), we obtain
the following Lemma. 
\begin{lemma}
[Iteration Property]\label{Iteration-Property-proof}In the $l$-th
iteration of Algorithm \ref{alg:Modified-SP}, the following three
equations will be satisfied:
\end{lemma}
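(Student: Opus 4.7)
The plan is to establish the three equations in the order that the algorithm's Step 2 is organized: first a bound on the signal mass outside the merged support $\mathcal{T}_a$ (Step A), then a bound on the least-squares reconstruction error on $\mathcal{T}_a$ (Step B), and finally a bound on the mass outside the refined support $\hat{\mathcal{T}}_{l+1}$ (Step C). The three intermediate bounds should read, schematically, $\|\mathbf{X}^{[\mathcal{T}\setminus\mathcal{T}_a]}\|_F \le \alpha_1 \|\mathbf{X}^{[\mathcal{T}\setminus\hat{\mathcal{T}}_l]}\|_F + \beta_1\eta$, then $\|\mathbf{Z}^{[\mathcal{T}_a]} - \mathbf{X}^{[\mathcal{T}_a]}\|_F \le \alpha_2 \|\mathbf{X}^{[\mathcal{T}\setminus\mathcal{T}_a]}\|_F + \beta_2\eta$, and finally $\|\mathbf{X}^{[\mathcal{T}\setminus\hat{\mathcal{T}}_{l+1}]}\|_F \le \alpha_3 \|\mathbf{X}^{[\mathcal{T}\setminus\mathcal{T}_a]}\|_F + \beta_3\eta$, with the constants expressible in terms of the block-RIP constants $\delta_{s_1|d}$ and $\delta_{s_2|d}$. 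Lemma \ref{Iteration-Property-II} would then follow by chaining these three bounds and converting the signal-mass bound $\|\mathbf{X} - \hat{\mathbf{X}}_{(l+1)}\|_F$ into the residue bound $\|\mathbf{R}_{(l+1)}\|_F$ through Lemma \ref{Inequalities-Over-Block-RIP}(2).

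For the first equation I would start from the identity $\mathbf{R}_{(l)} = \Phi_{[\mathcal{T}\cup\hat{\mathcal{T}}_l]}(\mathbf{X}-\hat{\mathbf{X}}_{(l)})^{[\mathcal{T}\cup\hat{\mathcal{T}}_l]} + \mathbf{N}$ obtained from (\ref{eq:property2})--(\ref{eq:property3-1}). The crucial novelty compared with the classical SP analysis is that the greedy optimality in (\ref{eq:alg_equation1})--(\ref{eq:alg_equation2}) is \emph{partitioned}: $\mathcal{T}_b$ is optimal only among size-$s_c$ subsets of $\mathcal{T}_0$, and $\mathcal{T}_c$ is optimal only among size-$(\bar{s}-s_c)$ subsets of the complement of $\mathcal{T}_b$. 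I would exploit that, by Definition \ref{Temporarily-Correlated-Sparse}, the true support also satisfies $|\mathcal{T}\cap\mathcal{T}_0|\ge s_c$, so one can legitimately plug a size-$s_c$ subset of $\mathcal{T}\cap\mathcal{T}_0$ as a test candidate for $\mathcal{T}_b$ and $\mathcal{T}\setminus\mathcal{T}_0$ as a test candidate (padded up to size $\bar{s}-s_c$) for $\mathcal{T}_c$. Comparing the greedy maximum against these test candidates and then applying Lemma \ref{Inequalities-Over-Block-RIP}(3) with order $s_2$ (justified by (\ref{eq:second_dim})) yields the first inequality. The second equation is more routine: from Step B we get $\mathbf{Z}^{[\mathcal{T}_a]}-\mathbf{X}^{[\mathcal{T}_a]} = \Phi_{[\mathcal{T}_a]}^\dagger\,\Phi_{[\mathcal{T}\setminus\mathcal{T}_a]}\mathbf{X}^{[\mathcal{T}\setminus\mathcal{T}_a]} + \Phi_{[\mathcal{T}_a]}^\dagger\mathbf{N}$, and Lemma \ref{Inequalities-Over-Block-RIP}(3)--(4) applied with the dimension bound (\ref{eq:first_dimen}) at order $s_1$ and (\ref{eq:second_dim}) at order $s_2$ gives the claim.

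The third equation is the refinement step, whose proof mirrors the argument for the first equation but on the vector $\mathbf{Z}$ rather than on $\Phi^H\mathbf{R}_{(l)}$. Again the greedy optimality in (\ref{eq:alg_equation3}) is partitioned between $\mathcal{T}_0$ and its complement, and again one plugs in $\mathcal{T}\cap\mathcal{T}_0$ and $\mathcal{T}\setminus\mathcal{T}_0$ as legitimate test candidates. From the triangle inequality $\|\mathbf{X}^{[\mathcal{T}\setminus\hat{\mathcal{T}}_{l+1}]}\|_F \le \|(\mathbf{X}-\mathbf{Z})^{[\mathcal{T}\setminus\hat{\mathcal{T}}_{l+1}]}\|_F + \|\mathbf{Z}^{[\mathcal{T}\setminus\hat{\mathcal{T}}_{l+1}]}\|_F$ and the greedy comparison $\|\mathbf{Z}^{[\mathcal{T}\setminus\hat{\mathcal{T}}_{l+1}]}\|_F \le \|\mathbf{Z}^{[\hat{\mathcal{T}}_{l+1}\setminus\mathcal{T}]}\|_F$ (applied within and outside $\mathcal{T}_0$ separately), one bounds $\|\mathbf{X}^{[\mathcal{T}\setminus\hat{\mathcal{T}}_{l+1}]}\|_F$ by a constant times the LS error of the second equation, which in turn is controlled by the first equation.

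The main obstacle I expect is the bookkeeping for the \emph{partitioned} greedy-optimality comparison in the first and third equations. In the classical SP proof, one simply takes the greedy set versus $\mathcal{T}\setminus\hat{\mathcal{T}}$ as a single unconstrained comparison; here the comparison must be carried out twice (inside $\mathcal{T}_0$ and outside $\mathcal{T}_0$), and the resulting cross terms must be recombined so that the dimension bounds (\ref{eq:first_dimen})--(\ref{eq:second_dim}) supply the correct block-RIP orders $s_1$ and $s_2$ at exactly the right places. Once that partitioning is handled cleanly, the rest of the chain is essentially the same chain of block-RIP inequalities as in \cite{dai2009subspace,chang2014improved}, and the constants $C_1,C_2$ in Table \ref{tab:The-detailed-expressions-constants} emerge by grouping the accumulated factors.
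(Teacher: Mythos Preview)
Your overall architecture is the same as the paper's: chain three intermediate inequalities corresponding to Steps 2A, 2B--C, and 2D--E of the algorithm, using block-RIP bounds at orders $s_1$ and $s_2$, then close the loop to get~(\ref{eq:Iteration_Lemma}). The paper's three equations (\ref{eq:first_core})--(\ref{eq:third_core}) are exactly your bounds, only listed in the order E--C--A rather than A--B--C, and the paper's (\ref{eq:third_core}) is stated directly in terms of $\|\mathbf{R}_{(l)}\|_F$ (via the auxiliary vector $\tilde{\mathbf{X}}$ in (\ref{eq:equation_R})) rather than your intermediate $\|\mathbf{X}^{[\mathcal{T}\setminus\hat{\mathcal{T}}_l]}\|_F$; these are equivalent routes.

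There is, however, one point where your plan as written would not go through, and it is precisely the step you flag as the main obstacle. Your partitioned comparison uses a size-$s_c$ subset of $\mathcal{T}\cap\mathcal{T}_0$ as the test candidate against $\mathcal{T}_b$, and $\mathcal{T}\setminus\mathcal{T}_0$ (padded) as the test candidate against $\mathcal{T}_c$. When $|\mathcal{T}\cap\mathcal{T}_0|>s_c$, the union of these two test candidates is a \emph{strict} subset of $\mathcal{T}$: the extra $|\mathcal{T}\cap\mathcal{T}_0|-s_c$ indices lie in $\mathcal{T}_0$ but are not in your $\mathcal{T}_b$-competitor, and they are excluded from your $\mathcal{T}_c$-competitor since that one lives in $\{1,\dots,K\}\setminus\mathcal{T}_0$. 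So the two inequalities, even after recombining, would only give $\|\Phi_{[\mathcal{T}_b\cup\mathcal{T}_c]}^H\mathbf{R}_{(l)}\|_F \ge \|\Phi_{[\mathcal{T}']}^H\mathbf{R}_{(l)}\|_F$ for a set $\mathcal{T}'\subsetneq\mathcal{T}$, which is not enough to control $\|\mathbf{X}^{[\mathcal{T}\setminus\mathcal{T}_a]}\|_F$. The same issue recurs in your refinement step.

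The paper avoids this by not partitioning at all: it asserts the single inequality $\|\Phi_{[\mathcal{T}_b\cup\mathcal{T}_c]}^H\mathbf{R}_{(l)}\|_F \ge \|\Phi_{[\mathcal{T}]}^H\mathbf{R}_{(l)}\|_F$ (equation~(\ref{eq:selection_A})) and, analogously, $\|\mathbf{Z}^{[\hat{\mathcal{T}}_{l+1}]}\|_F \ge \|\mathbf{Z}^{[\mathcal{T}]}\|_F$ for Step~2C. These hold because the two-stage greedy selection in (\ref{eq:alg_equation1})--(\ref{eq:alg_equation2}) is in fact the \emph{global} maximizer of $\sum_{k\in\mathcal{S}}\|(\Phi^H\mathbf{R}_{(l)})^{[\{k\}]}\|_F^2$ over all $\mathcal{S}$ with $|\mathcal{S}|=\bar{s}$ and $|\mathcal{S}\cap\mathcal{T}_0|\ge s_c$, and $\mathcal{T}$ is feasible for this constraint by Definition~\ref{Temporarily-Correlated-Sparse}. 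The paper does not prove this optimality explicitly, but it is an elementary exchange argument. Once you replace your two separate comparisons by this one combined inequality, the rest of your plan matches the paper's Appendices~\ref{sub:Proof-of-equation-first}--\ref{sub:Proof-of-equation-third} essentially verbatim.
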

\begin{align}
\left\Vert \mathbf{R}_{(l+1)}\right\Vert _{F} & \leq\sqrt{1+\delta_{\bar{s}|d}}\left\Vert \mathbf{X}^{[\mathcal{T}\backslash\hat{\mathcal{T}}_{l+1}]}\right\Vert _{F}+\eta.\label{eq:first_core}
\end{align}
\begin{align}
\left\Vert \mathbf{X}^{[\mathcal{T}\backslash\hat{\mathcal{T}}_{l+1}]}\right\Vert _{F} & \leq\sqrt{1+\frac{4\delta_{s_{2}|d}^{2}\left(1+\delta_{s_{2}|d}\right)}{\left(1-\delta_{s_{1}|d}\right)}}\left\Vert \mathbf{X}^{[\mathcal{T}\backslash\mathcal{T}_{a}]}\right\Vert _{F}\nonumber \\
 & +\frac{2}{\sqrt{1-\delta_{s_{1}|d}}}\eta.\label{eq:second_core}
\end{align}
\begin{align}
\left\Vert \mathbf{X}^{[\mathcal{T}\backslash\mathcal{T}_{a}]}\right\Vert _{F} & \leq\frac{2\delta_{s_{2}|d}}{\left(1-\delta_{\bar{s}|d}\right)\sqrt{1-\delta_{s_{1}|d}}}\left\Vert \mathbf{R}_{(l)}\right\Vert _{F}\nonumber \\
 & +\left(\frac{2\delta_{s_{2}|d}}{\left(1-\delta_{\bar{s}|d}\right)\sqrt{1-\delta_{s_{1}|d}}}+\frac{2\sqrt{1+\delta_{\bar{s}|d}}}{1-\delta_{\bar{s}|d}}\right)\eta.\label{eq:third_core}
\end{align}

\begin{proof}
The detailed proof for equations (\ref{eq:first_core})-(\ref{eq:third_core})
are given in Appendix \ref{sub:Proof-of-equation-first}-\ref{sub:Proof-of-equation-third}
respectively. 
\end{proof}

Combine equations (\ref{eq:first_core})-(\ref{eq:third_core}), equation
(\ref{eq:Iteration_Lemma}) in Lemma \ref{Iteration-Property-II}
is derived. Next, we prove equation (\ref{eq:first_distortion}) in
Lemma \ref{Iteration-Property-II}. Based on (\ref{eq:Iteration_Lemma}),
we obtain 

{\small{}
\begin{equation}
\left\Vert \mathbf{R}_{(l)}\right\Vert _{F}\leq(C_{1})^{l}\left(\left\Vert \mathbf{R}_{(0)}\right\Vert _{F}-\frac{C_{2}\eta}{1-C_{1}}\right)+\frac{C_{2}\eta}{1-C_{1}}.\label{eq:iter_arrange}
\end{equation}
}From $\left\Vert \mathbf{R}_{(0)}\right\Vert _{F}\leq\sqrt{1+\delta_{\bar{s}|d}}\left\Vert \mathbf{X}\right\Vert _{F}+\eta$
and the fact that {\small{}
\begin{align*}
\left\Vert \mathbf{R}_{(l)}\right\Vert _{F} & =\left\Vert \Phi(\mathbf{X}-\hat{\mathbf{X}}_{(l)})+\mathbf{N}\right\Vert _{F}\\
 & \geq\sqrt{1-\delta_{s_{1}|d}}\left\Vert \mathbf{X}-\hat{\mathbf{X}}_{(l)}\right\Vert _{F}-\eta,
\end{align*}
}equation (\ref{eq:first_distortion}) is derived.

\subsection{\label{sub:Proof-of-equation-first}Proof of equation (\ref{eq:first_core})}

From the expression of $\mathbf{R}_{(l+1)}$ in \textbf{Step 2E} of
Algorithm \ref{alg:Modified-SP}, we obtain
\begin{align}
\mathbf{R}_{(l+1)} & =\left(\mathbf{I}-\Phi_{[\hat{\mathcal{T}}_{l+1}]}\Phi_{[\hat{\mathcal{T}}_{l+1}]}^{\dagger}\right)\left(\Phi_{[\mathcal{T}]}\mathbf{X}^{[\mathcal{T}]}+\mathbf{N}\right)\nonumber \\
 & =\left(\mathbf{I}-\Phi_{[\hat{\mathcal{T}}_{l+1}]}\Phi_{[\hat{\mathcal{T}}_{l+1}]}^{\dagger}\right)\Phi_{[\mathcal{T}\backslash\hat{\mathcal{T}}_{l+1}]}\mathbf{X}^{[\mathcal{T}\backslash\hat{\mathcal{T}}_{l+1}]}\nonumber \\
 & +\left(\mathbf{I}-\Phi_{[\hat{\mathcal{T}}_{l+1}]}\Phi_{[\hat{\mathcal{T}}_{l+1}]}^{\dagger}\right)\mathbf{N}\label{eq:important_equation}
\end{align}
Note that $\mathbf{I}-\Phi_{[\hat{\mathcal{T}}_{l+1}]}\Phi_{[\hat{\mathcal{T}}_{l+1}]}^{\dagger}$
is a projection matrix hence $\sigma_{\max}\left(\mathbf{I}-\Phi_{[\hat{\mathcal{T}}_{l+1}]}\Phi_{[\hat{\mathcal{T}}_{l+1}]}^{\dagger}\right)\leq1$.
From (\ref{eq:important_equation}), using properties in Lemma \ref{Inequalities-Over-Block-RIP},
equation (\ref{eq:first_core}) is proved.

\subsection{\label{sub:Proof-of-equation-second}Proof of equation (\ref{eq:second_core})}

From the selection criterion of Step 2.C in Algorithm \ref{alg:Modified-SP},
we obtain $\left\Vert \mathbf{Z}^{[\hat{\mathcal{T}}_{l+1}]}\right\Vert \geq\left\Vert \mathbf{Z}^{[\mathcal{T}]}\right\Vert $,
which leads to 
\begin{equation}
\left\Vert \mathbf{Z}^{[\mathcal{T}_{a}\backslash\mathcal{T}]}\right\Vert _{F}\geq\left\Vert \mathbf{Z}^{[\mathcal{T}_{a}\backslash\hat{\mathcal{T}}_{l+1}]}\right\Vert _{F}.\label{eq:revision_2_1}
\end{equation}
Denote $\mathbf{P}_{(\mathcal{T}_{a})}\triangleq\Phi_{[\mathcal{T}_{a}]}\left(\Phi_{[\mathcal{T}_{a}]}^{H}\Phi_{[\mathcal{T}_{a}]}\right)^{-1}\Phi_{[\mathcal{T}_{a}]}^{H}$.
We further obtain
\begin{align}
 & \mathbf{Z}^{[\mathcal{T}_{a}]}=\Phi_{[\mathcal{T}_{a}]}^{\dagger}\mathbf{Y}=\Phi_{[\mathcal{T}_{a}]}^{\dagger}\mathbf{P}_{(\mathcal{T}_{a})}\mathbf{Y}\nonumber \\
= & \Phi_{[\mathcal{T}_{a}]}^{\dagger}\mathbf{P}_{(\mathcal{T}_{a})}\left(\Phi_{[\mathcal{T}_{a}]}\mathbf{X}_{[\mathcal{T}_{a}]}+\Phi_{[\mathcal{T}\backslash\mathcal{T}_{a}]}\mathbf{X}_{[\mathcal{T}\backslash\mathcal{T}_{a}]}+\mathbf{N}\right)\nonumber \\
\triangleq & \mathbf{X}^{[\mathcal{T}_{a}]}+\mathbf{E}^{[\mathcal{T}_{a}]}+\Phi_{[\mathcal{T}_{a}]}^{\dagger}\mathbf{P}_{(\mathcal{T}_{a})}\mathbf{N}\label{eq:revision_2_2}
\end{align}
where $\mathbf{E}\in\mathbb{C}^{N\times L}$ is given by $\mathbf{E}_{[\{1,...,K\}\backslash\mathcal{T}_{a}]}=\mathbf{0}$,
$\mathbf{P}_{(\mathcal{T}_{a})}\Phi_{[\mathcal{T}\backslash\mathcal{T}_{a}]}\mathbf{X}_{[\mathcal{T}\backslash\mathcal{T}_{a}]}\triangleq\Phi_{[\mathcal{T}_{a}]}\mathbf{E}_{[\mathcal{T}_{a}]}$.
From equation (\ref{eq:revision_2_2}), we obtain 
\begin{equation}
\left\Vert \mathbf{Z}^{[\mathcal{T}_{a}\backslash\mathcal{T}]}\right\Vert _{F}\leq\left\Vert \mathbf{E}^{[\mathcal{T}_{a}]}\right\Vert _{F}+\frac{1}{\sqrt{1-\delta_{s_{1}|d}}}\eta.\label{eq:revision_2_3}
\end{equation}
\begin{align}
\left\Vert \mathbf{Z}^{[\mathcal{T}_{a}\backslash\hat{\mathcal{T}}_{l+1}]}\right\Vert _{F} & \geq\sqrt{\left\Vert \mathbf{X}^{[\mathcal{T}\backslash\hat{\mathcal{T}}_{l+1}]}\right\Vert _{F}^{2}-\left\Vert \mathbf{X}^{[\mathcal{T}\backslash\mathcal{T}_{a}]}\right\Vert _{F}^{2}}\nonumber \\
 & -\left\Vert \mathbf{E}^{[\mathcal{T}_{a}]}\right\Vert _{F}-\frac{1}{\sqrt{1-\delta_{s_{1}|d}}}\eta.\label{eq:revision_2_4}
\end{align}
We further obtain
\begin{align}
 & \sqrt{1-\delta_{s_{1}|d}}\left\Vert \mathbf{E}^{[\mathcal{T}_{a}]}\right\Vert _{F}\leq\left\Vert \Phi_{[\mathcal{T}_{a}]}\mathbf{E}^{[\mathcal{T}_{a}]}\right\Vert _{F}\nonumber \\
= & \left\Vert \mathbf{P}_{(\mathcal{T}_{a})}\Phi_{[\mathcal{T}\backslash\mathcal{T}_{a}]}\mathbf{X}^{[\mathcal{T}\backslash\mathcal{T}_{a}]}\right\Vert _{F}\overset{(a)}{\leq}\delta_{s_{2}|d}\sqrt{1+\delta_{s_{2}|d}}\left\Vert \mathbf{X}^{[\mathcal{T}\backslash\mathcal{T}_{a}]}\right\Vert _{F}\label{eq:revision_2_5}
\end{align}
where $(a)$ comes from the fourth property in Lemma \ref{Inequalities-Over-Block-RIP}.
Equation (\ref{eq:revision_2_5}) leads to 
\begin{equation}
\left\Vert \mathbf{E}^{[\mathcal{T}_{a}]}\right\Vert _{F}\leq\frac{\delta_{s_{2}|d}\sqrt{1+\delta_{s_{2}|d}}}{\sqrt{1-\delta_{s_{1}|d}}}\left\Vert \mathbf{X}^{[\mathcal{T}\backslash\mathcal{T}_{a}]}\right\Vert _{F}\label{eq:revision_2_6}
\end{equation}
Combining equation (\ref{eq:revision_2_1}), (\ref{eq:revision_2_3}),
(\ref{eq:revision_2_4}) and (\ref{eq:revision_2_6}), we obtain equation
(\ref{eq:second_core}).

\subsection{\label{sub:Proof-of-equation-third}Proof of equation (\ref{eq:third_core})}

At the beginning of the $l$-th iteration, the residue matrix $\mathbf{R}_{(l)}$
can be expressed as

\begin{align}
\mathbf{R}_{(l)} & =\left(\mathbf{I}-\Phi_{[\hat{\mathcal{T}}_{l}]}\Phi_{[\hat{\mathcal{T}}_{l}]}^{\dagger}\right)\left(\Phi_{[\mathcal{T}]}\mathbf{X}^{[\mathcal{T}]}+\mathbf{N}\right)\nonumber \\
 & \triangleq\left[\begin{array}[t]{cc}
\Phi_{[\mathcal{T}\backslash\hat{\mathcal{T}}_{l}]} & \Phi_{[\hat{\mathcal{T}}_{l}]}\end{array}\right]\tilde{\mathbf{X}}+\left(\mathbf{I}-\Phi_{[\hat{\mathcal{T}}_{l}]}\Phi_{[\hat{\mathcal{T}}_{l}]}^{\dagger}\right)\mathbf{N}\label{eq:equation_R}
\end{align}
where {\small{}$\tilde{\mathbf{X}}=\left[\begin{array}[t]{c}
\begin{array}{c}
\mathbf{X}^{[\mathcal{T}\backslash\hat{\mathcal{T}}_{l}]}\\
\mathbf{X}_{\triangle}
\end{array}\end{array}\right]$ }and $\mathbf{X}_{\triangle}=\mathbf{X}^{[\hat{\mathcal{T}}_{l}]}-\Phi_{[\hat{\mathcal{T}}_{l}]}^{\dagger}\Phi_{[\mathcal{T}]}\mathbf{X}^{[\mathcal{T}]}$.
From the properties in Lemma \ref{Inequalities-Over-Block-RIP}, equation
(\ref{eq:equation_R}) and equation (\ref{eq:third_dim}) we obtain
\begin{equation}
\left\Vert \mathbf{R}_{(l)}\right\Vert _{F}\geq\sqrt{1-\delta_{s_{1}|d}}\left\Vert \tilde{\mathbf{X}}\right\Vert _{F}-\eta\label{eq:get1}
\end{equation}
We further have the following equation {\small{}
\begin{equation}
\left\Vert \mathbf{X}^{[\mathcal{T}\backslash\mathcal{T}_{a}]}\right\Vert _{F}\leq\frac{2\delta_{s_{2}|d}}{\left(1-\delta_{\bar{s}|d}\right)}\left\Vert \tilde{\mathbf{X}}\right\Vert _{F}+\frac{2\sqrt{1+\delta_{\bar{s}|d}}}{1-\delta_{\bar{s}|d}}\eta.\label{eq:property3}
\end{equation}
}Note that equation (\ref{eq:third_core}) will be proved by combining
equation (\ref{eq:get1}) with (\ref{eq:property3}). Therefore, we
only need to prove (\ref{eq:property3}) in the following. Since both
$\left(\mathcal{T}_{c}\bigcup\mathcal{T}_{b}\right)$ and $\mathcal{T}$
contain $s_{c}$ chunks in $\mathcal{T}_{0}$, from the selection
rule of \textbf{Step 2. A}, we have 
\begin{equation}
\left\Vert \Phi_{[\mathcal{T}_{c}\bigcup\mathcal{T}_{b}]}^{H}\mathbf{R}_{(l)}\right\Vert _{F}\geq\left\Vert \Phi_{[\mathcal{T}]}^{H}\mathbf{R}_{(l)}\right\Vert _{F}\label{eq:selection_A}
\end{equation}
which derives $\left\Vert \Phi_{[(\mathcal{T}_{c}\bigcup\mathcal{T}_{b})\backslash\mathcal{T}]}^{H}\mathbf{R}_{(l)}\right\Vert _{F}\geq\left\Vert \Phi_{[\mathcal{T}\backslash(\mathcal{T}_{c}\bigcup\mathcal{T}_{b})]}^{H}\mathbf{R}_{(l)}\right\Vert _{F}$.
From this and the fact that $\Phi_{[\hat{\mathcal{T}}_{l}]}^{H}\mathbf{R}_{(l)}=\mathbf{0}$,
we further obtain{\small{}
\begin{equation}
\left\Vert \Phi_{[(\mathcal{T}_{c}\bigcup\mathcal{T}_{b})\backslash(\mathcal{T}\bigcup\hat{\mathcal{T}}^{l})]}^{H}\mathbf{R}_{(l)}\right\Vert _{F}\geq\left\Vert \Phi_{[\mathcal{T}\backslash\mathcal{T}_{a}]}^{H}\mathbf{R}_{(l)}\right\Vert _{F}\label{eq:derive2}
\end{equation}
}The right hand side term in (\ref{eq:derive2}) is further bounded
by
\begin{align}
 & \left\Vert \Phi_{[(\mathcal{T}_{c}\bigcup\mathcal{T}_{b})\backslash(\mathcal{T}\bigcup\hat{\mathcal{T}}^{l})]}^{H}\mathbf{R}_{(l)}\right\Vert _{F}\nonumber \\
= & \left\Vert \Phi_{[(\mathcal{T}_{c}\bigcup\mathcal{T}_{b})\backslash(\mathcal{T}\bigcup\hat{\mathcal{T}}^{l})]}^{H}\times\vphantom{\left(\left[\begin{array}[t]{cc}
\Phi_{[\mathcal{T}\backslash\hat{\mathcal{T}}^{l}]} & \Phi_{[\hat{\mathcal{T}}^{l}]}\end{array}\right]\tilde{\mathbf{X}}+\left(\mathbf{I}-\Phi_{[\hat{\mathcal{T}}^{l}]}\Phi_{[\hat{\mathcal{T}}^{l}]}^{\dagger}\right)\mathbf{N}\right)}\right.\nonumber \\
 & \left.\left(\left[\begin{array}[t]{cc}
\Phi_{[\mathcal{T}\backslash\hat{\mathcal{T}}_{l}]} & \Phi_{[\hat{\mathcal{T}}_{l}]}\end{array}\right]\tilde{\mathbf{X}}+\left(\mathbf{I}-\Phi_{[\hat{\mathcal{T}}_{l}]}\Phi_{[\hat{\mathcal{T}}_{l}]}^{\dagger}\right)\mathbf{N}\right)\right\Vert _{F}\nonumber \\
\leq & \delta_{s_{2}|d}\left\Vert \tilde{\mathbf{X}}\right\Vert _{F}+\sqrt{1+\delta_{\bar{s}|d}}\eta\label{eq:second1}
\end{align}
The left hand side term in equation (\ref{eq:derive2}) is further
bounded by

{\small{}
\begin{align}
 & \left\Vert \Phi_{[\mathcal{T}\backslash\mathcal{T}_{a}]}^{H}\mathbf{R}_{(l)}\right\Vert _{F}\nonumber \\
= & \left\Vert \Phi_{[\mathcal{T}\backslash\mathcal{T}_{a}]}^{H}\times\vphantom{\left(\left[\begin{array}[t]{cc}
\Phi_{[\mathcal{T}\backslash\hat{\mathcal{T}}^{l}]} & \Phi_{[\hat{\mathcal{T}}^{l}]}\end{array}\right]\tilde{\mathbf{X}}+\left(\mathbf{I}-\Phi_{[\hat{\mathcal{T}}^{l}]}\Phi_{[\hat{\mathcal{T}}^{l}]}^{\dagger}\right)\mathbf{N}\right)}\right.\nonumber \\
= & \left.\left(\left[\begin{array}[t]{cc}
\Phi_{[\mathcal{T}\backslash\hat{\mathcal{T}}_{l}]} & \Phi_{[\hat{\mathcal{T}}_{l}]}\end{array}\right]\tilde{\mathbf{X}}+\left(\mathbf{I}-\Phi_{[\hat{\mathcal{T}}_{l}]}\Phi_{[\hat{\mathcal{T}}_{l}]}^{\dagger}\right)\mathbf{N}\right)\right\Vert _{F}\nonumber \\
\geq & \left\Vert \Phi_{[\mathcal{T}\backslash\mathcal{T}_{a}]}^{H}\left(\left[\begin{array}[t]{cc}
\Phi_{[\mathcal{T}\backslash\mathcal{T}_{a}]} & \Phi_{[\mathcal{T}_{a}]}\end{array}\right]\right)\left[\begin{array}[t]{c}
\begin{array}{c}
\mathbf{X}^{[\mathcal{T}\backslash\mathcal{T}_{a}]}\\
\mathbf{X}_{\triangle}^{'}
\end{array}\end{array}\right]\right\Vert _{F}\nonumber \\
 & -\sqrt{1+\delta_{\bar{s}|d}}\eta\label{eq:second2}
\end{align}
}where $\mathbf{X}_{\triangle}^{'}$ is obtained by rewritten $\left[\begin{array}[t]{cc}
\Phi_{[\mathcal{T}\backslash\hat{\mathcal{T}}_{l}]} & \Phi_{[\hat{\mathcal{T}}_{l}]}\end{array}\right]\tilde{\mathbf{X}}$ to be $\left(\left[\begin{array}[t]{cc}
\Phi_{[\mathcal{T}\backslash\mathcal{T}_{a}]} & \Phi_{[\mathcal{T}_{a}]}\end{array}\right]\right)\left[\begin{array}[t]{c}
\begin{array}{c}
\mathbf{X}^{[\mathcal{T}\backslash\mathcal{T}_{a}]}\\
\mathbf{X}_{\triangle}^{'}
\end{array}\end{array}\right]$. Note that this $\mathbf{X}_{\triangle}^{'}$ can always be found
because $\hat{\mathcal{T}}_{l}\subseteq\mathcal{T}_{a}$. Furthermore
$||\mathbf{X}_{\triangle}^{'}||_{F}\leq\left\Vert \tilde{\mathbf{X}}\right\Vert _{F}$.
Continuing the derivation in (\ref{eq:second2}), we obtain

{\small{}
\begin{align}
 & \left\Vert \Phi_{[\mathcal{T}\backslash\mathcal{T}_{a}]}^{H}\mathbf{R}_{(l)}\right\Vert _{F}\nonumber \\
\geq & \sigma_{\min}\left(\Phi_{[\mathcal{T}\backslash\mathcal{T}_{a}]}^{H}\Phi_{[\mathcal{T}\backslash\mathcal{T}_{a}]}\right)\left\Vert \mathbf{X}^{[\mathcal{T}\backslash\mathcal{T}_{a}]}\right\Vert _{F}\nonumber \\
 & -\sigma_{\textrm{max}}\left(\Phi_{[\mathcal{T}\backslash\mathcal{T}_{a}]}^{H}\Phi_{[\mathcal{T}_{a}]}\right)\left\Vert \mathbf{X}_{\triangle}^{'}\right\Vert _{F}-\sqrt{1+\delta_{\bar{s}|d}}\eta\nonumber \\
\geq & (1-\delta_{\bar{s}|d})\left\Vert \mathbf{X}^{[\mathcal{T}\backslash\mathcal{T}_{a}]}\right\Vert _{F}-\delta_{s_{2}|d}\left\Vert \tilde{\mathbf{X}}\right\Vert _{F}-\sqrt{1+\delta_{\bar{s}|d}}\eta.\label{eq:second3}
\end{align}
}{\small \par}

Combine the results in equation (\ref{eq:derive2}), (\ref{eq:second1})
and (\ref{eq:second3}), we obtain{\small{}
\begin{align*}
 & \delta_{s_{2}|d}\left\Vert \tilde{\mathbf{X}}\right\Vert _{F}+\sqrt{1+\delta_{\bar{s}|d}}\eta\geq\\
 & (1-\delta_{\bar{s}|d})\left\Vert \mathbf{X}^{[\mathcal{T}\backslash\mathcal{T}_{a}]}\right\Vert _{F}-\delta_{s_{2}|d}\left\Vert \tilde{\mathbf{X}}\right\Vert _{F}-\sqrt{1+\delta_{\bar{s}|d}}\eta
\end{align*}
} which further derives the desired equation (\ref{eq:third_core}).

\subsection{\label{sub:Proof-of-Theorem-Distortion_Bound}Proof of Theorem \ref{Distortion-BoundDenote-MSP}}

If Algorithm \ref{alg:Modified-SP} stops from the condition of $\left\Vert \mathbf{R}_{(l+1)}\right\Vert _{F}\leq\gamma$,
then the obtained solution $\mathbf{\hat{X}}=\hat{\mathbf{X}}_{(l+1)}$
and we obtain $\left\Vert \mathbf{X}-\hat{\mathbf{X}}\right\Vert _{F}\leq\frac{\gamma+\eta}{\sqrt{1-\delta_{s_{1}|d}}}$.
If Algorithm \ref{alg:Modified-SP} stops from the condition of $\left\Vert \mathbf{R}_{(l+1)}\right\Vert _{F}\geq\left\Vert \mathbf{R}_{(l)}\right\Vert _{F}$,
then obtained solution $\mathbf{\hat{X}}=\hat{\mathbf{X}}_{(l)}$.
From equation (\ref{eq:Iteration_Lemma}), we obtain {\small{}
\[
\left\Vert \mathbf{R}_{(l)}\right\Vert _{F}\leq\left\Vert \mathbf{R}_{(l+1)}\right\Vert _{F}\leq C_{1}\left\Vert \mathbf{R}_{(l)}\right\Vert _{F}+C_{2}\eta.
\]
}From $\delta_{s_{2}|d}<0.246$, we obtain $C_{1}<1$ and $\left\Vert \mathbf{R}_{(l)}\right\Vert _{F}\leq\frac{C_{2}\eta}{1-C_{1}}$.
Further from $\mathbf{R}_{(l)}\geq\sqrt{1-\delta_{s_{1}|d}}\left\Vert \mathbf{X}-\hat{\mathbf{X}}\right\Vert _{F}-\eta$,
we obtain $\left\Vert \mathbf{X}-\hat{\mathbf{X}}\right\Vert _{F}\leq C_{4}\eta$.
Hence equation (\ref{eq:Distortion_Bound}) is proved. Next, we prove
(\ref{eq:refined_bound}). Note that when {\small{}$\min_{k\in\mathcal{T}}\left\Vert \mathbf{X}[k]\right\Vert _{F}>\max\left(C_{4}\eta,\quad\frac{\gamma+\eta}{\sqrt{1-\delta_{s_{1}|d}}}\right)$},
the identified signal support $\mathcal{\hat{T}}$ must be correct,
i.e., $\mathcal{T}\subseteq\mathcal{\hat{T}}$. This can be proved
via the contradiction method (i.e., $\exists i\in\mathcal{T}$, $i\notin\mathcal{\hat{T}}$.
We obtain {\small{}$\left\Vert \mathbf{X}-\hat{\mathbf{X}}\right\Vert _{F}\geq\left\Vert \mathbf{X}[i]\right\Vert _{F}>\max\left(C_{4}\eta,\quad\frac{\gamma+\eta}{\sqrt{1-\delta_{s_{1}|d}}}\right)$},
which violates equation (\ref{eq:Distortion_Bound})). From $\mathcal{T}\subseteq\mathcal{\hat{T}}$,
we obtain 
\[
\mathbf{X}-\hat{\mathbf{X}}=\mathbf{X}_{[\mathcal{\hat{T}}]}-\Phi_{[\mathcal{\hat{T}}]}^{\dagger}\left(\Phi_{[\mathcal{\hat{T}}]}^{\dagger}\mathbf{X}+\mathbf{N}\right)=-\Phi_{[\mathcal{\hat{T}}]}^{\dagger}\mathbf{N}
\]
which further derives (\ref{eq:refined_bound}) from Lemma \ref{Inequalities-Over-Block-RIP}.

\subsection{\label{sub:Proof-of-Theorem-con}Proof of Theorem \ref{Convergence-SpeedSuppose-II}}

First, $\left\Vert \mathbf{R}_{(0)}\right\Vert _{F}\leq\sqrt{1+\delta_{\bar{s}|d}}\rho^{\frac{1}{2}}+\eta$.
Second, from (\ref{eq:iter_arrange}), after $n$ iterations in Step
2 of Algorithm \ref{alg:Modified-SP}, the following inequality hold:{\small{}
\begin{align}
\left\Vert \mathbf{R}_{(n)}\right\Vert _{F} & \leq\frac{C_{2}\eta}{1-C_{1}}+(C_{1})^{n}\left(\sqrt{1+\delta_{\bar{s}|d}}\rho^{\frac{1}{2}}+\eta-\frac{C_{2}\eta}{1-C_{1}}\right)\label{eq:iter_arrange2}
\end{align}
}From (\ref{eq:iter_arrange2}), when $\gamma>\frac{C_{2}\eta}{1-C_{1}}$
, $n=n_{co}$, we must obtain $\left\Vert \mathbf{R}_{(n)}\right\Vert _{F}\leq\gamma$
and hence Step 2 of Algorithm \ref{alg:Modified-SP} must have stopped
after $n_{co}$ iterations, where $n_{co}$ is as given in Theorem
\ref{Convergence-SpeedSuppose-II}).

\subsection{\label{sub:Proof-of-Theorem-conservative_MSP}Proof of Theorem \ref{Distortion-Bound-of-conservative_MSP}}

Note that for the conservative M-SP, we have (i) $\left|\mathcal{T}_{a}\right|\leq2\bar{s}+s_{c}$,
$\left|\mathcal{T}_{a}\bigcup\mathcal{T}\right|\leq s_{3}\triangleq3\bar{s}+\min\left(s_{c},\left|\mathcal{T}_{0}\right|-\left|\mathcal{T}_{0}\bigcap\mathcal{T}\right|\right)$,
$|\mathcal{T}\bigcup\hat{\mathcal{T}}_{l}|\leq2\bar{s}$; (ii) equation
(\ref{eq:selection_A}), (\ref{eq:revision_2_1}) for Step 2A and
Step 2C, respectively, will hold no matter whether the quality information
$s_{c}$ is correct or not. Following the proof of Appendix \ref{sub:Proof-of-key-lemma},
we would obtain the following iteration property for the conservative
M-SP,
\begin{equation}
\left\Vert \mathbf{R}_{(l+1)}\right\Vert _{F}\leq C_{5}\left\Vert \mathbf{R}_{(l)}\right\Vert _{F}+C_{6}\eta\label{eq:key_iteration_cons_MSP}
\end{equation}
where $C_{5}$, $C_{6}$ are modified correspondingly (compared with
their counterpart $C_{1}$, $C_{2}$) and are given in Table \ref{tab:The-detailed-expressions-constants-conservative}.
On the other hand, if Algorithm \ref{alg:Modified-SP} stops from
the condition of $\left\Vert \mathbf{R}_{(l+1)}\right\Vert _{F}\leq\gamma$,
then the obtained solution $\mathbf{\hat{X}}=\hat{\mathbf{X}}_{(l+1)}$
satisfies $\left\Vert \mathbf{X}-\hat{\mathbf{X}}\right\Vert _{F}\leq\frac{\gamma+\eta}{\sqrt{1-\delta_{2\bar{s}|d}}}$
similar to Appendix \ref{sub:Proof-of-Theorem-Distortion_Bound}.
If Algorithm \ref{alg:Modified-SP} stops from the condition of $\left\Vert \mathbf{R}_{(l+1)}\right\Vert _{F}\geq\left\Vert \mathbf{R}_{(l)}\right\Vert _{F}$,
from (\ref{eq:key_iteration_cons_MSP}), we obtain $\left\Vert \mathbf{R}_{(l)}\right\Vert _{F}\leq\frac{C_{6}\eta}{1-C_{5}}$.
Furthermore, from $\mathbf{R}_{(l)}\geq\sqrt{1-\delta_{2\bar{s}|d}}\left\Vert \mathbf{X}-\hat{\mathbf{X}}\right\Vert _{F}-\eta$,
we obtain $\left\Vert \mathbf{X}-\hat{\mathbf{X}}\right\Vert _{F}\leq C_{7}\eta$.
Subsequently, equation (\ref{eq:dist_bound-conservative}) in Theorem
\ref{Distortion-Bound-of-conservative_MSP} is proved. Based on (\ref{eq:dist_bound-conservative}),
equation (\ref{eq:refined_bound-conservative}) can be obtained similar
to (\ref{eq:refined_bound}).

\bibliographystyle{IEEEtran}
\bibliography{CSIT_REF}

% Generated by IEEEtran.bst, version: 1.13 (2008/09/30)
\begin{thebibliography}{10}
\providecommand{\url}[1]{#1}
\csname url@samestyle\endcsname
\providecommand{\newblock}{\relax}
\providecommand{\bibinfo}[2]{#2}
\providecommand{\BIBentrySTDinterwordspacing}{\spaceskip=0pt\relax}
\providecommand{\BIBentryALTinterwordstretchfactor}{4}
\providecommand{\BIBentryALTinterwordspacing}{\spaceskip=\fontdimen2\font plus
\BIBentryALTinterwordstretchfactor\fontdimen3\font minus
  \fontdimen4\font\relax}
\providecommand{\BIBforeignlanguage}[2]{{%
\expandafter\ifx\csname l@#1\endcsname\relax
\typeout{** WARNING: IEEEtran.bst: No hyphenation pattern has been}%
\typeout{** loaded for the language `#1'. Using the pattern for}%
\typeout{** the default language instead.}%
\else
\language=\csname l@#1\endcsname
\fi
#2}}
\providecommand{\BIBdecl}{\relax}
\BIBdecl

\bibitem{foucart2013mathematical}
S.~Foucart and H.~Rauhut, \emph{A mathematical introduction to compressive
  sensing}.\hskip 1em plus 0.5em minus 0.4em\relax Springer, 2013.

\bibitem{candes2005decoding}
E.~Candes and T.~Tao, ``Decoding by linear programming,'' \emph{IEEE Trans.
  Inf. Theory}, vol.~51, no.~12, pp. 4203--4215, 2005.

\bibitem{tropp2007signal}
J.~A. Tropp and A.~C. Gilbert, ``Signal recovery from random measurements via
  orthogonal matching pursuit,'' \emph{IEEE Trans. Inf. Theory}, vol.~53,
  no.~12, pp. 4655--4666, 2007.

\bibitem{blumensath2009iterative}
T.~Blumensath and M.~E. Davies, ``Iterative hard thresholding for compressed
  sensing,'' \emph{Applied and Computational Harmonic Analysis}, vol.~27,
  no.~3, pp. 265--274, 2009.

\bibitem{needell2009cosamp}
D.~Needell and J.~A. Tropp, ``{CoSaMP}: Iterative signal recovery from
  incomplete and inaccurate samples,'' \emph{Applied and Computational Harmonic
  Analysis}, vol.~26, no.~3, pp. 301--321, 2009.

\bibitem{dai2009subspace}
W.~Dai and O.~Milenkovic, ``Subspace pursuit for compressive sensing signal
  reconstruction,'' \emph{IEEE Trans. Inf. Theory}, vol.~55, no.~5, pp.
  2230--2249, 2009.

\bibitem{ziniel2013dynamic}
J.~Ziniel and P.~Schniter, ``Dynamic compressive sensing of time-varying
  signals via approximate message passing,'' \emph{IEEE Trans. Signal
  Process.}, vol.~61, no.~21, pp. 5270--5284, 2013.

\bibitem{vila2013expectation}
J.~P. Vila and P.~Schniter, ``Expectation-maximization gaussian-mixture
  approximate message passing,'' \emph{IEEE Trans. Signal Process.}, vol.~61,
  no.~19, pp. 4658--4672, 2013.

\bibitem{ziniel2013efficient}
J.~Ziniel and P.~Schniter, ``Efficient high-dimensional inference in the
  multiple measurement vector problem,'' \emph{IEEE Trans. Signal Process.},
  vol.~61, no.~2, pp. 340--354, 2013.

\bibitem{wakin2006compressive}
M.~Wakin, J.~Laska, M.~Duarte, D.~Baron, S.~Sarvotham, D.~Takhar, K.~Kelly, and
  R.~G. Baraniuk, ``Compressive imaging for video representation and coding,''
  in \emph{Picture Coding Symposium}, vol.~1, 2006.

\bibitem{friedlander2012recovering}
M.~P. Friedlander, H.~Mansour, R.~Saab, and {\"O}.~Yilmaz, ``Recovering
  compressively sampled signals using partial support information,'' \emph{IEEE
  Trans. Inf. Theory}, vol.~58, no.~2, pp. 1122--1134, 2012.

\bibitem{ren2012survey}
P.~Ren, Y.~Wang, Q.~Du, and J.~Xu, ``A survey on dynamic spectrum access
  protocols for distributed cognitive wireless networks.'' \emph{EURASIP J.
  Wireless Comm. and Networking}, vol. 2012, p.~60, 2012.

\bibitem{huang2009limited}
K.~Huang, R.~W. Heath, and J.~G. Andrews, ``Limited feedback beamforming over
  temporally-correlated channels,'' \emph{IEEE Trans. Signal Process.},
  vol.~57, no.~5, pp. 1959--1975, 2009.

\bibitem{bajwa2010compressed}
W.~U. Bajwa, J.~Haupt, A.~M. Sayeed, and R.~Nowak, ``Compressed channel
  sensing: A new approach to estimating sparse multipath channels,''
  \emph{Proceedings of the IEEE}, vol.~98, no.~6, pp. 1058--1076, 2010.

\bibitem{vaswani2010modified}
N.~Vaswani and W.~Lu, ``Modified-{CS}: Modifying compressive sensing for
  problems with partially known support,'' \emph{IEEE Trans. Signal Process.},
  vol.~58, no.~9, pp. 4595--4607, 2010.

\bibitem{jacques2010short}
L.~Jacques, ``A short note on compressed sensing with partially known signal
  support,'' \emph{Signal Processing}, vol.~90, no.~12, pp. 3308--3312, 2010.

\bibitem{herzet2012exact}
C.~Herzet, C.~Soussen, J.~Idier, and R.~Gribonval, ``Exact recovery conditions
  for sparse representations with partial support information,'' \emph{IEEE
  Trans. Inf. Theory}, vol.~59, no.~11, pp. 7509--7524, Nov 2013.

\bibitem{amel2014adaptive}
R.~Amel and A.~Feuer, ``Adaptive identification and recovery of jointly sparse
  vectors,'' \emph{IEEE Trans. Signal Process.}, vol.~62, no.~2, pp. 354--362,
  Jan 2014.

\bibitem{eldar2010block}
Y.~C. Eldar, P.~Kuppinger, and H.~Bolcskei, ``Block-sparse signals: Uncertainty
  relations and efficient recovery,'' \emph{IEEE Trans. Signal Process.},
  vol.~58, no.~6, pp. 3042--3054, 2010.

\bibitem{eldar2009robust}
Y.~C. Eldar and M.~Mishali, ``Robust recovery of signals from a structured
  union of subspaces,'' \emph{IEEE Trans. Inf. Theory}, vol.~55, no.~11, pp.
  5302--5316, 2009.

\bibitem{duarte2011structured}
M.~F. Duarte and Y.~C. Eldar, ``Structured compressed sensing: From theory to
  applications,'' \emph{IEEE Trans. Signal Process.}, vol.~59, no.~9, pp.
  4053--4085, 2011.

\bibitem{tropp2006algorithms}
J.~A. Tropp, A.~C. Gilbert, and M.~J. Strauss, ``Algorithms for simultaneous
  sparse approximation. part {I}: Greedy pursuit,'' \emph{Signal Processing},
  vol.~86, no.~3, pp. 572--588, 2006.

\bibitem{eldar2010average}
Y.~C. Eldar and H.~Rauhut, ``Average case analysis of multichannel sparse
  recovery using convex relaxation,'' \emph{IEEE Trans. Inf. Theory}, vol.~56,
  no.~1, pp. 505--519, 2010.

\bibitem{barbotin2011estimation}
Y.~Barbotin, A.~Hormati, S.~Rangan, and M.~Vetterli, ``Estimation of sparse
  {MIMO} channels with common support,'' \emph{IEEE Trans. Commun.}, vol.~60,
  no.~12, pp. 3705--3716, Dec. 2012.

\bibitem{qiu2009real}
C.~Qiu, W.~Lu, and N.~Vaswani, ``Real-time dynamic {MR} image reconstruction
  using kalman filtered compressed sensing,'' in \emph{Proc. IEEE Int. Conf.
  Acoustics, Speech, Signal Processing (ICASSP)}.\hskip 1em plus 0.5em minus
  0.4em\relax IEEE, 2009, pp. 393--396.

\bibitem{giryes2012rip}
R.~Giryes and M.~Elad, ``{RIP}-based near-oracle performance guarantees for
  {SP}, {CoSaMP}, and {IHT},'' \emph{IEEE Trans. Signal Process.}, vol.~60,
  no.~3, pp. 1465--1468, 2012.

\bibitem{chang2014improved}
L.-H. Chang and J.-Y. Wu, ``An improved {RIP}-based performance guarantee for
  sparse signal recovery via orthogonal matching pursuit,'' \emph{IEEE Trans.
  Inf. Theory}, vol.~60, no.~9, pp. 5702--5715, Sept 2014.

\bibitem{larsson2013massive}
\BIBentryALTinterwordspacing
E.~G. Larsson, F.~Tufvesson, O.~Edfors, and T.~L. Marzetta, ``Massive {MIMO}
  for next generation wireless systems,'' \emph{arXiv preprint
  arXiv:1304.6690}, 2013. [Online]. Available:
  \url{http://arxiv.org/abs/1304.6690}
\BIBentrySTDinterwordspacing

\bibitem{zhou2006experimental}
Y.~Zhou, M.~Herdin, A.~M. Sayeed, and E.~Bonek, ``Experimental study of {MIMO}
  channel statistics and capacity via the virtual channel representation,''
  Univ. Wisconsin-Madison, Madison, Tech. Rep., Feb 2007.

\bibitem{kyritsi2003correlation}
P.~Kyritsi, D.~C. Cox, R.~A. Valenzuela, and P.~W. Wolniansky, ``Correlation
  analysis based on {MIMO} channel measurements in an indoor environment,''
  \emph{IEEE J. Sel. Areas Commun.}, vol.~21, no.~5, pp. 713--720, 2003.

\bibitem{kuo2012compressive}
P.-H. Kuo, H.~Kung, and P.-A. Ting, ``Compressive sensing based channel
  feedback protocols for spatially-correlated massive antenna arrays,'' in
  \emph{Proc. IEEE Wireless Commun. Networking Conf. (WCNC)}.\hskip 1em plus
  0.5em minus 0.4em\relax IEEE, 2012, pp. 492--497.

\bibitem{rao2014distributed}
X.~Rao and V.~Lau, ``Distributed compressive {CSIT} estimation and feedback for
  {FDD} multi-user massive mimo systems,'' \emph{IEEE Trans. Signal Process.},
  vol.~62, no.~12, pp. 3261--3271, June 2014.

\bibitem{nguyen2013compressive}
S.~L.~H. Nguyen and A.~Ghrayeb, ``Compressive sensing-based channel estimation
  for massive multiuser {MIMO} systems,'' in \emph{Proc. IEEE Wireless Commun.
  Networking Conf. (WCNC)}.\hskip 1em plus 0.5em minus 0.4em\relax IEEE, 2013,
  pp. 2890--2895.

\bibitem{wen2014channel}
C.-K. Wen, S.~Jin, K.-K. Wong, J.-C. Chen, and P.~Ting, ``Channel estimation
  for massive {MIMO} using gaussian-mixture bayesian learning,'' \emph{IEEE
  Trans. Wireless Commun.}, vol.~14, no.~3, pp. 1356--1368, March 2015.

\bibitem{tse2005fundamentals}
D.~Tse and P.~Viswanath, \emph{Fundamentals of wireless communication}.\hskip
  1em plus 0.5em minus 0.4em\relax Cambridge Univ Pr, 2005.

\bibitem{3GPPchannel}
\BIBentryALTinterwordspacing
G.~T. 25.996, ``Universal mobile telecommunications system ({UMTS}); spacial
  channel model for multiple input multiple output (mimo) simulations,'' 3GPP
  ETSI Release 9, Tech. Rep., 2010. [Online]. Available:
  \url{http://www.3gpp.org/DynaReport/25996.htm}
\BIBentrySTDinterwordspacing

\bibitem{bottou2002stochastic}
L.~Bottou and N.~Murata, ``Stochastic approximations and efficient learning,''
  \emph{The Handbook of Brain Theory and Neural Networks, Second edition,. The
  MIT Press, Cambridge, MA}, 2002.

\bibitem{xie2014transmission}
R.~Xie and X.~Jia, ``Transmission-efficient clustering method for wireless
  sensor networks using compressive sensing,'' \emph{IEEE Trans. Parallel
  Distrib. Syst.}, vol.~25, no.~3, pp. 806--815, 2014.

\bibitem{vasanawala2009faster}
S.~Vasanawala, M.~Alley, R.~Barth, B.~Hargreaves, J.~Pauly, and M.~Lustig,
  ``Faster pediatric {MRI} via compressed sensing,'' in \emph{Proc. Annual
  Meeting Soc. Pediatric Radiology (SPR), Carlsbad, CA}, 2009.

\bibitem{klessling2003mimo}
M.~Klessling, J.~Speidel, and Y.~Chen, ``{MIMO} channel estimation in
  correlated fading environments,'' in \emph{Proc. IEEE Vehicular Technology
  Conf. (VTC)}, vol.~2, 2003, pp. 1187--1191.

\bibitem{sideratos2007advanced}
G.~Sideratos and N.~D. Hatziargyriou, ``An advanced statistical method for wind
  power forecasting,'' \emph{IEEE Trans. Power Systems}, vol.~22, no.~1, pp.
  258--265, 2007.

\end{thebibliography}

\end{document}